\documentclass[11pt]{article}
\usepackage{amsmath,amssymb,amsthm} 
\usepackage{bbm}
\usepackage{stmaryrd}
\usepackage{graphicx}
\usepackage{paralist}
\usepackage{hyperref}
\usepackage{color}

\setdefaultenum{(i)}{}{}{}
\usepackage[round]{natbib}
\usepackage[margin=1in]{geometry}

\newcommand\ve{\varepsilon}

\DeclareMathAlphabet\scr{U}{scr}{m}{n}
\SetMathAlphabet\scr{bold}{U}{scr}{b}{n}
  \DeclareFontFamily{U}{scr}{\skewchar\font'177}%
  \DeclareFontShape{U}{scr}{m}{n}{<-6>rsfs5<6-8>rsfs7<8->rsfs10}{}%
  \DeclareFontShape{U}{scr}{b}{n}{<-6>rsfs5<6-8>rsfs7<8->rsfs10}{}%

\newtheorem{theorem}{Theorem}[section]

\newtheorem{corollary}[theorem]{Corollary}

\newtheorem{definition}[theorem]{Definition}
\newtheorem{lemma}[theorem]{Lemma}
\newtheorem{proposition}[theorem]{Proposition}

\theoremstyle{definition}

\newcommand{\tilmu}{{\tilde\mu}}

\newcommand{\tilsigma}{{\tilde\sigma}}

\newcommand{\esp}[2][E]{#1\left[#2\right]}
\newcommand{\cale}{\mathcal E}

\newcommand{\eps}{\varepsilon}
\newcommand{\nada}[1]{}

\numberwithin{equation}{section}

{%
\end{oldthebibliography}%
}

\DeclareMathOperator{\lip}{LiPr}
\DeclareMathOperator{\sht}{ShTu}
\DeclareMathOperator{\wet}{WeTu}
\DeclareMathOperator{\cer}{ESR}

\begin{document}

\title{\vspace{-2.5cm}Transaction Costs, Trading Volume, \\and the Liquidity Premium\footnote{For helpful comments, we thank Maxim Bichuch, George Constantinides, Ale{\v{s}} {\v{C}}ern{\'y}, Mark Davis, Ioannis Karatzas, Ren Liu, Marcel Nutz, Scott Robertson, Johannes Ruf, Mihai Sirbu, Mete Soner, Gordan Zitkovi\'c, and seminar participants at Ascona, MFO Oberwolfach, Columbia University, Princeton University, University of Oxford, CAU Kiel, London School of Economics, University of Michigan, TU Vienna, and the ICIAM meeting in Vancouver. We are also very grateful to two anonymous referees for numerous -- and amazingly detailed -- remarks and suggestions.}}

\author{Stefan Gerhold
\thanks{Technische Universit\"at Wien, Institut f\"ur Wirtschaftsmathematik, Wiedner Hauptstrasse 8-10, A-1040 Wien, Austria, 
email \texttt{sgerhold@fam.tuwien.ac.at}. Partially supported by the Austrian Federal Financing Agency (FWF) and the Christian-Doppler-Gesellschaft (CDG).}
\and
Paolo Guasoni
\thanks{Boston University, Department of Mathematics and Statistics, 111 Cummington Street, Boston, MA 02215, USA, and Dublin City University, School of Mathematical Sciences, Glasnevin, Dublin 9, Ireland, email \texttt{guasoni@bu.edu}.
Partially supported by the ERC (278295), NSF (DMS-0807994, DMS-1109047), SFI (07/MI/008, 07/SK/M1189, 08/SRC/FMC1389), and FP7 (RG-248896).}
\and
Johannes Muhle-Karbe
\thanks{Corresponding author. ETH Z\"urich, Departement Mathematik, R\"amistrasse 101, CH-8092, Z\"urich, Switzerland, and Swiss Finance Institute, email \texttt{johannes.muhle-karbe@math.ethz.ch}. Partially supported by the National Centre of Competence in Research ``Financial Valuation and Risk Management'' (NCCR FINRISK), Project D1 (Mathematical Methods in Financial Risk Management), of the Swiss National Science Foundation (SNF).}
\and
Walter Schachermayer
\thanks{Universit\"at Wien, Fakult\"at f\"ur Mathematik, Nordbergstrasse 15, A-1090 Wien, Austria, email \texttt{walter.schachermayer@univie.ac.at}. Partially supported by the Austrian Science Fund (FWF) under grant P19456, the European Research Council (ERC) under grant FA506041, the Vienna Science and Technology Fund (WWTF) under grant MA09-003, and by the Christian-Doppler-Gesellschaft (CDG).}
}

\maketitle 

\vspace{-1cm}
\begin{abstract}
In a market with one safe and one risky asset, an investor with a long horizon, constant investment opportunities, and constant relative risk aversion trades with small proportional transaction costs. 
We derive explicit formulas for the optimal investment policy, its implied  welfare, liquidity premium, and trading volume. At the first order, the liquidity premium equals the spread, times share turnover, times a universal constant. Results are robust to consumption and finite horizons. 
We exploit the equivalence of the transaction cost market to another frictionless market, with a shadow risky asset, in which investment opportunities are stochastic. The shadow price is also found explicitly.
\end{abstract}

\noindent\textbf{Mathematics Subject Classification: (2010)} 91G10, 91G80.

\noindent\textbf{JEL Classification:} G11, G12.

\noindent\textbf{Keywords:} transaction costs, long-run, portfolio choice, liquidity premium, trading volume.

\section{Introduction}

If risk aversion and investment opportunities are constant --- and frictions are absent --- investors should hold a constant mix of safe and risky assets \citep*{markowitz.52,merton.69,MR0456373}. Transaction costs substantially  change this statement, casting some doubt on its far-reaching implications.\footnote{\citet*{constantinides.86} finds that ``transaction costs have a first-order effect on the assets' demand.''
\citet*{liu2002optimal} note that ``even small transaction costs lead to dramatic changes in the optimal behavior for an investor: from continuous trading to virtually buy-and-hold strategies.'' \citet*{luttmer.96} shows how  small transaction costs help resolve asset pricing puzzles.} 
Even the small spreads that are present in the most liquid markets entail wide oscillations in portfolio weights, which imply variable risk premia. 

This paper studies a tractable benchmark of portfolio choice under transaction costs, with constant investment opportunities, summarized by a safe rate $r$, and a risky asset with volatility $\sigma$ and expected excess return $\mu>0$, which trades at a bid (selling) price $(1-\ve)S_t$ equal to a constant fraction $(1-\eps)$ of the ask (buying) price $S_t$.
Our analysis is based on the model of \citet*{dumas.luciano.91}, which concentrates on long-run asymptotics to gain in tractability. In their framework, we find explicit solutions for the optimal policy, welfare, liquidity premium\footnote{ That is, the amount of excess return the investor is ready to forgo to trade the risky asset without transaction costs.} and trading volume, in terms of model parameters, and of an additional quantity, the \emph{gap}, identified as the solution to a scalar equation. For all these quantities, we derive closed-form asymptotics, in terms of model parameters only, for small transaction costs.

We uncover novel relations among the liquidity premium, trading volume, and transaction costs. First, we show that share turnover ($\sht$), the liquidity premium ($\lip$), and the bid-ask spread $\eps$ satisfy the following asymptotic relation:
\begin{equation*}
\lip \approx \frac34 \eps \sht.
\end{equation*}
This relation is universal, as it involves neither market nor preference parameters. Also, because it links the liquidity premium, which is unobservable, with spreads and share turnover, which are observable, this relation can help estimate the liquidity premium using data on trading volume. 

Second, we find that the liquidity premium behaves very differently in the presence of leverage. In the no-leverage regime, the liquidity premium is an order of magnitude smaller than the spread \citep{constantinides.86}, as unlevered investors respond to transaction costs by trading infrequently. With leverage, however, the liquidity premium increases quickly, because rebalancing a levered position entails high transaction costs, even under the optimal trading policy.

Third, we obtain the first continuous-time benchmark for trading volume, with explicit formulas for share and wealth turnover. 
Trading volume is an elusive quantity for frictionless models, in which turnover is typically infinite in any time interval.\footnote{The empirical literature has long been aware of this theoretical vacuum: \citet*{gallant1992stock} reckon that ``The intrinsic difficulties of specifying plausible, rigorous, and implementable models of volume and prices are the reasons for the informal modeling approaches commonly used.'' \citet*{lo2000trading}  note that ``although most models of asset markets have focused on the behavior of returns [...] their implications for trading volume have received far less attention.''} 
In the absence of leverage, our results imply low trading volume compared to the levels observed in the market. Of course, our model can only explain trading  generated by portfolio rebalancing, and not by other motives such as market timing, hedging, and life-cycle investing.

Moreover, welfare, the liquidity premium, and trading volume depend on the market parameters ($\mu,\sigma$) only through the mean-variance ratio $\mu/\sigma^2$ if measured in \emph{business time}, that is, using a clock that ticks at the speed of the market's variance $\sigma^2$. In usual calendar time, all these quantities are in turn multiplied by the variance $\sigma^2$. 

Our main implication for portfolio choice is that a symmetric, stationary policy is optimal for a long horizon, and it is robust, at the first order, both to intermediate consumption, and to a finite horizon. Indeed, we show that the no-trade region is perfectly symmetric with respect to the Merton proportion $\pi_*=\mu/\gamma\sigma^2$, \emph{if} trading boundaries are expressed with trading prices, that is, if the buy boundary $\pi_-$ is computed from the ask price, and the sell boundary $\pi_+$ from the bid price. 

Since in a frictionless market the optimal policy is independent both of intermediate consumption and of the horizon \citep{MR0456373}, our results entail that these two features are robust to small frictions. However plausible these conclusions may seem, the literature so far has offered diverse views on these issues (cf. \citet*{MR1080472,dumas.luciano.91,liu2002optimal}). 
More importantly, robustness to the horizon implies that the long-horizon approximation, made for the sake of tractability, is reasonable and relevant. For typical parameter values, we see that our optimal strategy is nearly optimal already for horizons as short as two years. 

A key idea for our results --- and for their proof --- is the equivalence between a market with transaction costs and constant investment opportunities, and another \emph{shadow} market, without transaction costs, but with stochastic investment opportunities driven by a state variable. This state variable is the ratio between the investor's risky and safe weights, which tracks the location of the portfolio within the trading boundaries, and affects both the volatility and the expected return of the shadow risky asset. 

In this paper, using a shadow price has two related advantages over alternative methods: first, it allows us to tackle the issue of verification with duality methods developed for frictionless markets. These duality methods in turn yield the finite-horizon bounds in Theorem \ref{th:finhor} below, which measure the performance of long-run policies over a given horizon -- an issue that is especially important when an asymptotic objective funcion is used.
The shadow price method was applied successfully by \citet*{MR2676941,gerhold.al.10b, gerhold.al.10a} for logarithmic utility, and this paper brings this approach to power utility, which allows to understand how optimal policies, welfare, liquidity premia and trading volume depend on risk aversion. The recent papers of \citet*{prokaj.12,choi.al.12} consider power utility from consumption in an infinite horizon.

The paper is organized as follows: Section 2 introduces the portfolio choice problem and states the main results. The model's main implications are discussed in Section 3, and the main results are derived heuristically in Section 4. Section 5 concludes, and all proofs are in the appendix. 

\section{Model and Main Result}

Consider a market with a safe asset earning an interest rate $r$, i.e. $S^0_t=e^{rt}$, and a risky asset, trading at ask (buying) price $S_t$ following geometric Brownian motion,
\begin{equation*}
dS_t/S_t=(\mu+r) dt+\sigma dW_t.
\end{equation*}
Here, $W_t$ is a standard Brownian motion, $\mu>0$ is the expected excess return,\footnote{A negative excess return leads to a similar treatment, but entails buying as prices rise, rather than fall. For the sake of clarity, the rest of the paper concentrates on the more relevant case of a positive $\mu$.} and $\sigma>0$ is the {volatility}. The corresponding bid (selling) price is $(1-\ve)S_t$, where $\ve \in (0,1)$ represents the relative bid-ask spread.

A self-financing \emph{trading strategy} is a two-dimensional, predictable process $(\varphi^0_t,\varphi_t)$ of finite variation, such that $\varphi^0_t$ and $\varphi_t$ represent the number of units in the safe and risky asset at time $t$, and the initial number of units is $(\varphi^0_{0^-},\varphi_{0^-})=(\xi^0,\xi) \in \mathbb{R}^2_+\backslash \{0,0\}$. Writing $\varphi_t = \varphi^{\uparrow}_t-\varphi^{\downarrow}_t$ as the difference between the cumulative number of shares bought ($\varphi^{\uparrow}_t$) and sold ($\varphi^{\downarrow}_t$) by time $t$, the \emph{self-financing condition} relates the dynamics of $\varphi^0$ and $\varphi$ via
\begin{equation}\label{eq:selffinancing}
d\varphi^0_t =  -\frac{S_t}{S^0_t} d\varphi_t^{\uparrow}+ (1-\ve)\frac{S_t}{S^0_t} d\varphi^{\downarrow}_t .
\end{equation}
As in \citet*{dumas.luciano.91}, the investor maximizes the equivalent safe rate of power utility, an optimization objective that also proved useful with constraints on leverage \citep*{grossman1992optimal} and drawdowns \citep*{grossman1993optimal}.

\begin{definition}\label{def:admiss}
A trading strategy $(\varphi_t^0,\varphi_t)$ is \emph{admissible} if its liquidation value is positive, in that:
\begin{equation*}
\Xi^\varphi_t=\varphi^0_t S^0_t+(1-\ve)S_t\varphi_t^+ -\varphi_t^- S_t\ge 0, \qquad\text{a.s. for all }t\ge 0.
\end{equation*}
An admissible strategy $(\varphi_t^0,\varphi_t)$ is \emph{long-run optimal} if it maximizes the \emph{equivalent safe rate}
 \begin{equation}\label{eq:longrun}
\liminf_{T \to \infty} \frac{1}{T}\log
E\left[(\Xi^\varphi_T)^{1-\gamma}\right]^{\frac1{1-\gamma}}
 \end{equation}
over all admissible strategies, where $1\ne \gamma>0$ denotes the investor's relative risk aversion.\footnote{The limiting case $\gamma\rightarrow 1$ corresponds to logarithmic utility, studied by \citet*{MR942619}, \citet*{akian2001dynamic}, and \citet*{gerhold.al.10a}. Theorem \ref{th:main} remains valid for logarithmic utility setting $\gamma=1$.}
 \end{definition}

Our main result is the following:
\begin{theorem}\label{th:main}
An investor with constant relative risk aversion $\gamma>0$ trades to maximize \eqref{eq:longrun}. Then, for small transaction costs $\ve>0$:
\begin{enumerate}[i)]
\item 
\emph{(Equivalent Safe Rate)}\\
For the investor, trading the risky asset with transaction costs is equivalent to leaving all wealth in a hypothetical safe asset, which pays the higher  \emph{equivalent safe rate}:
\begin{equation}
\cer=r+\frac{\mu^2-\lambda^2}{2\gamma\sigma^2}, 
\end{equation}
where the \emph{gap} $\lambda$ is defined in $iv)$ below.
\item
\emph{(Liquidity Premium)}\\
Trading the risky asset with transaction costs is equivalent to trading a hypothetical asset, at no transaction costs, with the same volatility $\sigma$, but with lower expected excess return $\sqrt{\mu^2-\lambda^2}$. Thus, the \emph{liquidity premium} is
\begin{equation}
\lip = \mu-\sqrt{\mu^2-\lambda^2}.
\end{equation}

\item 
\emph{(Trading Policy)}\\
It is optimal to keep the fraction of wealth held in the risky asset within the buy and sell boundaries
\begin{equation}\label{portfolios}
\pi_-=\frac{\mu-\lambda}{\gamma \sigma^2},
\qquad
\pi_+=\frac{\mu+\lambda}{\gamma \sigma^2},
\end{equation}
where the risky weights $\pi_-$ and $\pi_+$ are computed with ask and bid prices, respectively.\footnote{This optimal policy is not necessarily unique, in that its long-run performance is also attained by trading arbitrarily for a finite time, and then switching to the above policy. However, in related frictionless models, as the horizon increases, the optimal (finite-horizon) policy converges to a stationary policy, such as the one considered here (see, e.g., \citet*{dybvig.al.99}). \citet*{dai.yi.09} obtain similar results in a model with proportional transaction costs, formally passing to a stationary version of their control problem PDE.}

\item 
\emph{(Gap)}\\
For $\mu/\gamma\sigma^2 \neq 1$,  the constant $\lambda \geq 0$ is the unique value for which the solution of the initial value problem
\begin{align*}
& w'(x)+(1-\gamma)w(x)^2+\left(\frac{2\mu}{\sigma^2}-1\right)w(x)-
\gamma \left(\frac{\mu-\lambda}{\gamma \sigma^2}\right)
\left(\frac{\mu+\lambda}{\gamma \sigma^2}\right)
= 0,
\\
& w(0) = \frac{\mu-\lambda}{\gamma \sigma^2}
\end{align*}
also satisfies the terminal value condition:
\begin{equation*}
w\left(\log \left(\frac{u(\lambda)}{l(\lambda)}\right)\right) = \frac{\mu+\lambda}{\gamma \sigma^2},
\qquad\text{where}\qquad
\frac{u(\lambda)}{l(\lambda)} = \frac{1}{(1-\eps)}\frac{(\mu+\lambda)(\mu-\lambda-\gamma \sigma^2)}
{(\mu-\lambda)(\mu+\lambda-\gamma \sigma^2)}.
\end{equation*}
In view of the explicit formula for $w(x,\lambda)$ in Lemma~\ref{lem:riccati} below, this is a scalar equation for $\lambda$. For $\mu/\gamma\sigma^2=1$, the gap $\lambda$ vanishes.

\item
\emph{(Trading Volume)}\\
Let $\mu \neq \sigma^2/2$.\footnote{The corresponding formulas for $\mu=\sigma^2/2$ are similar but simpler, compare Corollary \ref{cor:trading} and Lemma \ref{lem:locallimits}.} Then \emph{share turnover}, defined as shares traded $d\|\varphi\|_t=d\varphi^\uparrow_t+d\varphi^\downarrow_t$ divided by shares held $|\varphi_t|$, has the long-term average
\begin{align*}
\sht =
\lim_{T\rightarrow\infty}\frac1T\int_0^T \frac{d\|\varphi\|_t}{|\varphi_t|}=
\frac{\sigma^2}{2}\left(\frac{2\mu}{\sigma^2}-1\right)
\left(
\frac{1-\pi_-}{(u(\lambda)/l(\lambda))^{\frac{2\mu}{\sigma^2}-1}-1} -\frac{1-\pi_+}{(u(\lambda)/l(\lambda))^{1-\frac{2\mu}{\sigma^2}}-1}
\right).
\end{align*}
\emph{Wealth turnover}, defined as wealth traded divided by wealth held, has long term-average\footnote{The number of shares is written as the difference $\varphi_t=\varphi^{\uparrow}_t-\varphi^{\downarrow}_t$ of the cumulative shares bought (resp.\ sold), and wealth is evaluated at trading prices, i.e., at the bid price $(1-\ve)S_t$ when selling, and at the ask price $S_t$ when buying.} 
\begin{align*}
\wet &= \lim_{T\to \infty} \frac{1}{T}\left(\int_0^T \frac{(1-\ve)S_t d\varphi^{\downarrow}_t}{\varphi^0_tS^0_t+\varphi_t (1-\ve)S_t}+\int_0^T \frac{S_t d\varphi^{\uparrow}_t}{\varphi^0_tS^0_t+\varphi_t S_t}\right)\\
 &= 
\frac{\sigma^2}{2}
\left(\frac{2\mu}{\sigma^2}-1\right)
\left(
\frac{\pi_-\left(1-\pi_-\right)}{(u(\lambda)/l(\lambda))^{\frac{2\mu}{\sigma^2}-1}-1}
-
\frac{\pi_+\left(1-\pi_+\right)}{(u(\lambda)/l(\lambda))^{1-\frac{2\mu}{\sigma^2}}-1}
\right)
.
\end{align*}

\item
\emph{(Asymptotics)}\\
Setting $\pi_*=\mu/\gamma\sigma^2$, the following  expansions in terms of the bid-ask spread $\eps$ hold:\footnote{Algorithmic calculations can deliver terms of arbitrarily high order.}
\begin{align}
\label{eq:gapexp}
\lambda &= \gamma\sigma^2 
\left(\frac{3}{4\gamma} \pi_*^2\left(1-\pi_*\right)^2\right)^{1/3} \ve^{1/3}
+ O(\ve).
\\
\cer &= 
r+\frac{\mu^2}{2\gamma\sigma^2}-
\frac{\gamma\sigma^2}{2}
\left(\frac{3}{4\gamma} \pi_*^2\left(1-\pi_*\right)^2\right)^{2/3}\ve^{2/3} +O(\ve^{4/3}).
\\
\lip &= 
\frac{\mu}{2\pi_*^2}\left(\frac{3}{4\gamma} \pi_*^2\left(1-\pi_*\right)^2\right)^{2/3} \ve^{2/3}+O(\ve^{4/3}).\\
\pi_{\pm} &= \pi_* \pm 
\left(\frac{3}{4\gamma} \pi_*^2\left(1-\pi_*\right)^2\right)^{1/3} \ve^{1/3} +O(\ve).\\
\sht &= 
\frac{\sigma^2}2 
(1-\pi_*)^2\pi_* 
\left(\frac{3}{4\gamma} \pi_*^2\left(1-\pi_*\right)^2\right)^{-1/3}\ve^{-1/3}
  + O(\ve^{1/3}).
\\
\label{eq:wetexp}
\wet &= 
\frac{2\gamma\sigma^2}{3} \left(\frac{3}{4\gamma} \pi_*^2(1-\pi_*)^2\right)^{2/3} \ve^{-1/3}+O(\ve^{1/3}).
\end{align}

\end{enumerate}
\end{theorem}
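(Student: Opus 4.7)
My plan is to construct a \emph{shadow price} $\tilde S_t$ lying in the bid-ask interval $[(1-\eps)S_t,S_t]$ such that the candidate policy from part (iii) is optimal in the frictionless market with price $\tilde S$, and such that transactions only occur when $\tilde S_t$ coincides with the bid or the ask. Then any admissible strategy in the original market is dominated in liquidation value by the same strategy priced with $\tilde S$, while the candidate policy attains equality; hence it suffices to solve the frictionless problem on the shadow market. Concretely, I would parametrize $\tilde S$ through the state variable $X_t = \log(\varphi_t \tilde S_t / \varphi^0_t S^0_t)$, the log-ratio of risky to safe wealth, which under the candidate policy is a Brownian motion with drift reflected in $[0,\log(u(\lambda)/l(\lambda))]$. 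Declaring the risky weight in the shadow frictionless market to be $w(X_t)$ with $w$ solving the Riccati ODE in (iv) produces a shadow price whose drift and volatility are explicit functions of $X_t$, and the Merton optimality condition $\pi_{\text{shadow}} = \mu_{\text{shadow}}/\gamma\sigma_{\text{shadow}}^2$ translates precisely into that ODE. The boundary values $w(0)=\pi_-$ and $w(\log(u/l))=\pi_+$ plus the jump-free matching at the boundaries force the scalar equation for $\lambda$.

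\textbf{Verification via duality.} Having a candidate $(\tilde S, \hat\varphi)$, I would verify optimality using convex duality in the shadow frictionless market. The natural dual variable is the martingale density $Y_t = (\hat\Xi_t)^{-\gamma} / \E[(\hat\Xi_t)^{-\gamma}]$ deflated by the shadow state price; the long-run value is obtained from the ergodic average of the Merton value function along the reflected diffusion $X_t$, yielding the equivalent safe rate $r + (\mu^2-\lambda^2)/(2\gamma\sigma^2)$ of part (i). For an upper bound on any competing admissible strategy $\varphi$, I would use that $\varphi\cdot \tilde S + \varphi^0 S^0 \ge \Xi^\varphi$ by the bid-ask sandwich, together with the frictionless duality bound
\begin{equation*}
\E[(\Xi^\varphi_T)^{1-\gamma}]^{1/(1-\gamma)} \le C\, \E[Y_T^{(\gamma-1)/\gamma}]^{-\gamma/(1-\gamma)},
\end{equation*}
whose right-hand side grows at exactly the claimed rate. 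Part (ii) is then a tautology: solving $\mu_\text{eff}^2/(2\gamma\sigma^2) = (\mu^2-\lambda^2)/(2\gamma\sigma^2)$ gives $\lip=\mu-\sqrt{\mu^2-\lambda^2}$.

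\textbf{Trading volume.} For part (v), the process $X_t$ is a reflected Brownian motion with drift $\mu-\sigma^2/2$ on $[0,\log(u/l)]$ whose stationary density is the explicit exponential $\rho(x)\propto e^{(2\mu/\sigma^2 - 1)x}$. Share turnover and wealth turnover are ergodic functionals of the local times at the two boundaries; by the formula for boundary local times of reflected diffusions these become ratios involving $\rho$ evaluated at the endpoints, which after simplification yield the stated expressions. I would handle the degenerate case $\mu=\sigma^2/2$ by an elementary limit.

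\textbf{Asymptotics.} For part (vi) I would Taylor-expand the scalar equation for $\lambda$ around $\eps=0$. Writing $\lambda = c_1 \eps^{1/3} + O(\eps)$ and using the closed-form $w(x,\lambda)$ from Lemma~\ref{lem:riccati}, the terminal condition $w(\log(u/l))=(\mu+\lambda)/\gamma\sigma^2$ becomes, after matching the leading orders of both the argument $\log(u/l)=O(\eps^{1/3})$ and the Riccati solution, a cubic whose solution is $c_1=\gamma\sigma^2(\tfrac{3}{4\gamma}\pi_*^2(1-\pi_*)^2)^{1/3}$. Substituting this into the formulas of parts (i)--(v) and collecting orders produces the remaining expansions, with the wealth-turnover leading coefficient coming from balancing the $\eps^{-1/3}$ singularity of $(u/l)^{\pm(2\mu/\sigma^2-1)}-1$ against $\pi_\pm(1-\pi_\pm)$. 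The main obstacle I anticipate is the verification step: controlling candidates that leverage near the admissibility boundary, handling the different regimes $\pi_*<0$, $0<\pi_*<1$, $\pi_*>1$ (where the sign of $\mu-\lambda-\gamma\sigma^2$ and the orientation of $[\pi_-,\pi_+]$ change), and justifying the passage from the finite-horizon duality bound to the ergodic limit uniformly over competing strategies.
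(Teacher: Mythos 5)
Your plan follows the paper's own architecture quite closely: construct a shadow price via a state variable tracking the log risky/safe ratio, solve a Riccati free-boundary problem, verify optimality in the shadow market by convex duality, and then handle trading volume through local times of a reflected diffusion and asymptotics through a Taylor expansion of the scalar equation for $\lambda$. So the overall route is right.

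There is, however, a genuine conceptual slip that would derail a careful writeup. You claim that declaring the risky weight to be $w(X_t)$ makes ``the Merton optimality condition $\pi_{\text{shadow}}=\mu_{\text{shadow}}/\gamma\sigma_{\text{shadow}}^2$ translate precisely into that ODE.'' That is false: the shadow market has \emph{stochastic} investment opportunities (both $\tilde\mu$ and $\tilde\sigma$ are functions of $\Upsilon$, and $\Upsilon$ is driven by the same Brownian motion as $\tilde S$), so the optimal policy contains a nontrivial intertemporal hedging term. A short calculation from the shadow dynamics in Lemma~\ref{lem:dynamics} shows that the myopic weight $\tilde\mu(\Upsilon)/\bigl(\gamma\tilde\sigma(\Upsilon)^2\bigr)$ equals $w(\Upsilon)$ only where $w'=w(1-w)$ --- precisely the condition $\tilde w=0$ from the proof of Theorem~\ref{th:finhor}, which holds at the two reflecting boundaries and nowhere in the interior. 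The Riccati ODE for $w$ is therefore \emph{not} the myopic Merton condition; it encodes the full HJB in the shadow market (myopic plus hedging demand) or, equivalently, the free-boundary system \eqref{hjbred}--\eqref{smoothsell} in the transaction-cost market. The paper sidesteps this by first solving the free-boundary problem and then \emph{defining} the shadow price from the value function via the marginal rate of substitution $\tilde S = (V_y/V_x)\,S$; verification in Lemmas~\ref{lemfinite} and~\ref{lem:ergodic} then proceeds by a direct duality computation, never invoking myopic optimality. Your writeup would need to do the same, or else write out the full stochastic-opportunity-set HJB with the hedging term.

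A second, smaller inconsistency: you define the state variable as $X_t=\log(\varphi_t\tilde S_t/\varphi^0_t S^0_t)$ but simultaneously assert it is a reflected Brownian motion with drift. Since $\tilde S_t$ has state-dependent drift and volatility, that log-ratio is not a Brownian motion. The paper uses the ask price $S_t$ (a geometric Brownian motion) in the state variable, $\Upsilon_t=\log\bigl(\varphi_t S_t/(l\,\varphi^0_t S^0_t)\bigr)$, which genuinely follows a Brownian motion with drift $\mu-\sigma^2/2$ and volatility $\sigma$ in the no-trade region, reflected at the boundaries. This distinction matters for the ergodic computations in part (v).
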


In summary, our optimal trading policy, and its resulting welfare, liquidity premium, and trading volume are all simple functions of investment opportunities ($r$, $\mu$, $\sigma$), preferences ($\gamma$), and the gap $\lambda$. The gap does not admit an explicit formula in terms of the transaction cost parameter $\ve$, but is determined through the implicit relation in $iv)$, and has the asymptotic expansion in $vi)$, from which all other asymptotic expansions follow through the explicit formulas.

The frictionless markets with constant investment opportunities in items $i)$ and $ii)$ of Theorem~\ref{th:main} are equivalent to the market with transaction costs in terms of equivalent safe rates. Nevertheless, the corresponding optimal policies are very different, requiring no or incessant rebalancing in the frictionless markets of $i)$ and $ii)$, respectively, whereas there is finite positive trading volume in the market with transaction costs. 

By contrast, the shadow price, which is key in the derivation of our results, is a fictitious risky asset, with price evolving within the bid-ask spread, for which the corresponding frictionless market is equivalent to the transaction cost market in terms of both welfare \emph{and} the optimal policy:

\begin{theorem}\label{thm:shadow}
The policy in Theorem \ref{th:main} $iii)$ and the equivalent safe rate in Theorem \ref{th:main} $i)$ are also optimal for a frictionless asset with \emph{shadow price} $\tilde{S}$, which always lies within the bid-ask spread, and coincides with the trading price at times of trading for the optimal policy. The shadow price satisfies
\begin{equation}
d\tilde S_t/\tilde S_t = (\tilde{\mu}(\Upsilon_t)+r) dt + \tilde{\sigma}(\Upsilon_t) dW_t,
\end{equation}
for the deterministic functions $\tilde{\mu}(\cdot)$ and $\tilde{\sigma}(\cdot)$ given explicitly in Lemma \ref{lem:dynamics}. The state variable $\Upsilon_t=\log(\varphi_t S_t/(l(\lambda)\varphi^0_t S^0_t))$ represents the logarithm of the ratio of risky and safe positions, which follows a Brownian motion with drift, reflected to remain in the interval $[0,\log(u(\lambda)/l(\lambda))]$, i.e.,
\begin{equation}
d\Upsilon_t=(\mu-\sigma^2/2)dt+\sigma dW_t +dL_t-dU_t.
\end{equation}
\end{theorem}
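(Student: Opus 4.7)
The plan is to build the shadow price from the outside in: first pin down the dynamics of the state variable $\Upsilon_t$ under the candidate optimal policy from Theorem \ref{th:main}(iii), then define $\tilde{S}_t$ as an explicit function of $\Upsilon_t$ and $S_t$ whose dynamics force the frictionless Merton problem to reproduce the transaction-cost policy, and finally invoke a verification argument in the shadow market.

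First I would argue the reflected SDE. Under the policy of item (iii), there is no trading in the interior $\varphi^{\uparrow} = \varphi^{\downarrow} = 0$, so $\varphi_t S_t/(\varphi^0_t S^0_t)$ is a geometric Brownian motion, and taking logs gives $d\Upsilon_t = (\mu - \sigma^2/2)\,dt + \sigma\,dW_t$. The boundaries $\{0, \log(u/l)\}$ correspond exactly to the risky fractions $\pi_-$ and $\pi_+$ at ask, respectively bid, prices; the buying and selling minimally needed to keep these ratios inside $[l(\lambda),u(\lambda)]$ contributes the nondecreasing processes $L$ (from buys at $\Upsilon=0$) and $U$ (from sells at $\Upsilon=\log(u/l)$). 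This yields the last SDE in the statement, and $\Upsilon$ is a reflected drifted Brownian motion on $[0,\log(u/l)]$.

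Next I would construct $\tilde{S}_t = g(\Upsilon_t)\,S_t$, where $g\colon [0,\log(u/l)]\to[1-\eps,1]$ is smooth, $g(0)=1$, $g(\log(u/l))=1-\eps$, and crucially $g'(0)=g'(\log(u/l))=0$ so that upon applying It\^o's formula to $\log \tilde{S}_t$ the local-time contributions at both reflection points vanish, leaving a continuous diffusion with drift $\tilde{\mu}(\Upsilon_t)+r$ and volatility $\tilde{\sigma}(\Upsilon_t)$ expressible in terms of $g$, $g'$, $g''$. The key point is to choose $g$ so that the log-derivative $w(x) := 1 + g'(x)/g(x)$ (or a closely related transformation) coincides with the Riccati solution $w$ of Theorem \ref{th:main}(iv). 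This is exactly what enforces two things simultaneously: (a) the Merton fraction $\tilde{\mu}(x)/(\gamma\tilde{\sigma}(x)^2)$ in the shadow market equals the current risky weight $\tilde{S}_t\varphi_t/(\tilde{S}_t\varphi_t+\varphi^0_t S^0_t)$ that the candidate policy actually maintains, for every $x\in[0,\log(u/l)]$; and (b) the boundary conditions $w(0)=\pi_-$ and $w(\log(u/l))=\pi_+$ encode the fact that trading takes place exactly when $\tilde{S}_t$ hits $S_t$ or $(1-\eps)S_t$.

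Finally I would close the argument by verification. Using the explicit $(\tilde{\mu},\tilde{\sigma})$ and the stationary law of $\Upsilon$, one constructs a candidate dual martingale $\tilde{Y}_t$ (essentially $\tilde{S}_t^{-\gamma}$ tilted by the equivalent safe rate) and checks the usual frictionless duality bounds: in the shadow market the admissible class is strictly larger than under transaction costs, yet the frictionless Merton strategy against $\tilde{S}$ happens to have finite variation and to transact only when $\tilde{S}_t\in\{S_t,(1-\eps)S_t\}$, so it is admissible with the original ask/bid prices and produces the same equivalent safe rate $r+(\mu^2-\lambda^2)/(2\gamma\sigma^2)$. Sandwiching the transaction-cost value between the two frictionless values (with $S$ and with $(1-\eps)S$) yields equality and the common optimizer. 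The main obstacle I anticipate is the bookkeeping in step two: showing that the Riccati equation of Theorem \ref{th:main}(iv), together with the scalar equation defining $\lambda$, is precisely what makes both $g'=0$ at the endpoints \emph{and} the Merton fraction match the current portfolio proportion throughout $[0,\log(u/l)]$; once that identification is in hand, the It\^o computation and the verification step are standard.
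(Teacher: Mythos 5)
Your outline reproduces the paper's construction essentially verbatim: $\Upsilon_t$ as reflected drifted Brownian motion, $\tilde S_t = g(\Upsilon_t)S_t$ with $g(0)=1$, $g(\log(u/l))=1-\eps$, and $g'$ vanishing at both endpoints so that $\log\tilde S_t$ picks up no local time; the Riccati ODE for $w$ together with the scalar equation for $\lambda$ is precisely what forces the shadow Merton fraction to equal $w(\Upsilon_t)$ throughout the no-trade region; and optimality is closed by frictionless duality (this is Lemmas~\ref{lem:dynamics}, \ref{lemfinite}, \ref{lem:ergodic}, \ref{lem:strategy}, and Proposition~\ref{prop:shadow}). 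Two small corrections to the bookkeeping you flagged as the main obstacle: the exact relation tying $g$ to the Riccati solution is $g(x) = w(x)/\bigl(l\,e^{x}(1-w(x))\bigr)$ (so that $l e^x g(x) = w/(1-w)$, i.e.\ the shadow risky weight is $w$), not $w = 1 + g'/g$; and the verification is not a sandwich between frictionless markets with prices $S$ and $(1-\eps)S$ but a one-sided comparison via the single shadow price --- since $\tilde S\in[(1-\eps)S,S]$, every transaction-cost strategy is dominated by a shadow strategy, and the shadow-optimal policy trades only when $\tilde S$ touches the bid or ask, hence realizes the identical equivalent safe rate in the transaction-cost market, pinning the two values together.
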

Here, $L_t$ and $U_t$ are increasing processes, proportional to the cumulative purchases and sales, respectively (cf.\ \eqref{eq:philu} below).
In the interior of the no-trade region, that is, when $\Upsilon_t$ lies in $(0,\log (u(\lambda)/l(\lambda)) )$, the numbers of units of the safe and risky asset are constant, and the state variable $\Upsilon_t$ follows Brownian motion with drift. As $\Upsilon_t$ reaches the boundary of the no-trade region, buying or selling takes place as to keep it within $[0,\log (u(\lambda)/l(\lambda))]$.

In view of Theorem \ref{thm:shadow}, trading with constant investment opportunities and proportional transaction costs is equivalent to trading in a fictitious frictionless market with stochastic investment opportunities, which vary with the location of the investor's portfolio in the no-trade region.

\section{Implications}

\subsection{Trading Strategies}
Equation \eqref{portfolios} implies that trading boundaries are symmetric around the frictionless Merton proportion $\pi_*=\mu/\gamma\sigma^2$. At first glance, this seems to contradict previous studies (e.g., \citet*{liu2002optimal}, \citet*{MR1284980}), which emphasize how these boundaries are asymmetric, and may even fail to include the Merton proportion. These papers employ a common reference price (the average of the bid and ask prices) to evaluate both boundaries. By contrast, we express trading boundaries using trading prices (i.e., the ask price for the buy boundary, and the bid price for the sell boundary). This simple convention unveils the natural symmetry of the optimal policy, and explains asymmetries as figments of notation -- even in their models.
To see this, denote by $\pi_-'$ and $\pi_+'$ the buy and sell boundaries in terms of the ask price. These papers prove the bounds (\citet*[equations (11.4) and (11.6)]{MR1284980} in an infinite-horizon model with consumption and \citet*[equations (22), (23)]{liu2002optimal} in a finite-horizon model)
\begin{equation}
\pi_-'< \frac{\mu}{\gamma \sigma^2}
\quad\text{and}\quad
\frac{\mu}{\gamma \sigma^2(1-\varepsilon)+\varepsilon \mu} <
\pi_+' < \frac{\mu}{\frac12\gamma \sigma^2(1-\varepsilon)+\varepsilon \mu}\ .
\end{equation}
With trading prices (i.e., substituting $\pi_-=\pi_-'$ and $\pi_+=\frac{1-\varepsilon}{1-\varepsilon \pi_+'}\pi_+'$) these bounds become
\begin{equation}
\pi_-< \frac{\mu}{\gamma \sigma^2} <
\pi_+ < 2\frac{\mu}{\gamma \sigma^2}\ ,
\end{equation}
whence the Merton proportion always lies between $\pi_-$ and $\pi_+$.

To understand the robustness of our optimal policy to intermediate consumption, we compare our trading boundaries with those obtained by \citet*{MR1080472} and \cite{MR1284980} in the consumption model of \citet*{magill.constantinidis.76}. The asymptotic expansions of \citet*{MR2048827} make this comparison straightforward.

With or without consumption, the trading boundaries coincide at the first-order. This fact has a clear economic interpretation: the separation between consumption and investment, which holds in a frictionless model with constant investment opportunities, is a robust feature of frictionless models, because it still holds, \emph{at the first order}, even with transaction costs. Put differently, if investment opportunities are constant, consumption has only a second order effect for investment decisions, in spite of the large no-trade region implied by transaction costs. Figure \ref{fig:trading_bounds} shows that our bounds are very close to those obtained in the model of \citet*{MR1080472} for bid-ask spreads below 1\%, but start diverging for larger values. 

\begin{figure}
\includegraphics[height=1.9in]{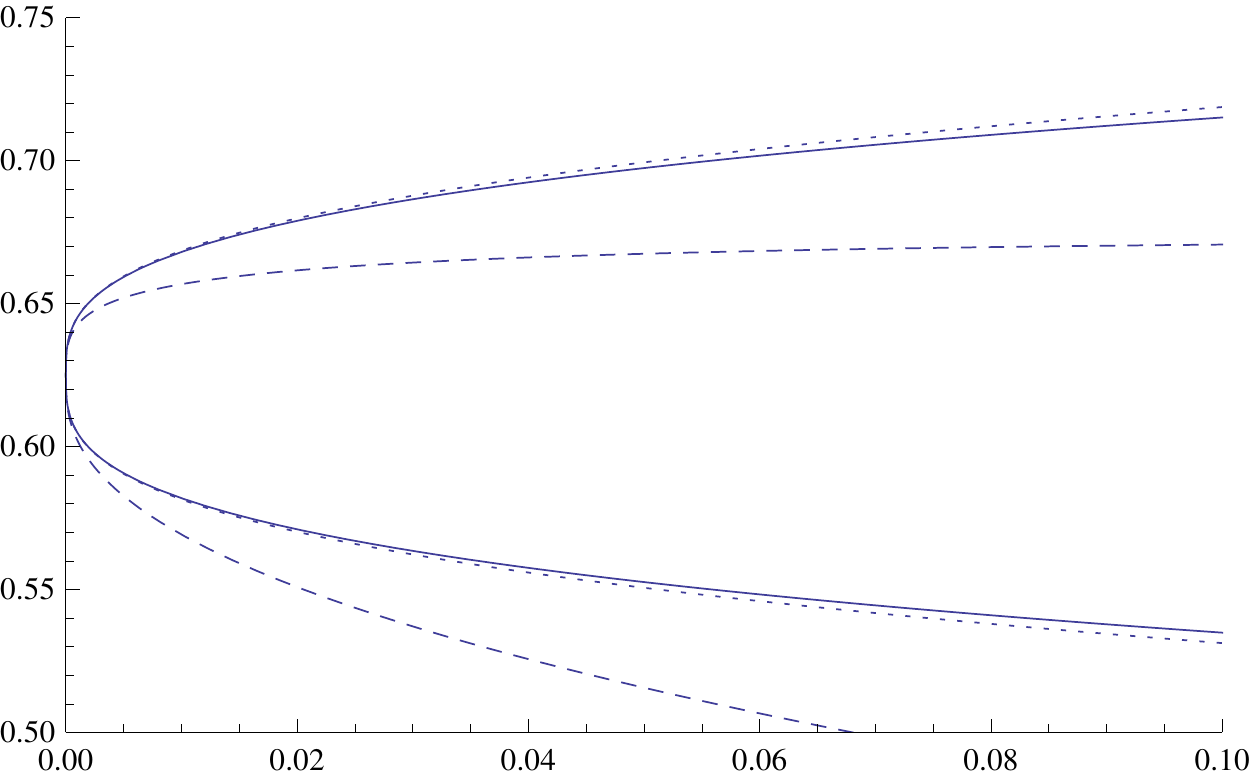}
\includegraphics[height=1.9in]{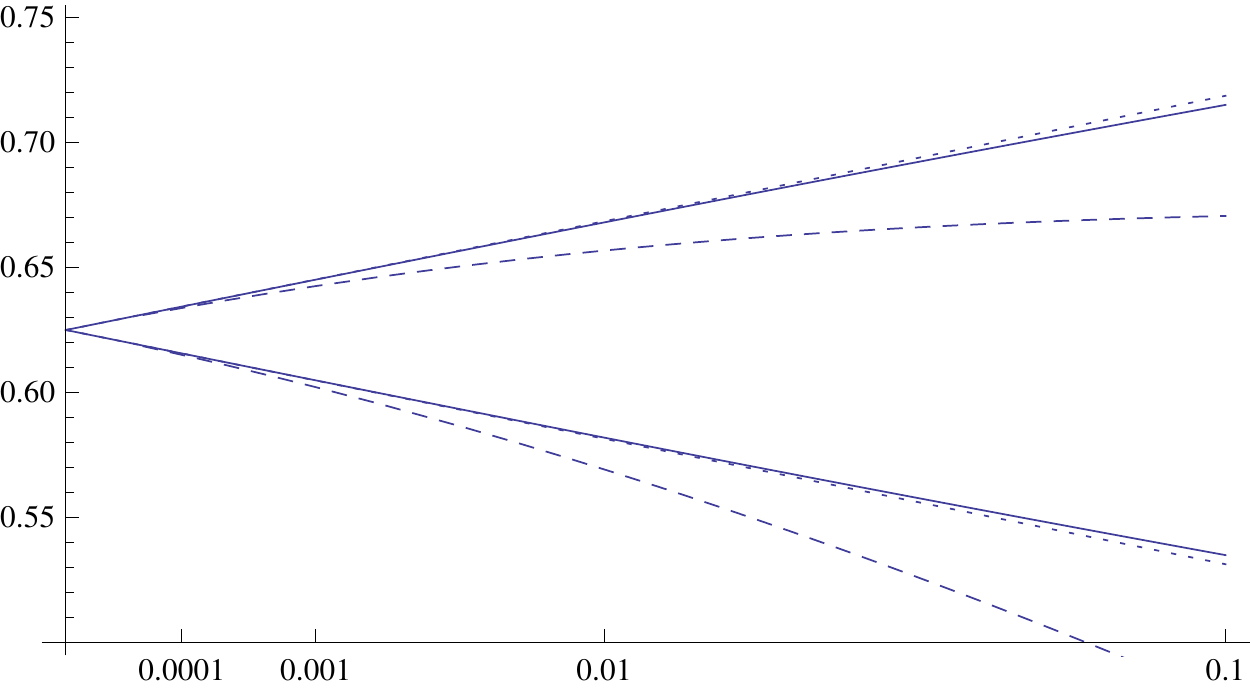}
\caption{\label{fig:trading_bounds}
Buy (lower) and sell (upper) boundaries (vertical axis, as risky weights) as functions of the spread $\varepsilon$, in linear scale (left panel) and cubic scale (right panel). 
The plot compares the approximate weights from the first term of the expansion (dotted), the exact optimal weights (solid), and the boundaries found by \citet*{MR1080472} in the presence of consumption (dashed). Parameters are $\mu=8\%, \sigma=16\%, \gamma=5$, and a zero discount rate for consumption (for the dashed curve).}
\end{figure}

\subsection{Business time and Mean-Variance Ratio}

In a frictionless market, the equivalent safe rate and the optimal policy are:
\[
\cer = r + \frac{1}{2\gamma} \left(\frac{\mu}{\sigma}\right)^2
\qquad\text{and}\qquad
\pi_* = \frac{\mu}{\gamma \sigma^2}.
\]
This rate depends only on the safe rate $r$ and the Sharpe ratio $\mu/\sigma$. Investors are indifferent between two markets with identical safe rates and Sharpe ratios, because both markets lead to the same set of payoffs, even though a payoff is generated by different portfolios in the two markets. By contrast, the optimal portfolio depends only on the mean-variance ratio $\mu/\sigma^2$.

With transaction costs, Equation \eqref{eq:gapexp} shows that the asymptotic expansion of the gap per unit of variance $\lambda/\sigma^2$ only depends on the mean-variance ratio $\mu/\sigma^2$. Put differently, holding the mean-variance ratio $\mu/\sigma^2$ constant, the expansion of $\lambda$ is linear in $\sigma^2$. In fact, not only the expansion but also the exact quantity has this property, since $\lambda/\sigma^2$ in $iv)$ only depends on $\mu/\sigma^2$. 

Consequently, the optimal policy in $iii)$ only depends on the mean-variance ratio $\mu/\sigma^2$, as in the frictionless case. The equivalent safe rate, however, no longer solely depends on the Sharpe ratio $\mu/\sigma$: investors are not indifferent between two markets with the same Sharpe ratio, because one market is more attractive than the other if it entails lower trading costs. As an extreme case, in one market it may be optimal lo leave all wealth in the risky asset, eliminating any need to trade. Instead, the formulas in $i)$, $ii)$, and $v)$ show that, like the gap per variance $\lambda/\sigma^2$, the equivalent safe rate, the liquidity premium, and both share and wealth turnover only depend on $\mu/\sigma^2$, when measured per unit of variance. The interpretation is that these quantities are proportional to business time $\sigma^2 t$ \citep*{ane2000order}, and the factor of $\sigma^2$ arises from measuring them in calendar time.

In the frictionless limit, the linearity in $\sigma^2$ and the dependence on $\mu/\sigma^2$ cancel, and the result depends on the Sharpe ratio alone. For example, the equivalent safe rate becomes\footnote{The other quantities are trivial: the gap and the liquidity premium become zero, while share and wealth turnover explode to infinity.}
\[
r+\frac{\sigma^2}{2\gamma} \left(\frac{\mu}{\sigma^2}\right)^2 =r+\frac{1}{2\gamma}\left(\frac{\mu}{\sigma}\right)^2 .
\]

\subsection{Liquidity Premium}

The liquidity premium \citep*{constantinides.86} is the amount of expected excess return the investor is ready to forgo to trade the risky asset without transaction costs, as to achieve the same equivalent safe rate. Figure \ref{fig:liquid_premium} plots the liquidity premium against the spread $\eps$ (left panel) and risk aversion $\gamma$ (right panel). 

\begin{figure}[t]
\includegraphics[height=2in]{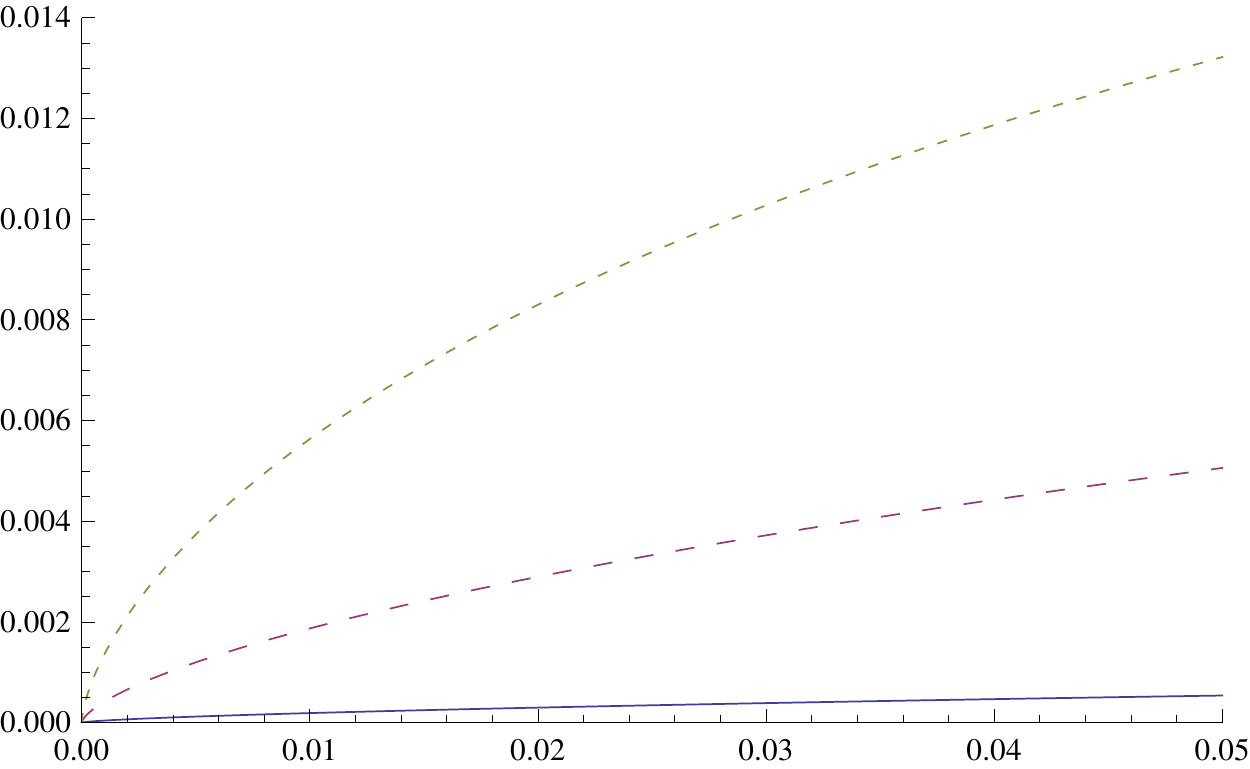}
\includegraphics[height=2in]{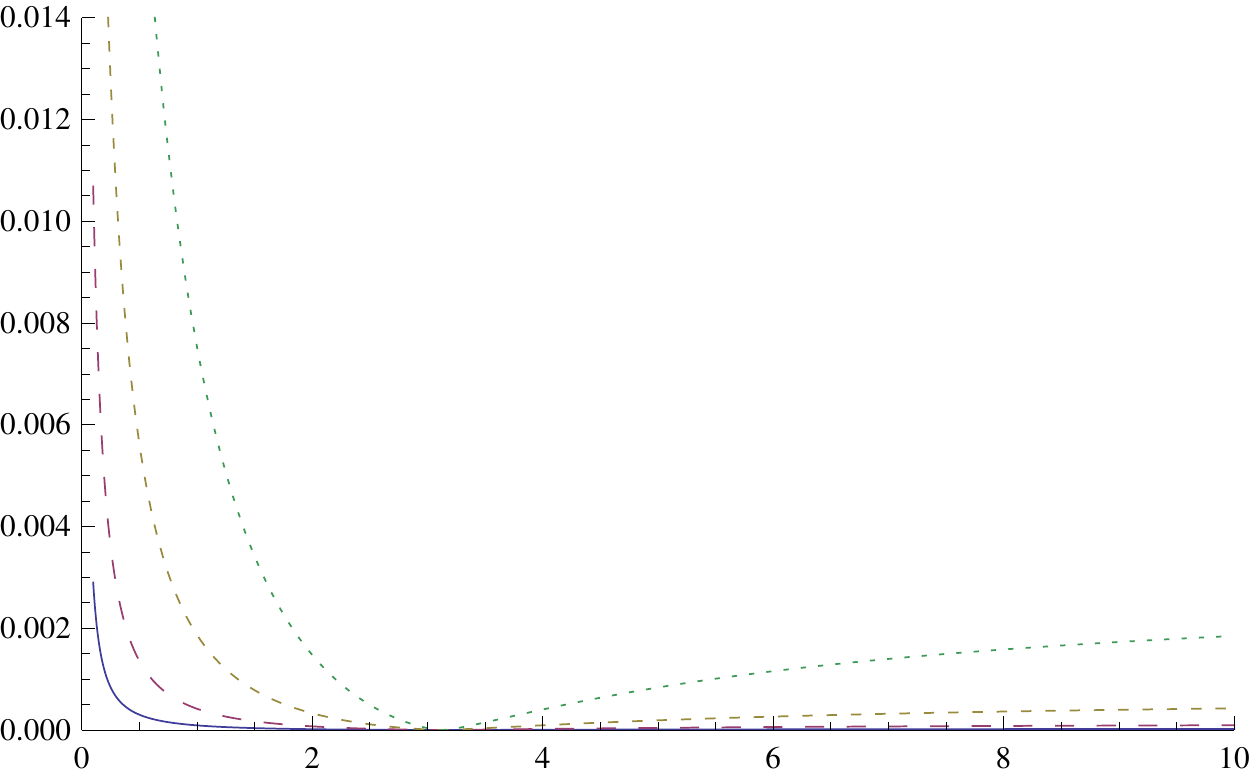}
\caption{\label{fig:liquid_premium}
Left panel: liquidity premium (vertical axis) against the spread $\eps$, for risk aversion $\gamma$ equal to $5$ (solid), $1$ (long dashed), and $0.5$ (short dashed). 
Right panel: liquidity premium (vertical axis) against risk aversion $\gamma$, for spread $\eps=0.01\%$ (solid), $0.1\%$ (long dashed), $1\%$ (short dashed), and $10\%$ (dotted). 
Parameters are $\mu=8\%$ and $\sigma=16\%$.}
\end{figure}

The liquidity premium is exactly zero when the Merton proportion $\pi_*$ is either zero or one. In these two limit cases, it is optimal not to trade at all, hence no compensation is required for the costs of trading.
The liquidity premium is relatively low in the regime of no leverage ($0<\pi_*<1$), corresponding to $\gamma>\mu/\sigma^2$, confirming the results of \citet*{constantinides.86}, who reports liquidity premia one order of magnitude smaller than trading costs.

The leverage regime ($\gamma<\mu/\sigma^2$), however, shows a very different picture. As risk aversion decreases below the full-investment level $\gamma=\mu/\sigma^2$, the liquidity premium increases rapidly towards the expected excess return $\mu$, as lower levels of risk aversion prescribe increasingly high leverage. The costs of rebalancing a levered position are high, and so are the corresponding liquidity premia.

The liquidity premium increases in spite of the increasing width of the no-trade region for larger leverage ratios. 
In other words, even as a less risk averse investor tolerates wider oscillations in the risky weight, this increased flexibility is not enough to compensate for the higher costs required to rebalance a more volatile portfolio.

\subsection{Trading Volume}

\begin{figure}[t]
\includegraphics[height=2in]{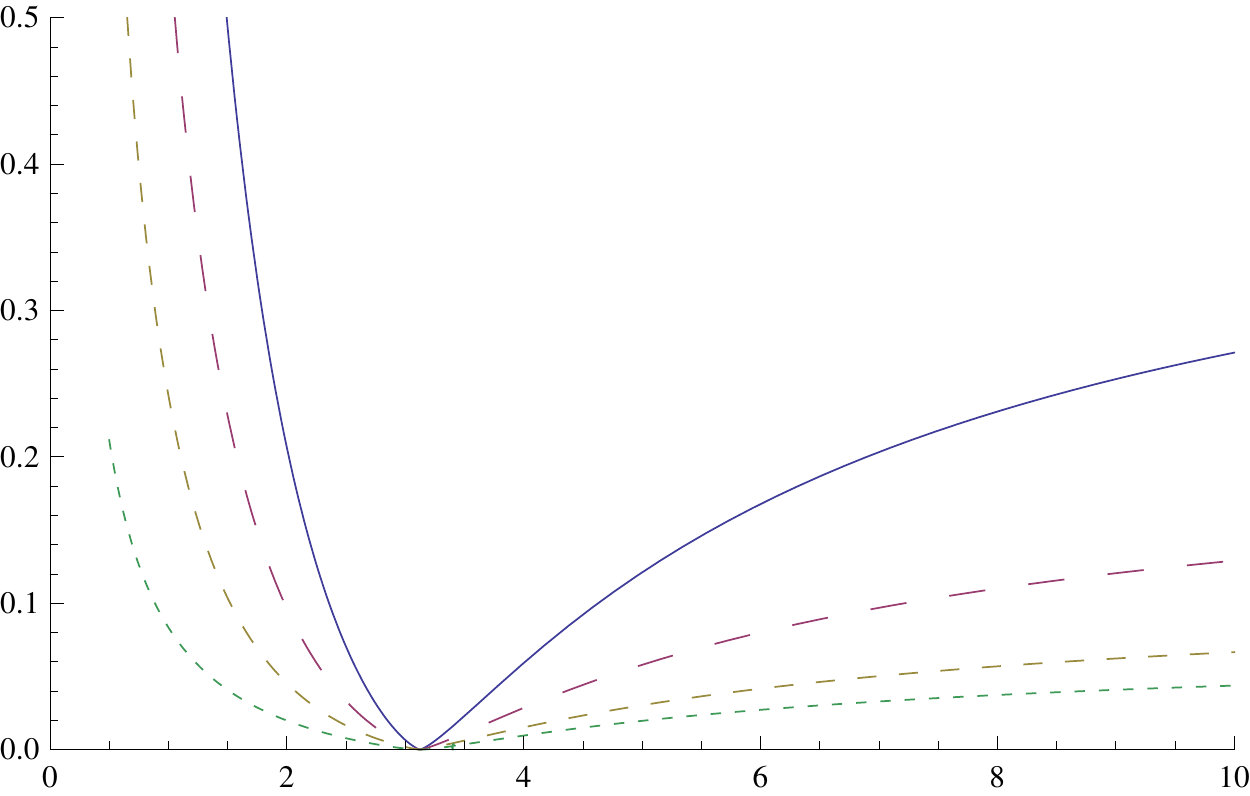}
\includegraphics[height=2in]{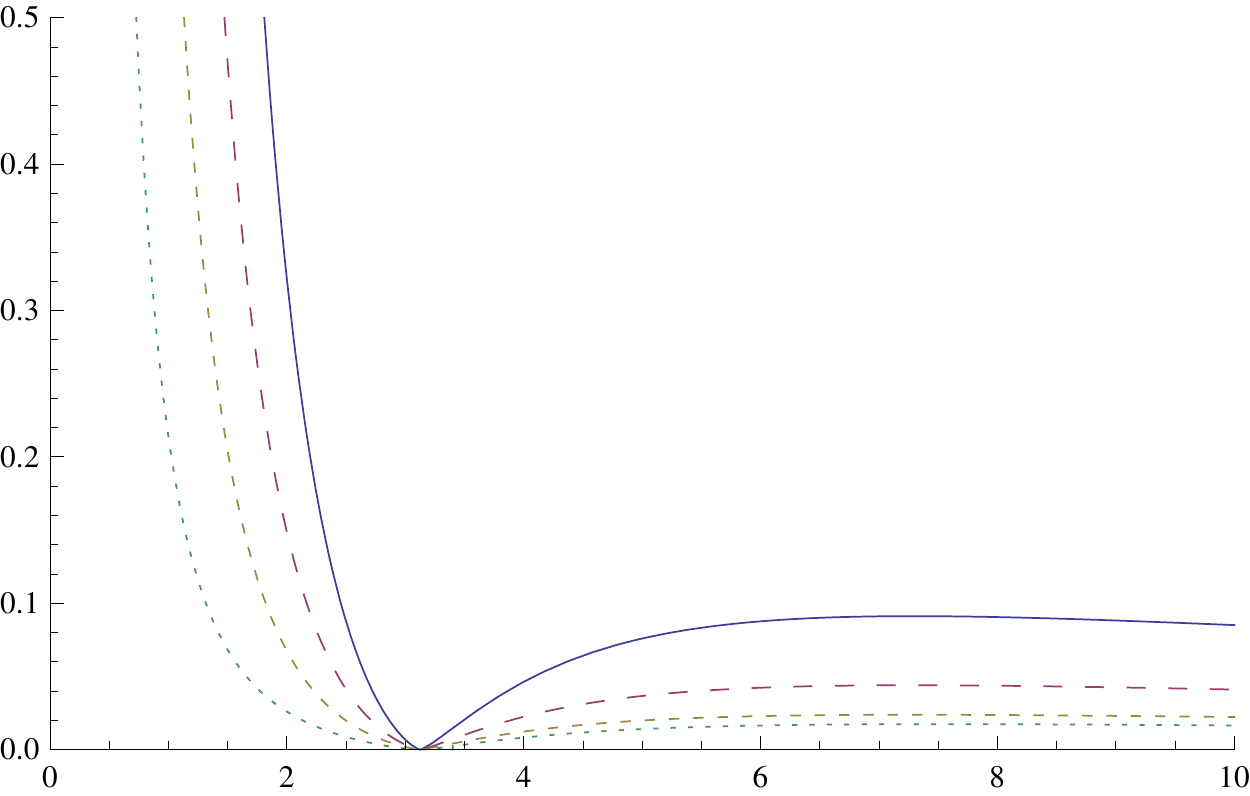}
\caption{\label{fig:turnover}
Trading volume (vertical axis, annual fractions traded), as share turnover (left panel) and wealth turnover (right panel), against risk aversion (horizontal axis), for spread $\eps=0.01\%$ (solid), $0.1\%$ (long dashed), $1\%$ (short dashed), and $10\%$ (dotted).
Parameters are $\mu=8\%$ and $\sigma=16\%$.}
\end{figure}

In the empirical literature (cf.\ \citet*{lo2000trading} and the references therein), the most common measure of trading volume is share turnover, defined as number of shares traded divided by shares held or, equivalently, as the value of shares traded divided by value of shares held. In our model, turnover is positive only at the trading boundaries, while it is null inside the no-trade region. Since turnover, on average, grows linearly over time, we consider the long-term average of share turnover per unit of time, plotted in Figure \ref{fig:turnover} against risk aversion. Turnover is null at the full-investment level $\gamma=\mu /\sigma^2$, as no trading takes place in this case. Lower levels of risk aversion generate leverage, and trading volume increases rapidly, like the liquidity premium.

Share turnover does not decrease to zero as the risky weight decreases to zero for increasing risk aversion $\gamma$. On the contrary, the first term in the asymptotic formula converges to a finite level.
This phenomenon arises because more risk averse investors hold less risky assets (reducing volume), but also rebalance more frequently (increasing volume). As risk aversion increases, neither of these effects prevails, and turnover converges to a finite limit.

To better understand these properties, consider wealth turnover, defined as the value of shares traded, divided by total wealth (not by the value of shares held).\footnote{Technically, wealth is valued at the ask price at the buying boundary, and at the bid price at the selling boundary.} 
Share and wealth turnover are qualitatively similar for low risk aversion, as the risky weight of wealth is larger, but they diverge as risk aversion increases and the risky weight declines to zero. Then, wealth turnover decreases to zero, whereas share turnover does not.

The levels of trading volume observed empirically imply very low values of risk aversion in our model. For example, \citet*{lo2000trading} report in the NYSE-AMEX an average \emph{weekly} turnover of 0.78\% between 1962-1996, which corresponds to an approximate annual turnover above 40\%. As Figure \ref{fig:turnover} shows, such a high level of turnover requires a risk aversion below 2, even for a very small spread of $\eps=0.01\%$. Such a value cannot be interpreted as risk aversion of a representative investor, because it would imply a leveraged position in the stock market, which is inconsistent with equilibrium.
This phenomenon intensifies in the last two decades. As shown by Figure \ref{fig:liq_turn_spr} turnover increases substantially from 1993 to 2010, with monthly averages of 20\% typical from 2007 on, corresponding to an annual turnover of over 240\%. 

The overall implication is that portfolio rebalancing can generate substantial trading volume, but the model explains the trading volume observed empirically only with low risk aversion and high leverage. In a numerical study with risk aversion of six and spreads of 2\%, \citet*{lynch2011explaining} also find that the resulting trading volume is too low, even allowing for labor income and predictable returns, and obtain a condition on the wealth-income ratio, under which the trading volume is the same order of magnitude reported by empirical studies. Our analytical results are consistent with their findings, but indicate that substantially higher volume can be explained with lower risk aversion, even in the absence of labor income.

\subsection{Volume, Spreads and the Liquidity Premium}

\begin{figure}[t]
\includegraphics[width=3.5in]{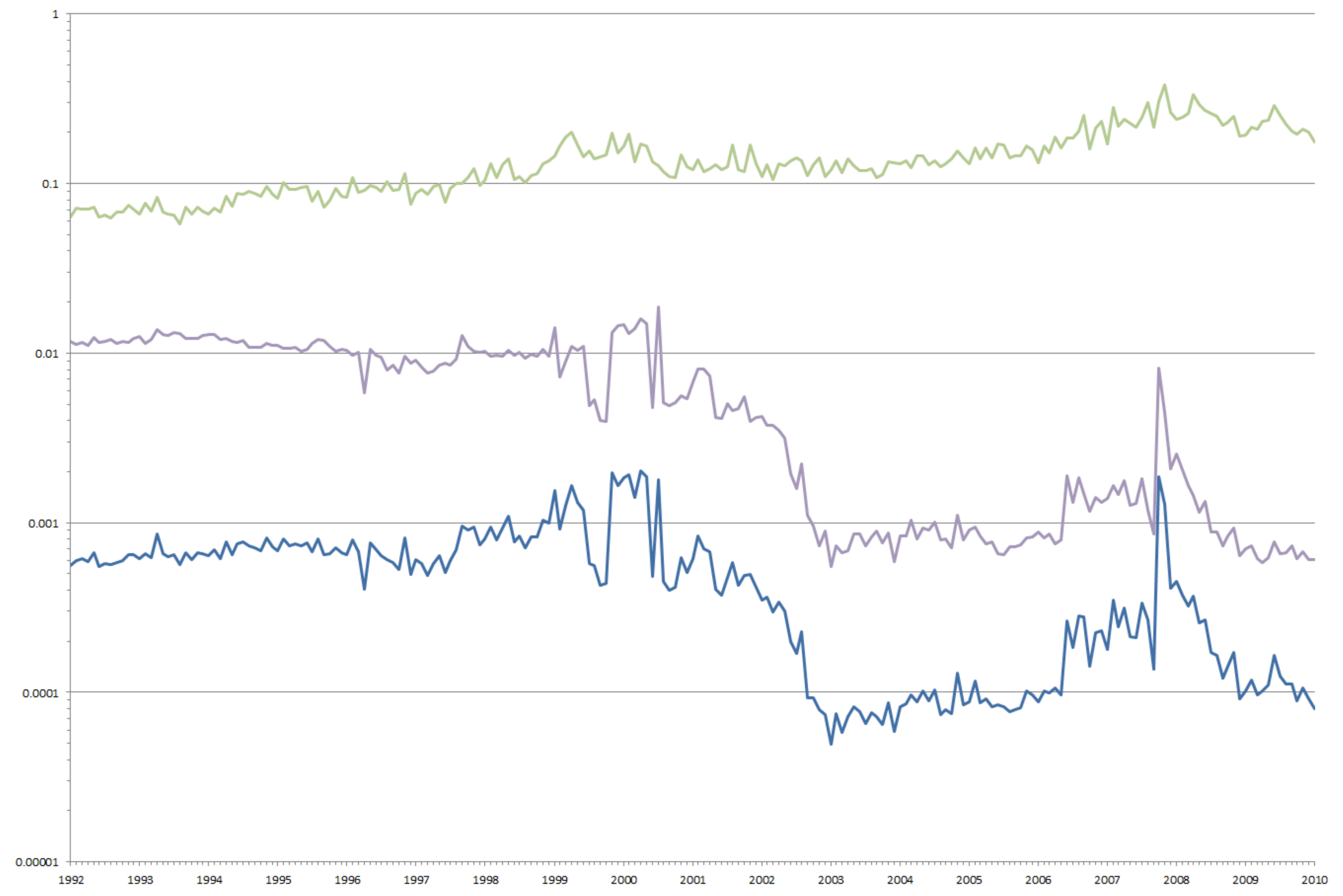}
\begin{tabular}[b]{rrrr}
 & Liquidity  & Share  & Relative\\
Period & Premium & Turnover & Spread\\
\hline\\
1992-1995 & 0.066\% & 7\% & 1.20\%\\
1996-2000 & 0.083\% & 11\% & 0.97\%\\
2001-2005 & 0.038\% & 13\% & 0.37\%\\
2006-2010 & 0.022\% & 21\% & 0.12\%\\
\\
\\
\\
\end{tabular}
\caption{\label{fig:liq_turn_spr}
Left panel: share turnover (top), spread (center), and implied liquidity premium (bottom) in logarithmic scale, from 1992 to 2010. Right panel: monthly averages for share turnover, spread, and implied liquidity premium over subperiods.
Spread and turnover are capitalization-weighted averages across securities in the monthly CRSP database with share codes 10, 11 that have nonzero bid, ask, volume and shares outstanding.}
\end{figure}

The analogies between the comparative statics of the liquidity premium and trading volume suggest a close connection between these quantities. An inspection of the asymptotic formulas unveils the following relations:
\begin{equation}\label{eq:liqest}
{\lip} = \frac34 \eps {\sht} + O(\eps^{4/3})
\qquad\text{and}\qquad
{\left(r+\frac{\mu^2}{2\gamma \sigma^2}\right)-\cer} = \frac34 \eps {\wet} + O(\eps^{4/3}).
\end{equation}
These two relations have the same meaning: the welfare effect of small transaction costs is proportional to trading volume times the spread. The constant of proportionality 3/4 is universal, that is, independent of both  investment opportunities ($r$, $\mu$, $\sigma$) and preferences ($\gamma$).

In the first formula, the welfare effect is measured by the liquidity premium, that is in terms of the risky asset. Likewise, trading volume is expressed as share turnover, which also focuses on the risky asset alone.
By contrast, the second formula considers the decrease in the equivalent safe rate and wealth turnover, two quantities that treat both assets equally.
In summary, if  both welfare and volume are measured consistently with each other, the welfare effect approximately equals volume times the spread, up to the universal factor 3/4.

Figure \ref{fig:liq_turn_spr} plots the spread, share turnover, and the  liquidity premium implied by the first equation in \eqref{eq:liqest}. As in \citet{lo2000trading}, the spread and share turnover are capitalization-weighted averages of all securities in the Center for Research on Security Prices (CRSP) monthly stocks database with share codes 10 and 11, and with nonzero bid, ask, volume and share outstanding. While turnover figures are available before 1992, separate bid and ask prices were not recorded until then, thereby preventing a reliable estimation of spreads for earlier periods. 

Spreads steadily decline in the observation period, dropping by almost an order of magnitude after stock market decimalization of 2001. At the same time, trading volume substantially increases from a typical monthly turnover of 6\% in the early 1990s to over 20\% in the late 2000s. The implied liquidity premium also declines with spreads after decimalization, but less than the spread, in view of the increase in turnover.
During the months of the financial crisis in late 2008, the implied liquidity premium rises sharply, not because of higher volumes, but because spreads widen substantially.
Thus, although this implied liquidity premium is only a coarse estimate, it has advantages over other proxies, because it combines information on both prices and quantities, and is supported by a model.

\subsection{Finite Horizons}

The trading boundaries in this paper are optimal for a long investment horizon, but are also approximately optimal for finite horizons. The following theorem, which complements the main result, makes this point precise:
\begin{theorem}\label{th:finhor}
\label{lem:asympbounds}
Fix a time horizon $T>0$. Then the finite-horizon equivalent safe rate of \emph{any} strategy $(\phi^0,\phi)$ satisfies the upper bound
\begin{align}\label{eq:upperfinite}
\frac{1}{T}\log E\left[(\Xi^\phi_T)^{1-\gamma}\right]^{\frac{1}{1-\gamma}}&\leq r+\frac{\mu^2-\lambda^2}{2\gamma\sigma^2}+\frac{1}{T}\log(\phi^0_{0^-}+\phi_{0^-}S_0)+\pi_*\frac{\epsilon}{T}+O(\varepsilon^{4/3}) ,\\
\intertext{and the finite-horizon equivalent safe rate of our long-run optimal strategy $(\varphi^0,\varphi)$ satisfies the lower bound}
\label{eq:lowerfinite}
\frac{1}{T}\log E\left[(\Xi^\varphi_T)^{1-\gamma}\right]^{\frac{1}{1-\gamma}}&\geq 
r+\frac{\mu^2-\lambda^2}{2\gamma\sigma^2}+\frac{1}{T}\log(\varphi^0_{0^-}+\varphi_{0^-}S_0)-
\left(2\pi_*+\frac{\varphi_{0^-}S_0}{\varphi^0_{0^-}+\varphi_{0^-}S_0}\right)\frac{\varepsilon}{T}+O(\varepsilon^{4/3}).
\end{align}
In particular, for the same unlevered initial position ($\phi_{0^-}=\varphi_{0^-}\ge 0, \phi^0_{0^-}=\varphi^0_{0^-}\ge 0$), the equivalent safe rates of $(\phi^0,\phi)$ and of the optimal policy $(\varphi^0,\varphi)$ for horizon $T$ differ by at most 
\begin{equation}\label{eq:lossbound}
\frac{1}{T}
\left(\log E\left[(\Xi^\phi_T)^{1-\gamma}\right]^{\frac{1}{1-\gamma}}-
\log E\left[(\Xi^\varphi_T)^{1-\gamma}\right]^{\frac{1}{1-\gamma}}\right)\le
(3\pi_*+1)\frac{\varepsilon}T+O(\varepsilon^{4/3}).
\end{equation} 
\end{theorem}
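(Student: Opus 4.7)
The plan is to prove \eqref{eq:upperfinite} by convex duality in the frictionless shadow market of Theorem \ref{thm:shadow}, prove \eqref{eq:lowerfinite} by an explicit Itô computation for the long-run optimizer, and deduce \eqref{eq:lossbound} by subtracting the two bounds with identical starting positions.

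For the upper bound I would first embed any admissible transaction-cost strategy $(\phi^0,\phi)$ into the frictionless shadow market without loss of terminal wealth. Because $(1-\varepsilon)S_t \leq \tilde S_t \leq S_t$, keeping the same risky positions $\phi$ together with a suitably updated bank balance yields a self-financing portfolio in the shadow market whose terminal value $\tilde X_T$ dominates the liquidation value $\Xi^\phi_T$, with initial wealth $\tilde X_0 = \phi^0_{0^-}+\phi_{0^-}\tilde S_0$ differing from $\phi^0_{0^-}+\phi_{0^-}S_0$ by at most $\pi_*\varepsilon\tilde X_0$, which accounts for the $\pi_*\varepsilon/T$ correction. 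In the frictionless shadow market I would then apply Fenchel's inequality $x^{1-\gamma}/(1-\gamma) \leq xy + v(y)$ for power utility with a dual process of the form $Y_t = y\, e^{-rt}\, \cale\bigl(-\int_0^\cdot (\tilde\mu/\tilde\sigma)(\Upsilon_s)\, dW_s\bigr)_t\, q(\Upsilon_t)$, where $q$ is a strictly positive smooth function on the no-trade interval $[0,\log(u(\lambda)/l(\lambda))]$ constructed from the Riccati solution $w(\cdot,\lambda)$ of Theorem \ref{th:main} $iv)$. The function $q$ is pinned down by requiring that $Y$ be a true martingale, which simultaneously forces the exponent $\cer$ and the Neumann boundary conditions on $q$ at the reflecting barriers of $\Upsilon$. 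After optimising in $y>0$ and using boundedness of $q$ on the compact no-trade interval, one obtains $(1-\gamma)^{-1}\log E[\tilde X_T^{1-\gamma}] \leq \log\tilde X_0 + \cer\cdot T + O(1)$, which combined with the initial-wealth comparison gives \eqref{eq:upperfinite}.

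For the lower bound the shadow wealth $\tilde X_t = \varphi^0_t S^0_t + \varphi_t \tilde S_t$ of the long-run optimizer equals the liquidation value at trading times and overstates it by a factor of at most $1-\varepsilon\varphi_T S_T/\tilde X_T$ otherwise, which yields the additional $\varphi_{0^-}S_0/(\varphi^0_{0^-}+\varphi_{0^-}S_0)\cdot\varepsilon/T$ term. Itô's formula applied to $e^{-(1-\gamma)\cer t}\tilde X_t^{1-\gamma} q(\Upsilon_t)$ with exactly the same $q$ produces a process with vanishing drift in the interior (by construction of $q$ via the Riccati equation) and vanishing local-time contributions at the boundaries (by the Neumann conditions on $q$), hence a true martingale. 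Taking expectations reverses the duality inequality and matches the upper bound up to bounded boundary values of $q$ and an initial portfolio-reallocation cost of at most $2\pi_*\varepsilon$, giving \eqref{eq:lowerfinite}. Subtracting the two bounds for the same initial unlevered position cancels $\cer$ and collects the three $\varepsilon/T$ contributions into the constant $3\pi_*+1$ of \eqref{eq:lossbound}, while higher-order remainders are absorbed into $O(\varepsilon^{4/3})$ using the expansion of $\lambda$ from Theorem \ref{th:main} $vi)$.

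The main obstacle is constructing the single function $q$ (equivalently the dual process $Y$) that simultaneously serves both bounds: it must come from the Riccati equation so that the drift in the Itô decomposition vanishes, must satisfy Neumann conditions at both endpoints of the no-trade interval so that the local-time terms disappear, and must make $Y$ a true rather than merely local martingale on $[0,T]$ uniformly in $T$. This last point relies on the boundedness of $\tilde\mu/\tilde\sigma$ on the compact no-trade interval, which guarantees uniform integrability of the Doléans exponential. A secondary technical point is to verify that the shadow embedding of an admissible transaction-cost strategy remains admissible in the shadow market, which follows from the inequality $\Xi^\phi_T \leq \tilde X_T$ established above.
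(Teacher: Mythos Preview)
Your overall architecture---embed into the shadow market, apply frictionless duality with a correction factor built from the Riccati solution, and use boundedness on the compact no-trade interval---is exactly the paper's route. The gap is in your accounting of the $\varepsilon/T$ coefficients, and it would prevent you from actually arriving at $\pi_*$, $2\pi_*$, and $\varphi_{0^-}S_0/(\varphi^0_{0^-}+\varphi_{0^-}S_0)$.

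For the upper bound you say the $\pi_*\varepsilon/T$ term comes from the initial-wealth comparison $\tilde X_0 = \phi^0_{0^-}+\phi_{0^-}\tilde S_0$ versus $\phi^0_{0^-}+\phi_{0^-}S_0$. But since $\tilde S_0\le S_0$ and $\phi_{0^-}\ge 0$, one has $\tilde X_0 \le \phi^0_{0^-}+\phi_{0^-}S_0$, so this comparison is free and contributes nothing to the upper bound. The $\pi_*\varepsilon/T$ term actually comes from what you dismiss as the ``$O(1)$'' boundedness of $q$: one must show that the oscillation of $\tilde q(y)=\int_0^y\bigl(w-\tfrac{w'}{1-w}\bigr)dz$ over $[0,\log(u/l)]$ equals $\pi_*\varepsilon + O(\varepsilon^{4/3})$. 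This requires an explicit integration (using the boundary conditions on $w$ and $w'$) followed by a Taylor expansion in $\lambda$, and is the quantitative heart of the argument.

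Your lower-bound accounting is likewise scrambled. The three $\varepsilon$ contributions are: (i) $-\pi_*\varepsilon$ from the oscillation of $\tilde q$ (same computation as above); (ii) $-\pi_*\varepsilon$ from the \emph{terminal} liquidation comparison $\Xi^\varphi_T \ge \bigl(1-\tfrac{\varepsilon}{1-\varepsilon}\pi_+\bigr)\tilde X^\varphi_T$, since $\pi_+=\pi_*+O(\varepsilon^{1/3})$; and (iii) $-\tfrac{\varphi_{0^-}S_0}{\varphi^0_{0^-}+\varphi_{0^-}S_0}\varepsilon$ from the \emph{initial}-wealth comparison $\log(\varphi^0_{0^-}+\varphi_{0^-}\tilde S_0)\ge \log(\varphi^0_{0^-}+\varphi_{0^-}S_0)-\tfrac{\varphi_{0^-}S_0}{\varphi^0_{0^-}+\varphi_{0^-}S_0}\varepsilon+O(\varepsilon^2)$. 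You attribute (iii) to the terminal liquidation and call (ii) an ``initial portfolio-reallocation cost''; with that swap you cannot recover the stated constants. Once the bookkeeping is fixed, your martingale formulation with $e^{-(1-\gamma)\cer t}\tilde X_t^{1-\gamma}q(\Upsilon_t)$ is equivalent to the paper's exact identity via the myopic change of measure, so no further changes are needed.
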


This result implies that the horizon, like consumption, only has a second order effect on portfolio choice with transaction costs, because the finite-horizon equivalent safe rate matches, at the order $\epsilon^{2/3}$, the equivalent safe rate of the stationary long-run optimal policy. This result recovers, in particular, the first-order asymptotics for the finite-horizon value function obtained by \citet[Theorem 4.1]{bichuch.11}. In addition, Theorem \ref{th:finhor} provides explicit estimates for the correction terms of order $\eps$ arising from liquidation costs. Indeed, $r+\frac{\mu^2-\lambda^2}{2\gamma\sigma^2}$ is the maximum rate achieved by trading optimally. The remaining terms arise due to the transient influence of the initial endowment, as well as the costs of the initial transaction, which takes place if the initial position lies outside the no-trade region, and of the final portfolio liquidation. These costs are of order $\varepsilon/T$ because they are incurred only once, and hence defrayed by a longer trading period. By contrast, portfolio rebalancing generates recurring costs, proportional to the horizon, and their impact on the equivalent safe rate does not decline as the horizon increases.

Even after accounting for all such costs in the worst-case scenario, the bound in \eqref{eq:lossbound} shows that their combined effect on the equivalent safe rate is lower than the spread $\varepsilon$, as soon as the horizon exceeds $3\pi_*+1$, that is four years in the absence of leverage. Yet, this bound holds only up to a term of order $\varepsilon^{4/3}$, so it is worth comparing it with the exact bounds in equations \eqref{eq:expl3}-\eqref{eq:expl4}, from which \eqref{eq:upperfinite} and \eqref{eq:lowerfinite} are obtained.

\begin{figure}
\centering
\includegraphics[height=2in]{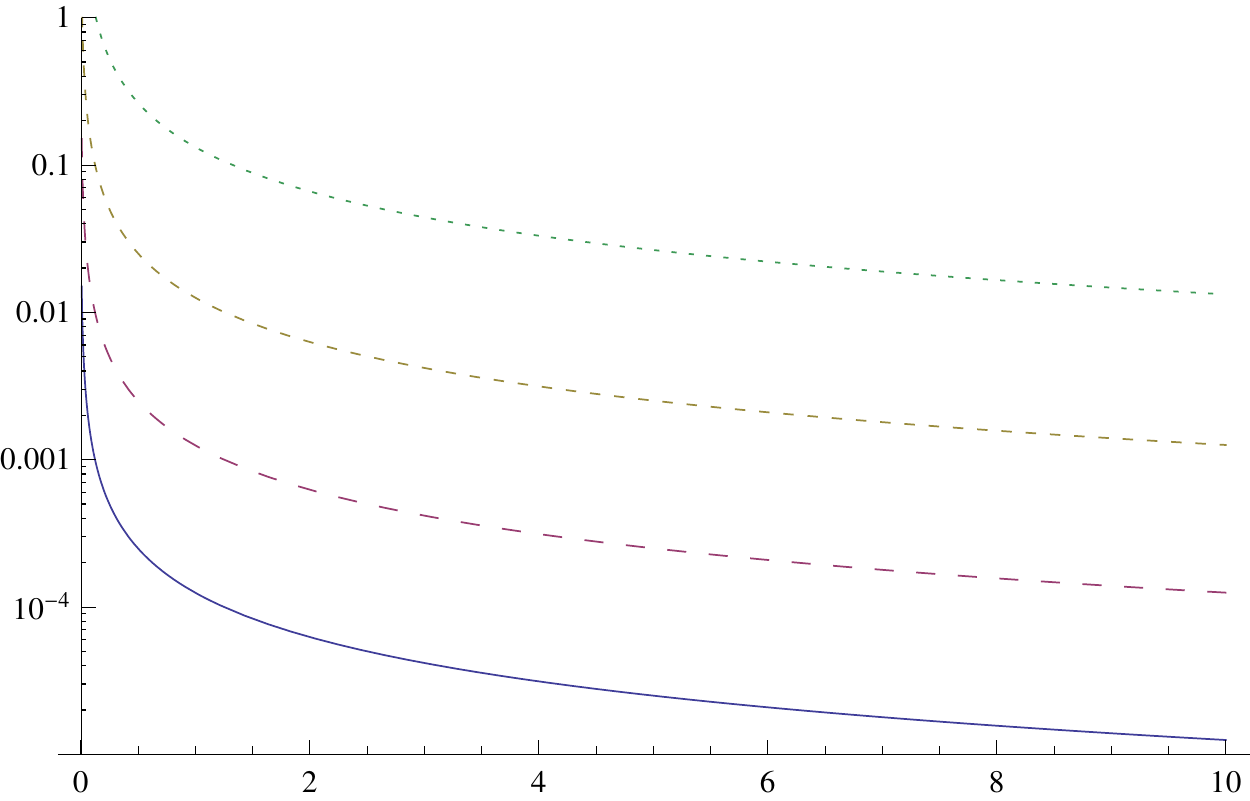}
\caption{\label{fig:finhor}
Upper bound on the difference between the long-run and finite-horizon equivalent safe rates (vertical axis), against the horizon (horizontal axis), for spread $\eps=0.01\%$ (solid), $0.1\%$ (long dashed), $1\%$ (short dashed), and $10\%$ (dotted). Parameters are $\mu=8\%, \sigma=16\%, \gamma=5$.}
\end{figure}

The exact bounds in Figure \ref{fig:finhor} show that, for typical parameter values, the loss in equivalent safe rate of the long-run optimal strategy is lower than the spread $\varepsilon$ even for horizons as short as 18 months, and quickly declines to become ten times smaller, for horizons close to ten years. 
In summary, the long-run approximation is a useful modeling device that makes the model tractable, and the resulting optimal policies are also nearly optimal even for horizons of a few years.

\section{Heuristic Solution}\label{sec:heur}

This section contains an informal derivation of the main results. Here, formal arguments of stochastic control are used to obtain the optimal policy, its welfare, and their asymptotic expansions. 

\subsection{Transaction Costs Market}\label{subsec:tcm}
For a trading strategy $(\varphi^0_t,\varphi_t)$, again write the number of risky shares $\varphi_t=\varphi_t^{\uparrow}-\varphi_t^\downarrow$ as the difference of the cumulated units purchased and sold, and denote by
$$X_t=\varphi^0_t S^0_t, \quad Y_t=\varphi_t S_t,$$
the values of the safe and risky positions in terms of the ask price $S_t$. Then, the self-financing condition~\eqref{eq:selffinancing}, and the dynamics of $S_t^0$ and $S_t$ imply 
\begin{align*}
dX_t =& r X_t dt-S_t d\varphi^{\uparrow}_t+(1-\ve)S_t d\varphi^{\downarrow}_t,\\
dY_t =& (\mu+r) Y_t dt +\sigma Y_t dW_t +S_td\varphi^{\uparrow}_t-S_td\varphi^{\downarrow}.
\end{align*}
Consider the maximization of expected power utility $U(x)=x^{1-\gamma}/ (1-\gamma) $ from terminal wealth at time $T$,\footnote{For a fixed horizon $T$, one would need to specify whether terminal wealth is valued at bid, ask, or at liquidation prices, as in Definition \ref{def:admiss}. In fact, since these prices are within a constant positive multiple of each other, which price is used is inconsequential for a long-run objective. For the same reason, the terminal condition for the finite horizon value function does not have to be satisfied by the stationary value function, because its effect is negligible.} and denote by $V(t,x,y)$ its value function, which depends on time and the value of the safe and risky positions. It\^o's formula yields:
\begin{align*}
d V(t,X_t,Y_t)=& V_t dt+V_x dX_t + V_y dY_t +\frac 12  V_{yy} d\langle Y,Y\rangle_t\\
=& 
\left(V_t+ r X_t V_x+(\mu+r)Y_t V_y+\frac{\sigma^2}2 Y_t^2 V_{yy}\right)dt\\
&+ S_t(V_y-V_x)d\varphi^{\uparrow}_t+S_t((1-\ve)V_x-V_y)d\varphi^{\downarrow}_t+\sigma Y_t V_y dW_t,
\end{align*}
where the arguments of the functions are omitted for brevity. By the martingale optimality principle of stochastic control (cf.\ \citet*{fleming2006controlled}), the value function $V(t,X_t,Y_t)$ must be a supermartingale for any choice of the cumulative purchases and sales $\varphi_t^{\uparrow},\varphi_t^{\downarrow}$. Since these are increasing processes, it follows that  $V_y-V_x \le 0$ and $(1-\ve) V_x-V_y \le 0$, that is
\begin{equation*}
1 \le \frac{V_x}{V_y}\le \frac{1}{1-\ve}.
\end{equation*}
In the interior of this ``no-trade region'', where the number of risky shares remains constant, the drift of $V(t,X_t,Y_t)$ cannot be positive, and must become zero for the optimal policy:\footnote{Alternatively, this equation can be obtained from standard arguments of singular control, cf.\ \citet*[Chapter VIII]{fleming2006controlled}.}
\begin{equation}\label{eq:hjb}
 V_t+ r X_t V_x+(\mu+r)Y_t V_y+\frac{\sigma^2}2 Y_t^2 V_{yy} =0
 \qquad \text{if } \qquad 1< \frac{V_x}{V_y}<\frac{1}{1-\ve}.
\end{equation}
To simplify further, note that the value function must be homogeneous with respect to wealth, and that --- in the long run --- it should grow exponentially with the horizon at a constant rate. 
These arguments lead to guess\footnote{This guess assumes that the cash position is strictly positive, $X_t>0$, which excludes leverage. With leverage, factoring out $(-X_t)^{1-\gamma}$ leads to analogous calculations. In either case, under the optimal policy, the ratio $Y_t/X_t$ always remains either strictly positive, or strictly negative, never to pass through zero.} that $V(t,X_t,Y_t)=(X_t)^{1-\gamma}v(Y_t/X_t) e^{- (1-\gamma)  (r+\beta) t}$ for some $\beta$ to be found. Setting $z=y/x$, the above equation reduces to
\begin{equation}\label{eq:hjbred}
\frac{\sigma^2}2 z^2v''(z)+\mu z v'(z)- (1-\gamma) \beta v(z)=0
\qquad \text{if } \qquad 1+z<\frac{ (1-\gamma) v(z)}{v'(z)}<\frac{1}{1-\ve}+z.
\end{equation}
Assuming that the no-trade region $\{z:1+z\leq \frac{ (1-\gamma) v(z)}{v'(z)}\leq\frac{1}{1-\ve}+z\}$ coincides with some interval $l\le z\le u$ to be determined, and noting that at $l$ the left inequality in \eqref{eq:hjbred} holds as equality, while at $u$ the right inequality holds as equality, the following free boundary problem arises:
\begin{align}
\label{hjbred}
\frac{\sigma^2}2 z^2v''(z)+\mu z v'(z)- (1-\gamma) \beta v(z)&=0
\qquad \text{if } l < z < u, \\
\label{boundbuy}
(1+l)v'(l)-(1-\gamma)v(l)&=0,\\
\label{boundsell}
(1/(1-\ve)+u)v'(u)-(1-\gamma)v(u)&=0.
\end{align}
These conditions are not enough to identify the solution, because they can be matched for any choice of the trading boundaries $l, u$. The optimal boundaries are the ones that also satisfy the smooth-pasting conditions (cf. \citet*{benes.al.80,dumas.91}), formally obtained by differentiating \eqref{boundbuy} and \eqref{boundsell} with respect to $l$ and $u$, respectively:
\begin{align}
\label{smoothbuy}
(1+l)v''(l)+\gamma v'(l)=0,\\
(1/(1-\ve)+u)v''(u)+ \gamma v'(u)=0.\label{smoothsell}
\end{align}
In addition to the reduced value function $v$, this system requires to solve for the excess equivalent safe rate $\beta$ and the trading boundaries $l$ and $u$. Substituting~\eqref{smoothbuy} and~\eqref{boundbuy}  into~\eqref{hjbred} yields (cf.\  \citet*{dumas.luciano.91})
\begin{align*}
-\frac{\sigma^2}2 (1-\gamma)\gamma\frac{l^2}{(1+l)^2}v +\mu (1-\gamma)\frac{l}{1+l}v -(1-\gamma)\beta v=0.
\end{align*}
Setting $\pi_-=l/(1+l)$, and factoring out $(1-\gamma) v$, it follows that 
\begin{align*}
-\frac{\gamma \sigma^2}2 \pi_-^2+\mu \pi_- -\beta=0.
\end{align*}
Note that $\pi_-$ is the risky weight when it is time to buy, and hence the risky position is valued at the ask price. 
The same argument for $u$ shows that the other solution of the quadratic equation is $\pi_+=u (1-\ve)/(1+u (1-\ve))$, which is the risky weight when it is time to sell, and hence the risky position is valued at the bid price. Thus, the optimal policy is to buy when the ``ask" fraction falls below~$\pi_-$, sell when the ``bid" fraction rises above~$\pi_+$, and do nothing in between.
Since $\pi_-$ and $\pi_+$ solve the same quadratic equation, they are related to $\beta$ via
\begin{align*}
\pi_\pm=\frac{\mu}{\gamma \sigma^2}\pm \frac{\sqrt{\mu^2-2 \beta\gamma\sigma^2}}{\gamma\sigma^2}.
\end{align*}
It is convenient to set $\beta=(\mu^2-\lambda^2)/2\gamma\sigma^2$, because $\beta=\mu^2/2\gamma\sigma^2$ without transaction costs. We call $\lambda$ the \emph{gap}, since $\lambda=0$ in a frictionless market, and, as $\lambda$ increases, all variables diverge from their frictionless values.
Put differently, to compensate for transaction costs, the investor would require another asset, with expected return $\lambda$ and volatility $\sigma$, which trades without frictions and is uncorrelated with the risky asset.\footnote{Recall that in a frictionless market with two uncorrelated assets with returns $\mu_1$ and $\mu_2$, both with volatility $\sigma$, the maximum Sharpe ratio is $(\mu_1^2+\mu_2^2)/\sigma^2$. That is, squared Sharpe ratios add across orthogonal shocks.}
With this notation, the buy and sell boundaries are just
\begin{align*}\label{eq:uplow}
\pi_\pm=\frac{\mu\pm\lambda}{\gamma \sigma^2}.
\end{align*}
In other words, the buy and sell boundaries are symmetric around the classical frictionless solution $\mu/\gamma \sigma^2$. 
Since $l(\lambda),u(\lambda)$ are identified by $\pi_\pm$ in terms of $\lambda$, it now remains to find $\lambda$. After deriving $l(\lambda)$ and $u(\lambda)$, the boundaries in the problem \eqref{hjbred}-\eqref{boundsell} are no longer free, but fixed.
\nada{
\begin{align*}
\frac{\sigma^2}2 z^2v''(z)+\mu z v'(z)+(1-1/\gamma)\frac{\mu^2-\lambda^2}{2\sigma^2}v(z) &= 0
\qquad \text{if } z\in [l(\lambda),u(\lambda)], \\
((1-\ve)+l(\lambda))v'(l(\lambda))-(1-\gamma)v(l(\lambda)) &= 0,\\
(1/(1-\ve)+u(\lambda))v'(u(\lambda))-(1-\gamma)v(u(\lambda)) &= 0.
\end{align*}}
With the substitution
\[
  v(z)=e^{(1-\gamma)\int_0^{\log(z/l(\lambda))} w(y)dy}, \quad \mbox{i.e.,} \quad w(y)=\frac{l(\lambda)e^y v'(l(\lambda)e^y)}{(1-\gamma)v(l(\lambda)e^y)}, \]
the boundary problem \eqref{hjbred}-\eqref{boundsell} reduces to a Riccati ODE
\begin{align}
w'(y)+(1-\gamma)w(y)^2+\left(\frac{2\mu}{\sigma^2}-1\right)w(y)-
\gamma \left(\frac{\mu-\lambda}{\gamma \sigma^2}\right)
\left(\frac{\mu+\lambda}{\gamma \sigma^2}\right)
 &= 0, \quad y\in[0,\log u(\lambda)/l(\lambda)], \label{riccati1}\\
w(0) &= \frac{\mu-\lambda}{\gamma \sigma^2},\label{riccati2}\\
w(\log (u(\lambda)/l(\lambda))) &= \frac{\mu+\lambda}{\gamma \sigma^2},\label{riccati3}
\end{align}
where
\begin{equation}
\frac{u(\lambda)}{l(\lambda)} = 
\frac{1}{(1-\ve)}\frac{\pi_+ (1-\pi_-)}{\pi_- (1-\pi_+)} =
\frac{1}{(1-\ve)}\frac{(\mu+\lambda)(\mu-\lambda-\gamma \sigma^2)}
{(\mu-\lambda)(\mu+\lambda-\gamma \sigma^2)}.
\end{equation}
For each $\lambda$, the initial value problem \eqref{riccati1}-\eqref{riccati2} has a solution $w(\lambda,\cdot)$, and the correct value of $\lambda$ is identified by the second boundary condition~\eqref{riccati3}.

\subsection{Asymptotics}\label{subsec:asymp}
The equation \eqref{riccati3} does not have an explicit solution, but it is possible to obtain an asymptotic expansion for small transaction costs ($\eps\sim 0$) using the implicit function theorem. To this end, write the boundary condition~\eqref{riccati3} as $f(\lambda,\eps)=0$, where:
\begin{equation*}\label{eq:flameps}
f(\lambda,\eps) = w(\lambda, \log(u(\lambda)/l(\lambda)))-\frac{\mu+\lambda}{\gamma \sigma^2}.
\end{equation*}
Of course, $f(0,0)=0$ corresponds to the frictionless case. The implicit function theorem then suggests that around zero $\lambda(\eps)$ follows the asymptotics $\lambda(\eps) \sim - \eps f_\eps/f_\lambda$, but the difficulty is that $f_\lambda=0$, because $\lambda$ is not of order $\eps$. Heuristic arguments \citep*{MR1284980,rogers.01} suggest that $\lambda$ is of order $\eps^{1/3}.$\footnote{Since $\lambda$ is proportional to the width of the no-trade region $\delta$, the question is why the latter is of order $\eps^{1/3}$. The intuition is that a no-trade region of width $\delta$ around the frictionless optimum leads to transaction costs of order $\eps/\delta$ (because the time spent near the boundaries is approximately inversely proportional to the length of the interval), and to a welfare cost of the order $\delta^2$ (because the region is centered around the frictionless optimum, hence the linear welfare cost is zero). Hence, the total cost is of the order  $\eps/\delta + \delta^2$, and attains its minimum for $\delta = O(\eps^{1/3})$. } Thus, setting $\lambda=\delta^{1/3}$ and $\hat f(\delta,\eps)=f(\delta^{1/3},\eps)$, and computing the derivatives of the explicit formula for $w(\lambda,x)$ (cf.\ Lemma \ref{lem:riccati}) shows that: 
\begin{align*}
\hat f_\eps(0,0) = -\frac{\mu  \left(\mu -\gamma  \sigma ^2\right)}{\gamma ^2 \sigma ^4}, \qquad \hat f_\delta(0,0) = \frac{4}{3 \mu ^2 \sigma ^2-3 \gamma  \mu  \sigma ^4}.
\end{align*}
As a result:
\begin{equation*}
\delta(\eps) \sim -\frac{f_\eps}{f_\delta} \eps = 
\frac{3 \mu ^2 \left(\mu -\gamma  \sigma^2\right)^2}{4 \gamma ^2 \sigma ^2}\eps
\quad\text{whence}\quad
\lambda(\eps) \sim
\left(\frac{3\mu ^2 \left(\mu -\gamma  \sigma ^2\right)^2}{4\gamma ^2 \sigma ^2}\right)^{1/3} \eps^{1/3}.
\end{equation*}
The asymptotic expansions of all other quantities then follow by Taylor expansion.

\section{Conclusion} 

In a tractable model of transaction costs with one safe and one risky asset and constant investment opportunities, we have computed explicitly the optimal trading policy, its welfare, liquidity premium, and trading volume, for an investor with constant relative risk aversion and a long horizon.

The trading boundaries are symmetric around the Merton proportion, if each boundary is computed with the corresponding trading price. Both the liquidity premium and trading volume are small in the unlevered regime, but become substantial in the presence of leverage. For a small bid-ask spread, the liquidity premium is approximately equal to share turnover times the spread, times the universal constant 3/4.

Trading boundaries depend on investment opportunities only through the mean variance ratio. The equivalent safe rate, the liquidity premium, and trading volume also depend only on the mean variance ratio if measured in business time.

\appendix
\section*{Appendix}\label{sec:verification}

\section{Explicit Formulas and their Properties}

We now show that the candidate $w$ for the reduced value function and the quantity $\lambda$ are indeed well-defined for sufficiently small spreads.  The first step is to determine, for a given small $\lambda>0$, an explicit expression for the solution $w$ of the ODE~\eqref{riccati1}, complemented by the initial condition~\eqref{riccati2}.

\begin{lemma}\label{lem:riccati}
Let $0<\mu/\gamma \sigma^2 \neq 1$.
Then for sufficiently small $\lambda>0$, the function
$$
w(\lambda,y)=\begin{cases} 
\frac{a(\lambda)\tanh[\tanh^{-1}(b(\lambda)/a(\lambda))-a(\lambda)y]+(\frac{\mu}{\sigma^2}-\frac{1}{2})}{\gamma-1}, &\mbox{if } \gamma \in (0,1) \mbox{ and } \frac{\mu}{\gamma\sigma^2}<1 \mbox{ or } \gamma>1 \mbox{ and } \frac{\mu}{\gamma\sigma^2}>1,\\
\frac{a(\lambda) \tan[\tan^{-1}(b(\lambda)/a(\lambda))+a(\lambda)y]+(\frac{\mu}{\sigma^2}-\frac{1}{2})}{\gamma-1}, &\mbox{if } \gamma>1 \mbox{ and } \frac{\mu}{\gamma \sigma^2} \in \left(\frac{1}{2}-\frac{1}{2}\sqrt{1-\frac{1}{\gamma}},\frac{1}{2}+\frac{1}{2}\sqrt{1-\frac{1}{\gamma}}\right),\\
\frac{a(\lambda)\coth[\coth^{-1}(b(\lambda)/a(\lambda))-a(\lambda)y]+(\frac{\mu}{\sigma^2}-\frac{1}{2})}{\gamma-1}, &\mbox{otherwise},
\end{cases}
$$  
with
\[
a(\lambda)=\sqrt{\Big|(\gamma-1)\frac{\mu^2-\lambda^2}{\gamma\sigma^4}-\Big(\frac{1}{2}-\frac{\mu}{\sigma^2}\Big)^2\Big|} \quad \mbox{and} \quad b(\lambda)=\frac{1}{2}-\frac{\mu}{\sigma^2}+(\gamma-1)\frac{\mu-\lambda}{\gamma\sigma^2},
\]
is a local solution of 
\begin{equation}\label{eq:wode}
w'(y)+(1-\gamma)w^2(y)+\left(\frac{2\mu}{\sigma^2}-1\right)w(y)-\frac{\mu^2-\lambda^2}{\gamma \sigma^4}=0, \quad w(0)=\frac{\mu-\lambda}{\gamma\sigma^2}.
\end{equation}
Moreover, $y \mapsto w(\lambda,x)$ is increasing (resp.\ decreasing) for $\mu/\gamma\sigma^2 \in (0,1)$ (resp.\ $\mu/\gamma\sigma^2>1$).
\end{lemma}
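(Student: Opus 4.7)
The plan is a direct verification: reduce the Riccati equation \eqref{eq:wode} to a canonical autonomous scalar ODE, solve it in the three standard subcases, and read off the stated properties. First, I would make the linearizing substitution $v(y) := (\gamma-1)w(y) - (\mu/\sigma^2 - \tfrac12)$, so that $w(y) = (v(y) + \mu/\sigma^2 - \tfrac12)/(\gamma-1)$. Multiplying \eqref{eq:wode} by $\gamma-1$ and expanding, the linear term in $v$ cancels and one obtains
\begin{equation*}
v'(y) = v(y)^2 - \kappa(\lambda), \qquad \kappa(\lambda) := \Bigl(\tfrac{\mu}{\sigma^2}-\tfrac12\Bigr)^2 + (1-\gamma)\tfrac{\mu^2-\lambda^2}{\gamma\sigma^4},
\end{equation*}
with initial condition $v(0) = b(\lambda)$; note that $a(\lambda) = \sqrt{|\kappa(\lambda)|}$ by definition. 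Solving this scalar equation by separation of variables gives exactly the three canonical cases: $v = a\tanh(\tanh^{-1}(b/a) - ay)$ when $\kappa>0$ and $|b|<a$, $v = a\coth(\coth^{-1}(b/a) - ay)$ when $\kappa>0$ and $|b|>a$, and $v = a\tan(\tan^{-1}(b/a) + ay)$ when $\kappa<0$ (equivalently $v' = v^2 + a^2$). Transforming back through the substitution yields the three formulas in the statement, and the initial condition $w(0) = (\mu-\lambda)/(\gamma\sigma^2)$ is automatic.

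Next, I would classify which case corresponds to which parameter regime. At $\lambda=0$ a short factorization gives $\kappa(0) < 0$ iff $\gamma>1$ and $\pi_* := \mu/(\gamma\sigma^2)$ lies in $(\tfrac12-\tfrac12\sqrt{1-1/\gamma},\tfrac12+\tfrac12\sqrt{1-1/\gamma})$, which is precisely the tan case; and when $\kappa(0)>0$ one computes
\begin{equation*}
a(0)^2 - b(0)^2 = (\gamma-1)\,\pi_*(\pi_*-1),
\end{equation*}
whose sign distinguishes tanh ($|b|<a$, happening iff $\gamma<1,\pi_*<1$ or $\gamma>1,\pi_*>1$) from coth (the remaining regimes). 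By continuity in $\lambda$ the same partition persists for all sufficiently small $\lambda>0$, and smallness of $\lambda$ also keeps $a(\lambda)y$ inside $(-\pi/2,\pi/2)$ in the tan case and $\coth^{-1}(b/a)-a y$ away from $0$ in the coth case, so that each formula is regular on an interval containing its stated domain.

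Monotonicity can be read off the explicit formulas since $\tanh$, $\tan$, and each branch of $\coth$ are strictly monotone, and division by $\gamma-1$ flips the sign when $\gamma<1$. A cleaner invariant check comes from evaluating the ODE at $y=0$: using $\gamma(\pi_-+\pi_+) = 2\mu/\sigma^2$,
\begin{equation*}
w'(0) = -(1-\gamma)\pi_-^2 - (2\mu/\sigma^2 - 1)\pi_- + \gamma\pi_-\pi_+ = \pi_-(1-\pi_-),
\end{equation*}
which for small $\lambda$ has the sign of $1-\pi_*$; the explicit formulas show that $w'$ retains that sign throughout $[0,\log(u(\lambda)/l(\lambda))]$, yielding the claimed monotonicity.

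The only real obstacle is the bookkeeping across parameter regimes: verifying in each combination of $\gamma\gtrless 1$ and $\pi_*\gtrless 1$ (and the finer tan subcase inside $\gamma>1,\pi_*<1$) that the correct branch of the standard ODE solution is selected by the initial datum $b(\lambda)$, and that the domain of regularity of that branch contains the interval on which $w$ will eventually be used. Everything else is mechanical once the substitution $v = (\gamma-1)w - (\mu/\sigma^2 - \tfrac12)$ has turned \eqref{eq:wode} into the classical $v' = v^2 - \kappa$.
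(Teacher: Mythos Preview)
Your proposal is correct and follows essentially the same approach as the paper's proof, which is very brief: it says the first part ``is easily verified by taking derivatives'' and that the case distinctions correspond to the sign of the discriminant $(\gamma-1)\tfrac{\mu^2-\lambda^2}{\gamma\sigma^4}-(\tfrac12-\tfrac{\mu}{\sigma^2})^2$ together with whether $b(\lambda)/a(\lambda)\lessgtr 1$, while the monotonicity ``follows by inspection of the explicit formulas.'' Your substitution $v=(\gamma-1)w-(\mu/\sigma^2-\tfrac12)$ and the resulting $v'=v^2-\kappa$ are the constructive counterpart to the paper's direct verification, and your computation $a(0)^2-b(0)^2=(\gamma-1)\pi_*(\pi_*-1)$ to separate the $\tanh$ and $\coth$ regimes, as well as $w'(0)=\pi_-(1-\pi_-)$ for the monotonicity, make explicit what the paper leaves implicit; the underlying mathematics is the same.
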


\begin{proof}
The first part of the assertion is easily verified by taking derivatives, noticing that the case distinctions distinguish between the different signs of the discriminant
$$(\gamma-1)\frac{\mu^2-\lambda^2}{\gamma\sigma^4}-\left(\frac{1}{2}-\frac{\mu}{\sigma^2}\right)^2$$
of the Riccati equation \eqref{eq:wode} for sufficiently small $\lambda$. Indeed, in the second case the discriminant is positive for sufficiently small $\lambda$. The first and third case correspond to a negative discriminant, as well as $b(\lambda)/a(\lambda)<1$ and $b(\lambda)/a(\lambda)>1$, respectively, for sufficiently small $\lambda>0$, so that the function $w$ is well-defined in each case.

The second part of the assertion follows by inspection of the explicit formulas.
\end{proof}

Next, establish that the crucial constant $\lambda$, which determines both the no-trade region and the equivalent safe rate, is well-defined.

\begin{lemma}\label{lem:lambda}
Let $0<\mu/\gamma \sigma^2 \neq 1$ and $w(\lambda,\cdot)$ be defined as in
Lemma~\ref{lem:riccati}, and set
$$
l(\lambda)=\frac{\mu-\lambda}{\gamma\sigma^2-(\mu-\lambda)}, \quad u(\lambda)=\frac{1}{(1-\ve)}\frac{\mu+\lambda}{\gamma\sigma^2-(\mu+\lambda)}.
$$
Then, for sufficiently small $\ve>0$, there exists a unique solution $\lambda$ of 
\begin{equation}\label{eq:wrbd}
  w\left(\lambda,\log\left(\frac{u(\lambda)}{l(\lambda)}\right)\right)
  -\frac{\mu+\lambda}{\gamma\sigma^2}=0.
\end{equation}
As $\ve \downarrow 0$, it has the asymptotics
\[
  \lambda = \gamma\sigma^2\left(\frac{3}{4\gamma}\left(\frac{\mu}{\gamma\sigma^2}\right)^2\left(1-\frac{\mu}{\gamma\sigma^2}\right)^2\right)^{1/3}\ve^{1/3}+\sigma^2\left(\frac{(5-2\gamma)}{10}\frac{\mu}{\gamma\sigma^2}\left(1-\frac{\mu}{\gamma\sigma^2}\right)-\frac{3}{20}\right)\ve+O(\ve^{4/3}).
\]
\end{lemma}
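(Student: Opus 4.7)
The plan is to solve \eqref{eq:wrbd} by the implicit function theorem, with a preliminary substitution that removes a degeneracy at the origin. Define
\[
f(\lambda,\eps) := w\!\left(\lambda,\log\!\left(\tfrac{u(\lambda)}{l(\lambda)}\right)\right)-\tfrac{\mu+\lambda}{\gamma\sigma^2}.
\]
First I would check $f(0,0)=0$: at $\eps=0$ and $\lambda=0$ one has $u(0)=l(0)$, so the logarithm vanishes, and the initial condition \eqref{riccati2} gives $w(0,0)=\mu/(\gamma\sigma^2)$. Thus the equation $f=0$ passes through the frictionless point, as expected.

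Next I would compute the partial derivatives of $f$ at the origin using the explicit formulas of Lemma~\ref{lem:riccati}. A direct calculation (the three branches of $w$ coincide as analytic functions on their common domain of validity in $\lambda$, so any one of them may be used for local expansions) shows $f_\lambda(0,0)=0$, which is precisely the degeneracy anticipated by the $\lambda\sim\eps^{1/3}$ scaling sketched in Section~\ref{subsec:asymp}. To bypass this, set $\delta=\lambda^3$ and $\hat f(\delta,\eps):=f(\delta^{1/3},\eps)$. Because the explicit formula for $w(\lambda,\cdot)$ depends smoothly on $\lambda^2$ (the parameters $a(\lambda)$ and $b(\lambda)$ only involve $\lambda^2$ and $\lambda$ linearly, and the boundary-condition argument $\log(u/l)$ is a smooth function of $\lambda^2$ near $\lambda=0$), the map $\hat f$ extends to a $C^\infty$ function of $(\delta,\eps)$ in a neighborhood of $(0,0)$. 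The computation indicated in the heuristic section then yields
\[
\hat f_\delta(0,0)=\frac{4}{3\mu^2\sigma^2-3\gamma\mu\sigma^4}\neq 0,\qquad
\hat f_\eps(0,0)=-\frac{\mu(\mu-\gamma\sigma^2)}{\gamma^2\sigma^4},
\]
both nonzero under the hypothesis $0<\mu/\gamma\sigma^2\neq 1$. The implicit function theorem then gives a unique $C^\infty$ function $\delta(\eps)$ with $\delta(0)=0$ and $\hat f(\delta(\eps),\eps)=0$ for $\eps$ in a neighborhood of $0$, and hence a unique $\lambda(\eps)=\delta(\eps)^{1/3}$ solving \eqref{eq:wrbd} for small $\eps>0$. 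Uniqueness among all small positive $\lambda$ (not only those corresponding to small $\delta$) follows because $\hat f_\delta(0,0)\neq 0$ implies that $\delta\mapsto\hat f(\delta,\eps)$ is strictly monotone on a fixed neighborhood of the origin uniformly in small $\eps$.

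The leading coefficient is $\delta'(0)=-\hat f_\eps(0,0)/\hat f_\delta(0,0)=3\mu^2(\mu-\gamma\sigma^2)^2/(4\gamma^2\sigma^2)$, which produces the stated $\eps^{1/3}$ term. The $\eps^{2/3}$ coefficient vanishes, as can be seen by inserting the ansatz $\delta(\eps)=\delta_1\eps+\delta_2\eps^2+\cdots$ into the higher-order Taylor expansion of $\hat f$ and reading off coefficients: the relevant cross-derivatives of $\hat f$ at the origin combine in such a way that $\delta_2$ produces only an $\eps^{4/3}$-contribution to $\lambda$, while the $\eps$-term in $\lambda$ arises from expanding $\lambda=(\delta_1\eps)^{1/3}(1+(\delta_2/\delta_1)\eps+\cdots)^{1/3}$ together with the next derivatives of $\hat f$. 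I would carry out this expansion by symbolic computation, as the paper remarks that algorithmic calculation delivers terms to arbitrary order. The main obstacle is purely bookkeeping: keeping track of the three cases in Lemma~\ref{lem:riccati} and verifying that the explicit Taylor coefficients of $w(\lambda,\log(u/l))$, assembled from $a(\lambda)$, $b(\lambda)$, and the tanh/tan/coth factors, reduce to the stated $\eps$-coefficient $\sigma^2\bigl((5-2\gamma)/10\cdot(\mu/\gamma\sigma^2)(1-\mu/\gamma\sigma^2)-3/20\bigr)$; the analyticity already ensured, the remaining error $O(\eps^{4/3})$ is automatic from the smoothness of $\delta(\eps)$.
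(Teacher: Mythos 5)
Your substitution $\delta=\lambda^3$ does not accomplish what you claim, and the gap is fatal. The function $\hat f(\delta,\ve)=f(\delta^{1/3},\ve)$ is \emph{not} a $C^\infty$ (not even a $C^1$) function of $(\delta,\ve)$ near the origin. What the calculation actually shows is that $f(\lambda,0)=O(\lambda^3)$ as $\lambda\to0$; it does not show that $f$ depends on $\lambda$ only through $\lambda^3$. Your stated justification is also incorrect: $b(\lambda)$ and the initial condition $w(\lambda,0)=(\mu-\lambda)/\gamma\sigma^2$ are \emph{linear} in $\lambda$, and $\log(u(\lambda)/l(\lambda))\big|_{\ve=0}$ is \emph{odd} in $\lambda$, not a smooth function of $\lambda^2$. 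So $f$ contains terms of order $\lambda^4$, $\ve\lambda$, $\ve\lambda^2$, etc., and after substituting $\lambda=\delta^{1/3}$ these produce fractional powers $\delta^{4/3}$, $\ve\delta^{1/3}$, $\ve\delta^{2/3}$ whose $\delta$-derivatives blow up at $\delta=0$.

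There is also a structural inconsistency that should have been a red flag. If $\delta(\ve)$ were a smooth function of $\ve$ vanishing at $0$, say $\delta=\delta_1\ve+\delta_2\ve^2+\cdots$, then $\lambda=\delta^{1/3}=\delta_1^{1/3}\ve^{1/3}(1+(\delta_2/\delta_1)\ve+\cdots)^{1/3}$ would have an expansion containing only the powers $\ve^{1/3},\ve^{4/3},\ve^{7/3},\ldots$\,. But the asymptotic you are trying to prove has a nonzero coefficient on $\ve^{1}$. A smooth $\delta(\ve)$ cannot produce an $\ve$-term in $\lambda$, so your framework cannot deliver the stated expansion.

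The correct device, used in the paper, is to work in the variable $s=\ve^{1/3}$ rather than $\delta=\lambda^3$. One first shows from the explicit formula for $w$ that $f(\lambda,\ve)$ is analytic at $(0,0)$ and that the coefficients of $\ve^0\lambda^j$ vanish for $j=0,1,2$, so that $f=0$ is equivalent to $\lambda^3\sum_{i\ge0}A_i\lambda^i=\ve\sum_{i,j\ge0}B_{ij}\ve^i\lambda^j$ with $A_0,B_{00}\ne0$. Dividing and taking the (analytic) cube root yields $\lambda=\ve^{1/3}G(\lambda,\ve)$ with $G$ analytic and $G(0,0)\ne0$. Viewing the right-hand side as a power series in $(\lambda,s)$ with $s=\ve^{1/3}$, one has an equation $\Phi(\lambda,s)=0$ with $\Phi$ analytic and $\Phi_\lambda(0,0)=1$, so the analytic implicit function theorem gives a unique $\lambda$ as an analytic function of $s=\ve^{1/3}$. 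This permits the full scale of powers $\ve^{1/3},\ve^{2/3},\ve,\ve^{4/3},\ldots$, and the computation then shows the $\ve^{2/3}$ coefficient happens to vanish while the $\ve$ coefficient does not. Your proposal needs to be reorganized along these lines; the rest (existence/uniqueness from the implicit function theorem, computing coefficients by expansion) is sound once the correct analytic variable is used.
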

\begin{proof}
The explicit expression for~$w$ in Lemma~\ref{lem:riccati} 
implies that $w(\lambda,x)$ in Lemma~\ref{lem:riccati} is analytic in both
  variables at $(0,0)$. By the initial condition in~\eqref{eq:wode}, its power series has the form
  \[
    w(\lambda,x) = \frac{\mu-\lambda}{\gamma\sigma^2}
      +\sum_{i=1}^\infty \sum_{j=0}^\infty W_{ij} x^i \lambda^j,
  \]
  where expressions for the coefficients~$W_{ij}$ are computed
  by expanding the explicit expression for~$w$. (The leading terms are provided after this proof.)
  Hence, the left-hand side of the boundary condition~\eqref{eq:wrbd}
  is an analytic function of~$\ve$ and~$\lambda$. Its power series
  expansion shows that the coefficients of $\ve^0\lambda^{j}$ vanish for
  $j=0,1,2$, so that the condition~\eqref{eq:wrbd} reduces to
  \begin{equation}\label{eq:la eq}
    \lambda^3 \sum_{i\geq0} A_i \lambda^i = \ve \sum_{i,j\geq0}
      B_{ij} \ve^i \lambda^j
  \end{equation}
  with (computable) coefficients~$A_i$ and~$B_{ij}$. This equation has to be solved
  for~$\lambda$.
  Since
  \[
    A_0 = \frac{4}{3\mu\sigma^2(\gamma\sigma^2-\mu)} \qquad \text{and}
      \qquad B_{00} = \frac{\mu(\gamma\sigma^2-\mu)}{\gamma^2\sigma^4}
  \]
  are non-zero, divide the equation~\eqref{eq:la eq} by
  $\sum_{i\geq0} A_i \lambda^i$, and take the third root, obtaining that, for some~$C_{ij}$,
  \[
    \lambda = \ve^{1/3} \sum_{i,j\geq0} C_{ij} \ve^i \lambda^j
      = \ve^{1/3} \sum_{i,j\geq0} C_{ij} (\ve^{1/3})^{3i} \lambda^j \, .
  \]
  The right-hand side is an analytic function
  of~$\lambda$ and~$\ve^{1/3}$, so that the implicit function
  theorem~\citep*[Theorem~I.B.4]{MR2568219}
  yields a unique solution~$\lambda$ (for~$\ve$ sufficiently small),
  which is an analytic function of~$\ve^{1/3}$.
  Its power series coefficients can be computed at any order. 
  \end{proof}
  
In the preceding proof we needed the first coefficients of the series expansion of the analytic function on the left-hand side of~\eqref{eq:wrbd}.
Calculating them is elementary, but rather cumbersome, and can be quickly performed with symbolic computation software.
Following a referee's suggestion, we present some expressions to aid readers who wish to check the calculations by hand, namely
the derivatives of $w$ at $(\lambda,x)=(0,0)$ that are needed to calculate the Taylor coefficients of~\eqref{eq:wrbd} used in the proof.
Note that they are the same in all three cases of Lemma~\ref{lem:riccati}:

\begin{align*}
  w_x(0,0) &= -\frac{\mu^2}{\gamma^2\sigma^4}+\frac{\mu}{\gamma\sigma^2}, \quad w_\lambda(0,0) = -\frac{1}{\gamma\sigma^2},\\
  w_{xx}(0,0) &= \frac{2\mu^3}{\gamma^3 \sigma^6}-\frac{3\mu^2}{\gamma^2 \sigma^4}+\frac{\mu}{\gamma \sigma^2}, \quad w_{x\lambda}(0,0) = \frac{2\mu}{\gamma^2 \sigma^4}-\frac{1}{\gamma \sigma^2}, \quad w_{\lambda\lambda}(0,0) = 0, \\
   w_{xxx}(0,0) &= -\frac{6\mu^4}{\gamma^4 \sigma^8} + \frac{2\mu^4}{\gamma^3 \sigma^8}
    +\frac{12\mu^3}{\gamma^3 \sigma^6} - \frac{4\mu^3}{\gamma^2 \sigma^6}
    -\frac{7\mu^2}{\gamma^2 \sigma^4} + \frac{2\mu^2}{\gamma \sigma^4}
    +\frac{\mu}{\gamma \sigma^2}, \\
  w_{xx\lambda}(0,0) &= -\frac{6\mu^2}{\gamma^3 \sigma^6} + \frac{2\mu^2}{\gamma^2 \sigma^6}
    +\frac{6\mu}{\gamma^2 \sigma^4} -\frac{2\mu}{\gamma \sigma^4}
    - \frac{1}{\gamma \sigma^2}, \quad w_{x\lambda\lambda}(0,0) = -\frac{2}{\gamma^2\sigma^4}, \quad w_{\lambda\lambda\lambda}(0,0) = 0. 
\end{align*}

Henceforth, consider small transaction costs $\ve>0$,
and let~$\lambda$ denote the constant in Lemma~\ref{lem:lambda}. Moreover, set $w(y)=w(\lambda,y)$, $a=a(\lambda)$, $b=b(\lambda)$, and $u=u(\lambda)$, $l=l(\lambda)$. In all cases, the function $w$ can be extended smoothly to an open neighborhood of $[0,\log(u/l)]$ (resp.\ $[\log(u/l),0]$ if $\mu/\gamma\sigma^2>1$). By continuity, the ODE \eqref{eq:wode} then also holds at $0$ and $\log(u/l)$; inserting the boundary conditions for $w$ in turn readily yields the following counterparts for the derivative $w'$:

\begin{lemma}\label{lem:smoothpasting}
Let  $0<\mu/\gamma \sigma^2 \neq 1$. Then, in all three cases,
$$w'(0)=\frac{\mu-\lambda}{\gamma\sigma^2}-\left(\frac{\mu-\lambda}{\gamma\sigma^2}\right)^2, \quad w'\left(\log\left(\frac{u}{l}\right)\right)=\frac{\mu+\lambda}{\gamma\sigma^2}-\left(\frac{\mu+\lambda}{\gamma\sigma^2}\right)^2.$$
\end{lemma}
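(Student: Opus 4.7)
The plan is to exploit the fact that both boundary values $w(0)=(\mu-\lambda)/(\gamma\sigma^2)$ and $w(\log(u/l))=(\mu+\lambda)/(\gamma\sigma^2)$ are already pinned down (the first by the initial condition in Lemma~\ref{lem:riccati}, the second by the very equation \eqref{eq:wrbd} defining $\lambda$ in Lemma~\ref{lem:lambda}), and that $w$ satisfies the Riccati ODE \eqref{eq:wode} on the full closed interval. Once these facts are in hand, computing $w'$ at an endpoint is purely algebraic: just solve \eqref{eq:wode} for $w'$ and plug in.

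First I would note that, in each of the three cases of Lemma~\ref{lem:riccati}, the explicit formula defining $w(\lambda,\cdot)$ is real-analytic on an open neighborhood of the closed interval $[0,\log(u/l)]$ (or of $[\log(u/l),0]$ when $\mu/\gamma\sigma^2>1$), since $a(\lambda)\neq 0$ and the arguments of $\tanh$, $\tan$, $\coth$ stay away from singularities for small $\lambda$. Hence the ODE
\[
w'(y)+(1-\gamma)w(y)^2+\left(\frac{2\mu}{\sigma^2}-1\right)w(y)-\frac{\mu^2-\lambda^2}{\gamma\sigma^4}=0,
\]
which holds on the interior, extends by continuity of $w$, $w'$ to the endpoints.

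Next I would evaluate at $y=0$. Writing $\pi_-=(\mu-\lambda)/(\gamma\sigma^2)$ and $\pi_+=(\mu+\lambda)/(\gamma\sigma^2)$, so that $2\mu/\sigma^2=\gamma(\pi_-+\pi_+)$ and $(\mu^2-\lambda^2)/(\gamma\sigma^4)=\gamma\pi_-\pi_+$, the ODE at $0$ with $w(0)=\pi_-$ gives
\[
w'(0)=-(1-\gamma)\pi_-^2-(\gamma(\pi_-+\pi_+)-1)\pi_-+\gamma\pi_-\pi_+=\pi_--\pi_-^2,
\]
after the $\gamma\pi_-\pi_+$ terms cancel. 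This is the first claim. The identical calculation at $y=\log(u/l)$, using the boundary condition $w(\log(u/l))=\pi_+$ supplied by \eqref{eq:wrbd}, yields $w'(\log(u/l))=\pi_+-\pi_+^2$, which is the second claim.

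There is really no substantial obstacle; the only thing to verify carefully is that the boundary condition at $\log(u/l)$ is precisely the equation defining $\lambda$, so that invoking it is legitimate rather than circular. Everything else is a one-line algebraic rearrangement, and the case distinction in Lemma~\ref{lem:riccati} does not enter because all three branches satisfy the same ODE with the same initial condition, so the resulting formulas for $w'$ at the endpoints are automatically case-independent.
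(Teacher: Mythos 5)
Your proof is correct and follows the same route as the paper: extend $w$ smoothly to a neighborhood of the closed interval so the Riccati ODE~\eqref{eq:wode} holds at the endpoints, then solve for $w'$ there and substitute the known boundary values $w(0)=\pi_-$ and $w(\log(u/l))=\pi_+$. The algebra checks out, and your explicit remark that the condition at $\log(u/l)$ is exactly the defining equation for $\lambda$ from Lemma~\ref{lem:lambda} is a sensible clarification that the paper leaves implicit.
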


\section{Shadow Prices and Verification}

The key to justify the heuristic arguments of Section \ref{sec:heur} is to reduce the portfolio choice problem with transaction costs to another portfolio choice problem, without transaction costs. Here, the bid and ask prices are replaced by a single {shadow} price $\tilde{S}_t$, evolving within the bid-ask spread, which coincides with either price at times of trading, and yields the same optimal policy and utility. Evidently, \emph{any} frictionless market extension with values in the bid-ask spread leads to more favorable terms of trade than the original market with transaction costs. To achieve equality, the particularly unfavorable shadow price must match the trading prices whenever its optimal policy transacts.

\begin{definition}\label{defi:shadow}
A \emph{shadow price} is a frictionless price process $\tilde{S}_t$, evolving within the bid-ask spread ($(1-\ve)S_t \le \tilde S_t \le S_t$ a.s.), such that there is an optimal strategy for $\tilde{S}_t$ which is of finite variation, and entails buying only when the shadow price $\tilde{S}_t$ equals the ask price $S_t$, and selling only when $\tilde{S}_t$ equals the bid price $(1-\ve)S_t$.
\end{definition}

Once a candidate for such a shadow price is identified, long-run verification results for frictionless models (cf.~\citet*{guasoni.robertson.10}) deliver the optimality of the guessed policy. Further, this method provides explicit upper and lower bounds on finite-horizon performance (cf.\ Lemma \ref{lemfinite} below), thereby allowing to check whether the long-run optimal strategy is approximately optimal for an horizon $T$. Put differently, it shows which horizons are long enough.

\subsection{Derivation of a Candidate Shadow Price}\label{subsec:sm}

With a smooth candidate value function at hand, a candidate shadow price can be identified as follows. By definition, trading the shadow price should not allow the investor to outperform the original market with transaction costs. In particular, if $\tilde{S}_t$ is the value of the shadow price at time $t$, then allowing the investor to carry out at single trade at time $t$ at this \emph{frictionless} price should not lead to an increase in utility. A trade of $\nu$ risky shares at the frictionless price $\tilde{S}_t$ moves the investor's safe position $X_t$ to $X_t-\nu \tilde{S}_t$ and her risky position (valued at the ask price $S_t$) from $Y_t$ to $Y_t+\nu S_t$. Then -- recalling that the second and third arguments of the candidate value function $V$ from the Section \ref{sec:heur} were precisely the investor's safe and risky positions -- the requirement that such a trade does not increase the investor's utility is tantamount to:
$$V(t,X_t-\nu \tilde{S}_t,Y_t+\nu S_t) \leq V(t,X_t,Y_t), \quad \forall \nu \in \mathbb{R}.$$
A Taylor expansion of the left-hand side for small $\nu$ then implies that $-\nu \tilde{S}_t V_x+\nu S_t V_y \leq 0$. Since this inequality has to hold both for positive and negative values of $\nu$, it yields
\begin{equation}\label{eq:mrs}
\tilde{S}_t=\frac{V_y}{V_x} S_t.
\end{equation}
That is, the multiplicative deviation of the  shadow price from the ask price should be the marginal rate of substitution of risky for safe assets. In particular, this argument immediately yields a candidate shadow price, once a smooth candidate value function has been identified. For the long-run problem, we derived the following candidate value function in the previous section: 
$$V(t,X_t,Y_t)=e^{-(1-\gamma)(r+\beta)t}(X_t)^{1-\gamma} e^{(1-\gamma)\int_0^{\log(Y_t/lX_t)}w(y)dy}.$$
Using this equality to calculate the partial derivatives in \eqref{eq:mrs}, the candidate shadow price becomes:
\begin{equation}\label{eq:defshadow}
\tilde{S}_t=\frac{w(\Upsilon_t)}{le^{\Upsilon_t}(1-w(\Upsilon_t))}S_t,
\end{equation}
where $\Upsilon_t=\log(Y_t/lX_t)$ denotes the logarithm of the risky-safe ratio, centered at its value at the lower buying boundary $l$. If this candidate is indeed the right one, then its optimal strategy and value function should coincide with their frictional counterparts derived heuristically above. In particular, the optimal risky fraction $\tilde{\pi}_t$ should correspond to the same numbers $\varphi^0_t$ and $\varphi_t$ of safe and risky shares, if measured in terms of $\tilde{S}_t$ instead of the ask price $S_t$. As a consequence:
\begin{equation}\label{eq:tildew}
\tilde{\pi}_t=\frac{\varphi_t \tilde{S}_t}{\varphi^0_tS^0_t+\varphi_t\tilde{S}_t}=\frac{\varphi_t S_t \frac{w(\Upsilon_t)}{le^{\Upsilon_t}(1-w(\Upsilon_t))}}{\varphi^0_t S^0_t +\varphi_t S_t \frac{w(\Upsilon_t)}{le^{\Upsilon_t}(1-w(\Upsilon_t))}}=\frac{\frac{w(\Upsilon_t)}{1-w(\Upsilon_t)}}{1+\frac{w(\Upsilon_t)}{1-w(\Upsilon_t)}}=w(\Upsilon_t),
\end{equation}
where, for the third equality, we have used that the risky-safe ratio $\varphi_t S_t/\varphi^0_t S^0_t$ can be written as $l e^{\Upsilon_t}$ by the definition of $\Upsilon_t$. We now turn to the corresponding frictionless value function $\tilde{V}$. By the definition of a shadow price, it should coincide with its frictional counterpart $V$. In the frictionless case, it is more convenient to factor out the total wealth $\tilde{X}_t=\varphi^0_t S^0_t+\varphi_t \tilde{S}_t$ (in terms of the frictionless risky price $\tilde{S}_t$) instead of the safe position $X_t=\varphi^0_t S^0_t$, giving
$$\tilde{V}(t,\tilde{X}_t,\Upsilon_t)=V(t,X_t,Y_t)=e^{-(1-\gamma)(r+\beta)t} \tilde{X}_t^{1-\gamma} \left(\frac{X_t}{\tilde{X}_t}\right)^{1-\gamma} e^{(1-\gamma)\int_0^{\Upsilon_t}w(y)dy}.$$
Since $X_t/\tilde{X}_t=1-w(\Upsilon_t)$ by definition of $\tilde{S}_t$ and $\Upsilon_t$, one can rewrite the last two factors as
\begin{align*}
\left(\frac{X_t}{\tilde{X}_t}\right)^{1-\gamma} e^{(1-\gamma)\int_0^{\Upsilon_t}w(y)dy}&=\exp\left((1-\gamma)\left[\log(1-w(\Upsilon_t))+\int_0^{\Upsilon_t} w(y)dy\right]\right)\\
&=(1-w(0))^{\gamma-1} \exp\left((1-\gamma)\int_0^{\Upsilon_t} \left(w(y)-\frac{w'(y)}{1-w(y)}\right) dy\right).
\end{align*}
Then, setting $\tilde{w}=w-\frac{w'}{1-w}$, the candidate long-run value function for $\tilde{S}$ becomes
$$\tilde{V}(t,\tilde{X}_t,\Upsilon_t)=e^{-(1-\gamma)(r+\beta)t} \tilde{X}_t^{1-\gamma} e^{(1-\gamma)\int_0^{\Upsilon_t}\tilde{w}(y)dy}(1-w(0))^{\gamma-1}.$$
Starting from the candidate value function and optimal policy for $\tilde{S}$, we can now proceed to verify that they are indeed optimal for $\tilde{S}_t$, by adapting the argument from \cite{guasoni.robertson.10}. But before we do that, we have to construct the respective processes.

\subsection{Construction of the Shadow Price}

The above heuristic arguments suggest that the optimal ratio $Y_t/X_t=\varphi_t S_t/\varphi^0_t S^0_t$ should take values in the interval $[l,u]$. Hence, $\Upsilon_t=\log(Y_t/lX_t)$ should be $[0,\log(u/l)]$-valued if the lower trading boundary $l$ for the ratio $Y_t/X_t$ is positive. If the investor shorts the safe asset to leverage her risky position, the ratio becomes negative. In the frictionless case, and also for small transaction costs, this happens if risky weight $\mu/\gamma\sigma^2$ is bigger than $1$. Then, the trading boundaries $l \leq u$ are both negative, so that the centered log-ratio $\Upsilon_t$ should take values in $[\log(u/l),0]$. In both cases, trading should only take place when the risky-safe ratio reaches the boundaries of this region. Hence, the numbers of safe and risky units $\varphi^0_t$ and $\varphi_t$ should remain constant and $\Upsilon_t=\log(\varphi_t/l\varphi^0_t)+\log(S_t/S^0_t)$ should follow a Brownian motion with drift as long as $\Upsilon_t$ moves in $(0,\log(u/l))$ (resp.\ in $(\log(u/l),0)$ if $\mu/\gamma\sigma^2>1$). This argument motivates the \emph{definition} of the process $\Upsilon_t$ as reflected Brownian motion:
\begin{equation}\label{eq:reflected}
d\Upsilon_t=(\mu-\sigma^2/2)dt+\sigma dW_t+dL_t-dU_t, \quad \Upsilon_0 \in [0,\log(u/l)],
\end{equation}
for continuous, adapted local time processes $L$ and $U$ which are nondecreasing (resp.\ nonincreasing if $\mu/\gamma\sigma^2>1$) and increase (resp.\ decrease if $\mu/\gamma\sigma^2>1$) only on the sets $\{\Upsilon_t=0\}$ and $\{\Upsilon_t=\log(u/l)\}$, respectively. Starting from this process, the existence of which is a classical result of \cite{skorokhod.61}, the process $\tilde{S}$ is defined in accordance with \eqref{eq:defshadow}:

\begin{lemma}\label{lem:dynamics}
Define
\begin{equation}\label{eq:jump}
\Upsilon_0=\begin{cases} 0, &\mbox{if } l\xi^0S^0_0 \geq \xi S_0,\\  \log(u/l), &\mbox{if } u\xi^0 S^0_0 \leq \xi S_0,\\ \log[(\xi S_0/\xi^0 S^0_0)/l], &\mbox{otherwise,}  \end{cases}
\end{equation}
and let $\Upsilon$ be defined as in~\eqref{eq:reflected}, starting at $\Upsilon_0$. Then, $\tilde S = S \frac{w(\Upsilon)}{l e^{\Upsilon} (1-w(\Upsilon))}$, with $w$ as in Lemma~\ref{lem:riccati}, has the dynamics
\begin{displaymath}
d\tilde S_t/\tilde S_t= \left(\tilde{\mu}(\Upsilon_t)+r\right)d t+ \tilde{\sigma}(\Upsilon_t)d W_t,
\end{displaymath}
where $\tilde \mu(\cdot)$ and $\tilde \sigma (\cdot)$ are defined as
\begin{displaymath}
\tilde{\mu}(y) = \frac{\sigma^2w'(y)}{w(y)(1-w(y))}\left(\frac{w'(y)}{1-w(y)}-(1-\gamma)w(y)\right), \quad \tilde{\sigma}(y) = \frac{\sigma w'\left(y\right)}{w(y)(1-w(y))}.
\end{displaymath}
Moreover, the process $\tilde S$ takes values within the bid-ask spread $[(1-\varepsilon)S,S]$.
\end{lemma}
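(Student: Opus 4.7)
My plan is to treat this as a direct It\^o computation, writing $\tilde S_t = S_t\,g(\Upsilon_t)$ with $g(y) = w(y)/(l e^y(1-w(y)))$, and reducing everything to properties of $w$ that are already recorded in Lemmas \ref{lem:riccati}, \ref{lem:lambda} and \ref{lem:smoothpasting}. The first step is to evaluate $g$ at the reflection boundaries of $\Upsilon$. From $w(0) = (\mu-\lambda)/\gamma\sigma^2$ and the definition of $l$ one checks $l(1-w(0)) = w(0)$, so $g(0)=1$; similarly, the terminal condition $w(\log(u/l)) = (\mu+\lambda)/\gamma\sigma^2$ from Lemma \ref{lem:lambda} together with the definition of $u$ gives $u(1-w(\log(u/l))) = w(\log(u/l))/(1-\varepsilon)$, whence $g(\log(u/l)) = 1-\varepsilon$. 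The initialization \eqref{eq:jump} is then just the formula that places $\Upsilon_0$ either at a boundary (if the endowment already lies outside the no-trade region) or at the interior point matching the ratio $\xi S_0/\xi^0 S_0^0$, so that $\tilde S_0$ is consistent with the initial position.

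Next, I would apply It\^o's formula to $\tilde S_t = S_t g(\Upsilon_t)$, combining the geometric Brownian motion of $S$ with the reflected SDE \eqref{eq:reflected} for $\Upsilon$. A short logarithmic-derivative calculation yields
\begin{equation*}
\frac{g'(y)}{g(y)} = \frac{w'(y)}{w(y)(1-w(y))} - 1,
\end{equation*}
so that the Brownian coefficient of $d\tilde S_t/\tilde S_t$ is $\sigma(1+g'/g) = \sigma w'/(w(1-w))$, immediately matching $\tilde\sigma(y)$. The crucial point is that the singular parts of $d\Upsilon$ enter the dynamics of $\tilde S$ multiplied by $g'(\Upsilon_t)/g(\Upsilon_t)$, and this quantity vanishes exactly where $L$ and $U$ can grow: by Lemma \ref{lem:smoothpasting} one has $w'(0) = w(0)(1-w(0))$ and $w'(\log(u/l)) = w(\log(u/l))(1-w(\log(u/l)))$, which forces $g'(0) = 0 = g'(\log(u/l))$. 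Hence the $dL$ and $dU$ terms drop out, and $\tilde S$ is an honest It\^o diffusion driven by $W$ alone.

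The remaining It\^o terms produce the drift $\mu + (\mu + \sigma^2/2)(g'/g) + (\sigma^2/2)(g''/g)$, and it must be collapsed to the claimed $\tilde\mu(y)$. I would write $g''/g = (g'/g)' + (g'/g)^2$, substitute the expression for $g'/g$ above, and eliminate $w''$ via the once-differentiated Riccati ODE \eqref{eq:wode}, namely $w'' = \bigl(2(\gamma-1)w - (2\mu/\sigma^2-1)\bigr)w'$; the terms then organize into $\tilde\mu(y) = \frac{\sigma^2 w'}{w(1-w)}\bigl[\frac{w'}{1-w} - (1-\gamma)w\bigr]$. This algebraic collapse, rather than any probabilistic subtlety, is the main obstacle: it is routine but error-prone, and a clean presentation should isolate it, for instance by first factoring out $w'/(w(1-w))$. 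Finally, the bid-ask inclusion $(1-\varepsilon)S_t \le \tilde S_t \le S_t$ reduces to showing $g$ maps $[0,\log(u/l)]$ into $[1-\varepsilon,1]$ (with the analogous interval in the leveraged case $\mu/\gamma\sigma^2 > 1$): the boundary values are already in hand, and monotonicity of $g$ on the interior follows from the monotonicity of $w$ stated in Lemma \ref{lem:riccati} together with the sign of $w'/(w(1-w)) - 1$ read off the explicit formula for $w$.
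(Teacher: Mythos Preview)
Your proposal is correct and matches the paper's proof almost step for step: It\^o's formula applied to $S_t g(\Upsilon_t)$, the differentiated Riccati identity $w'' = (2(\gamma-1)w - (2\mu/\sigma^2-1))w'$ to collapse the drift, and the smooth-pasting conditions $w'=w(1-w)$ at both endpoints to kill the local-time contributions. The one place where the paper's execution differs is the monotonicity of $g$: rather than appealing to the explicit formula, the paper uses the ODE~\eqref{eq:wode} to rewrite $w'-w(1-w) = \gamma\bigl(w-\tfrac{\mu}{\gamma\sigma^2}\bigr)^2 - \tfrac{\lambda^2}{\gamma\sigma^4}$, which is manifestly nonpositive because $w$ stays in $[\pi_-,\pi_+]$---this is both cleaner and more robust than reading the sign off the trigonometric/hyperbolic expressions.
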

Note that the first two cases in~\eqref{eq:jump} arise if the initial risky-safe ratio $\xi S_0/(\xi^0 S_0^0)$ lies outside of the interval $[l,u]$. Then, a jump from the initial position $(\varphi_{0^-}^0, \varphi_{0^-}) = (\xi^0,\xi)$ to the nearest boundary value of $[l,u]$ is required. This transfer requires the purchase resp. sale of the risky asset and hence the initial price $\tilde S _0$ is defined to match the buying resp.\ selling price of the risky asset.

\begin{proof}
The dynamics of $\tilde S_t$ result from It\^{o}'s formula, the dynamics of $\Upsilon_t$, and the identity
\begin{equation}\label{w2ableitung}
w''(y) = 2(\gamma-1)w'(y) w(y)- (2\mu/\sigma^2-1) w'(y),
\end{equation}
obtained by differentiating the ODE \eqref{eq:wode} for $w$ with respect to $y$. Therefore it remains to show that $\tilde{S}_t$ indeed takes values in the bid-ask spread $[(1-\ve)S_t,S_t]$. To this end, notice that -- in view of the ODE \eqref{eq:wode} for $w$ -- the derivative of the function $g(y):=w(y)/l e^y (1-w(y))$ is given by
$$g'(y)=\frac{w'(y)-w(y)+w^2(y)}{le^y (1-w(y))^2}=\frac{\gamma(w^2-2\frac{\mu}{\gamma\sigma^2} w)+(\mu^2-\lambda^2)/\gamma\sigma^4}{le^y (1-w(y))^2}.$$
Due to the boundary conditions for $w$, the function $g'$ vanishes at $0$ and $\log(u/l)$. Differentiating its numerator gives $2\gamma w'(y)(w(y)-\frac{\mu}{\gamma\sigma^2})$. For $\frac{\mu}{\gamma\sigma^2} \in (0,1)$ (resp.\ $\frac{\mu}{\gamma\sigma^2}>1$), $w$ is increasing from $\frac{\mu-\lambda}{\gamma\sigma^2}<\frac{\mu}{\gamma\sigma^2}$ to $\frac{\mu+\lambda}{\gamma\sigma^2}>\frac{\mu}{\gamma\sigma^2}$ on $[0,\log(u/l)]$ (resp.\ decreasing from $\frac{\mu+\lambda}{\gamma\sigma^2}$ to $\frac{\mu-\lambda}{\gamma\sigma^2}$ on $[\log(u/l),0]$); hence, $w'$ is nonnegative (resp.\ nonpositive). Moreover, $g'$ starts at zero for $y=0$ (resp.\ $\log(u/l)$), then decreases (resp.\ increases), and eventually starts increasing (resp.\ decreasing) again, until it reaches level zero again for $y=\log(u/l)$ (resp.\ $y=0$). In particular, $g'$ is nonpositive (resp.\ nonnegative), so that $g$ is decreasing on $[0,\log(u/l)]$ (resp.\ increasing on $[\log(u/l),0]$ for $\frac{\mu}{\gamma\sigma^2}>1$). Taking into account that $g(0)= 1$ and $g(\log(u/l))=1-\varepsilon$, by the boundary conditions for $w$ and the definition of $u$ and $l$ in Lemma \ref{lem:lambda}, the proof is now complete.
\end{proof}

\subsection{Verification}

The long-run optimal portfolio in the frictionless ``shadow market'' with price process $\tilde{S}_t$ can now be determined by adapting the argument in \citet*{guasoni.robertson.10}. The first step is to determine finite-horizon bounds, which provide upper and lower estimates for the maximal expected utility on any finite horizon $T$:

\begin{lemma}\label{lemfinite}
For a fixed time horizon $T>0$, let $\beta = \frac{\mu^2-\lambda^2}{2\gamma\sigma^2}$ and let the function $w$ be defined as in Lemma~\ref{lem:riccati}. Then, for the the shadow payoff $\tilde X_T$  corresponding to the risky fraction $\tilde \pi(\Upsilon_t) = w(\Upsilon_t)$ and the shadow discount factor $\tilde M_T=e^{-rT}\cale(-\int_0^\cdot \frac\tilmu\tilsigma dW_t)_T$ , the following bounds hold true: 
\begin{align}
E [\tilde X _T^{1-\gamma}] &= \tilde X_0 ^{1-\gamma} e^{(1-\gamma)(r+\beta)T}\hat{E}[e^{(1-\gamma)\left(\tilde q (\Upsilon_0)- \tilde q (\Upsilon_T) \right)}],\label{primbound}\\
E\left[\tilde M _T^{1-\frac{1}{\gamma}}\right]^{\gamma} &= e^{(1-\gamma)(r+\beta)T}\hat{E}\left[e^{(\frac{1}{\gamma}-1)\left(\tilde q (\Upsilon_0)- \tilde q (\Upsilon_T) \right)}\right]^{\gamma}, \label{dualbound}
\end{align}
where $\tilde q (y) := \int_0^y (w(z)-\frac{w'(z)}{1-w(z)}) dz$ and $\hat{E} \left[\cdot\right]$ denotes the expectation with respect to the \emph{myopic probability} $\hat{P}$, defined by
\begin{displaymath}
\frac{d \hat{P}}{dP} = \exp\left(\int_0^T \left(-\frac{\tilde \mu(\Upsilon_t)}{\tilde \sigma(\Upsilon_t)} + \tilde \sigma(\Upsilon_t) \tilde \pi(\Upsilon_t)\right) d W_t - \frac{1}{2}\int_{0}^T\left(-\frac{\tilde \mu(\Upsilon_t)}{\tilde \sigma(\Upsilon_t)}+ \tilde \sigma(\Upsilon_t) \tilde \pi(\Upsilon_t)\right)^2 d t\right).
\end{displaymath}
\end{lemma}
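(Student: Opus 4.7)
The plan is to write both $\tilde X_T^{1-\gamma}$ and $\tilde M_T^{1-1/\gamma}$ as explicit stochastic exponentials, then split each Brownian integral into two pieces: the log-density $\log(d\hat P/dP)$ of the myopic measure, and a boundary increment of $\tilde q(\Upsilon)$ obtained by It\^o's formula. The key algebraic simplifications come from Lemma~\ref{lem:dynamics}, which after a short computation yields the two identities $w\tilde\sigma = \sigma(w-\tilde w)$ and $\tilde\mu/\tilde\sigma = \sigma(\gamma w - \tilde w)$. These immediately give $-\tilde\mu/\tilde\sigma + \tilde\sigma\,\tilde\pi = \sigma(1-\gamma)w$, so that the density of $\hat P$ collapses to $d\hat P/dP = \exp\bigl((1-\gamma)\sigma\int_0^T w\,dW_t - \tfrac{1}{2}(1-\gamma)^2\sigma^2\int_0^T w^2\,dt\bigr)$.

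For the primal identity~\eqref{primbound}, starting from the self-financing equation $d\tilde X_t/\tilde X_t = (r + w\tilde\mu)\,dt + w\tilde\sigma\,dW_t$, the exponent of $\tilde X_T^{1-\gamma}/\tilde X_0^{1-\gamma}$ decomposes, using $w\tilde\sigma = \sigma w - \sigma\tilde w$, into an ``$\sigma w$'' Brownian term, absorbed into $\log(d\hat P/dP)$, and an ``$\sigma\tilde w$'' Brownian term that is handled by It\^o. The crucial step is to apply It\^o to $\tilde q(\Upsilon_T) - \tilde q(\Upsilon_0) = \int_0^T \tilde w(\Upsilon_t)\,d\Upsilon_t + \tfrac{\sigma^2}{2}\int_0^T \tilde w'(\Upsilon_t)\,dt$ using the reflected Brownian dynamics of $\Upsilon$. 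Here Lemma~\ref{lem:smoothpasting} is decisive: evaluating $\tilde w = w - w'/(1-w)$ at the endpoints gives $\tilde w(0) = \tilde w(\log(u/l)) = 0$, which eliminates the local-time contributions $\int \tilde w\,dL$ and $\int \tilde w\,dU$. This produces a clean expression for $\sigma\int_0^T \tilde w\,dW_t$ in terms of $\tilde q(\Upsilon_T) - \tilde q(\Upsilon_0)$ plus deterministic integrals in $\tilde w, \tilde w'$.

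Substituting back, the exponent of $\tilde X_T^{1-\gamma}/\tilde X_0^{1-\gamma}$ regroups as $\log(d\hat P/dP) + (1-\gamma)\bigl(\tilde q(\Upsilon_0) - \tilde q(\Upsilon_T)\bigr) + (1-\gamma)\int_0^T \mathcal I(\Upsilon_t)\,dt$, where $\mathcal I$ is a pointwise polynomial expression in $w, w', w''$. The main obstacle, and the pedestrian part, is the identity $\mathcal I \equiv r + \beta$ with $\beta = (\mu^2-\lambda^2)/(2\gamma\sigma^2)$, which I expect to verify by substituting the Riccati ODE~\eqref{eq:wode} together with its derivative~\eqref{w2ableitung}: after substitution all $w'$ and $w''$ contributions cancel and the remainder collapses to the claimed constant. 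Taking the $P$-expectation then converts $d\hat P/dP$ into a genuine change of measure and gives~\eqref{primbound}. The dual identity~\eqref{dualbound} is proved by the same three-step scheme applied to $\log \tilde M_t = -rt - \tfrac{1}{2}\int_0^t(\tilde\mu/\tilde\sigma)^2\,ds - \int_0^t(\tilde\mu/\tilde\sigma)\,dW_s$: using $\tilde\mu/\tilde\sigma = \sigma(\gamma w - \tilde w)$ the Brownian integral again splits into the log-density of $\hat P$ and a $\sigma\tilde w$-piece handled by It\^o on $\tilde q$, the boundary cancellation $\tilde w(0)=\tilde w(\log(u/l))=0$ recurs verbatim, and the residual deterministic integrand is again forced to $r+\beta$ by the same Riccati algebra. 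Raising the resulting expectation to the $\gamma$-th power produces the claimed form.
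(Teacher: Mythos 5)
Your proposal is correct and follows essentially the same route as the paper's proof: write the stochastic exponential for the shadow payoff (resp.\ discount factor), factor out $d\hat P/dP$, apply It\^o to $\tilde q(\Upsilon)$ using the boundary cancellation $\tilde w(0)=\tilde w(\log(u/l))=0$ to kill the local-time terms, and collapse the residual deterministic integrand to $r+\beta$ via the Riccati ODE and its derivative. The only cosmetic difference is that you package the algebra into the two identities $w\tilde\sigma=\sigma(w-\tilde w)$ and $\tilde\mu/\tilde\sigma=\sigma(\gamma w-\tilde w)$, which the paper carries out by directly substituting the explicit formulas for $\tilde\mu$ and $\tilde\sigma$; both computations are equivalent.
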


\begin{proof}
First note that $\tilmu, \tilsigma$, and $w$ are functions of $\Upsilon_t$, but the argument is omitted throughout to ease notation. Now, to prove \eqref{primbound}, notice that the frictionless shadow wealth process $\tilde X_t$ with dynamics $\frac{d\tilde{X}_t}{\tilde{X}_t}=w \frac{d\tilde{S}_t}{\tilde{S}_t}+(1-w)\frac{dS^0_t}{S^0_t}$ satisfies:
\begin{equation*}
\tilde X_T^{1-\gamma}=
\tilde{X}_0^{1-\gamma} e^{(1-\gamma)\int_0^T (r+\tilmu w -\frac{\tilsigma^2}{2}w^2) dt
+(1-\gamma)\int_0^T \tilsigma w dW_t}.
\end{equation*}
Hence:
\begin{align*}
\tilde X_T^{1-\gamma} =& \tilde{X}_0^{1-\gamma}\frac{d\hat P}{dP}
e^{\int_0^T ((1-\gamma)(r+\tilmu w -\frac{\tilsigma^2}{2}w^2)
+\frac12(-\frac{\tilmu}{\tilsigma}+\tilsigma w)^2)dt+\int_0^T ((1-\gamma)\tilsigma w-(-\frac{\tilmu}{\tilsigma}+\tilsigma w)) dW_t}.
\end{align*}
Inserting the definitions of $\tilmu$ and $\tilsigma$, the second integrand simplifies to $(1-\gamma)\sigma(\frac{w'}{1-w}-w)$. Similarly, the first integrand reduces to
$(1-\gamma)(r+\frac{\sigma^2}{2}(\frac{w'}{1-w})^2-(1-\gamma)\sigma^2 \frac{w' w}{1-w} +(1-\gamma)\frac{\sigma^2}{2}w^2)$. In summary:
\begin{equation}\label{secrep}
\tilde X_T^{1-\gamma} = \tilde{X}_0^{1-\gamma}\frac{d\hat P}{dP}
e^{(1-\gamma)\int_0^T (r+\frac{\sigma^2}{2}(\frac{w'}{1-w})^2-(1-\gamma)\sigma^2 \frac{w' w}{1-w} +(1-\gamma)\frac{\sigma^2}{2}w^2) dt + (1-\gamma)\int_0^T \sigma(\frac{w'}{1-w}-w) dW_t}.
\end{equation}
The boundary conditions for $w$ and  $w'$ imply $w(0)-\frac{w'(0)}{1-w(0)}=w(\log(u/l))-\frac{w'(\log(u/l))}{1-w(\log(u/l))}=0$; hence, It\^o's formula yields that the local time terms vanish in the dynamics of $\tilde{q}(\Upsilon_t)$:
\begin{equation}\label{eq:subs}
\tilde{q}(\Upsilon_T)-\tilde{q}(\Upsilon_0)=
\int_0^T \left(\mu-\tfrac{\sigma^2}{2}\right) \left(w-\tfrac{w'}{1-w}\right)+\tfrac{\sigma^2}{2}\left(w'-\tfrac{w''(1-w)+w'^2}{(1-w)^2}\right)dt+\int_0^T \sigma \left(w-\tfrac{w'}{1-w}\right)dW_t.  
\end{equation}
Substituting the second derivative $w''$ according to the ODE \eqref{w2ableitung} and using the resulting identity to replace the stochastic integral in \eqref{secrep} yields
\begin{align*}
\tilde X_T^{1-\gamma} =& \tilde{X}_0^{1-\gamma}\frac{d\hat P}{dP}
e^{(1-\gamma)\int_0^T (r+\frac{\sigma^2}{2}w'+(1-\gamma)\frac{\sigma^2}{2}w^2+(\mu-\frac{\sigma^2}{2})w)dt} e^{(1-\gamma)(\tilde{q}(\Upsilon_0)-\tilde{q}(\Upsilon_T))}.
\end{align*}
After inserting the ODE \eqref{eq:wode} for $w$, the first bound thus follows by taking the expectation.

The argument for the second bound is similar. Plugging in the definitions of $\tilmu$ and $\tilsigma$, the shadow discount factor $\tilde{M}_T=e^{-rT}\cale(-\int_0^\cdot \frac\tilmu\tilsigma dW)_T$ and the myopic probability $\hat P$ satisfy:
\begin{align*}
\tilde{M}_T^{1-\frac1\gamma}&=e^{\frac{1-\gamma}{\gamma}\int_0^T \frac{\tilmu}{\tilsigma}dW_t+\frac{1-\gamma}{\gamma}\int_0^T (r+\frac{\tilmu^2}{2\tilsigma^2})dt}\\
&=\frac{d\hat{P}}{dP} e^{\frac{1-\gamma}{\gamma}\int_0^T (\frac{\tilmu}{\tilsigma}-\frac{\gamma}{1-\gamma}(-\frac{\tilmu}{\tilsigma}+\tilsigma w))dW_t+\frac{1-\gamma}{\gamma}\int_0^T(r+\frac{\tilmu^2}{2\tilsigma^2}+\frac{\gamma}{2(1-\gamma)}(-\frac{\tilmu}{\tilsigma}+\tilsigma w)^2)dt}\\
&= \frac{d\hat{P}}{dP} e^{\frac{1-\gamma}{\gamma}\int_0^T \sigma(\frac{w'}{1-w}-w)dW_t+\frac{1-\gamma}{\gamma}\int_0^T(r+\frac{\sigma^2}{2}(\frac{w'}{1-w})^2-(1-\gamma)\sigma^2\frac{w' w}{1-w}+(1-\gamma)\frac{\sigma^2}{2}w^2)dt}.
\end{align*}
Again replace the stochastic integral using \eqref{eq:subs} and the ODE \eqref{w2ableitung}, obtaining
$$ \tilde{M}_T^{1-\frac1\gamma}=\frac{d\hat{P}}{dP} e^{\frac{1-\gamma}{\gamma}\int_0^T (r+\frac{\sigma^2}{2}w'+(1-\gamma)\frac{\sigma^2}{2}w^2+(\mu-\frac{\sigma^2}{2})w)dt}e^{\frac{1-\gamma}{\gamma}(\tilde{q}(\Upsilon_0)-\tilde{q}(\Upsilon_T))}.$$
Inserting the ODE \eqref{eq:wode} for $w$, taking the expectation, and raising it to power $\gamma$, the second bound follows.
\end{proof}

With the finite horizon bounds at hand, it is now straightforward to establish that the policy $\tilde{\pi}(\Upsilon_t)$ is indeed long-run optimal in the frictionless market with price $\tilde{S}_t$.

\begin{lemma}\label{lem:ergodic}
Let $0<\mu/\gamma \sigma^2 \neq 1$ and let $w$ be defined as in Lemma \ref{lem:riccati}.  Then, the risky weight $\tilde{\pi}(\Upsilon_t)=w(\Upsilon_t)$ is long-run optimal with equivalent safe rate $r+\beta$ in the frictionless market with price process $\tilde{S}_t$. The corresponding wealth process (in terms of $\tilde{S}_t$), and the numbers of safe and risky units are given by
\begin{align*}
\tilde{X}_t&=(\xi^0 S^0_0+\xi \tilde{S}_0)\mathcal{E}\left(\int_0^\cdot (r+w(\Upsilon_s) \tilde{\mu}(\Upsilon_s))ds+\int_0^\cdot w(\Upsilon_s)\tilde{\sigma}(\Upsilon_s)dW_s\right)_t, \\
\varphi_{0^-}&=\xi, \quad \varphi_t=w(\Upsilon_t)\tilde{X}_t/\tilde{S}_t \quad \mbox{for }t\geq 0,\\
\varphi^0_{0^-}&=\xi^0, \quad \varphi^0_t=(1-w(\Upsilon_t))\tilde{X}_t/S^0_t \quad \mbox{for }t\geq 0.
\end{align*}
\end{lemma}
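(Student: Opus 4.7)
The plan is to establish the long-run optimality by matching the finite-horizon primal and dual bounds of Lemma~\ref{lemfinite}, exploiting the fact that the driving state $\Upsilon_t$ is reflected in a compact interval, so that the correction terms involving $\tilde q(\Upsilon_T)-\tilde q(\Upsilon_0)$ contribute only $O(1)$ to the log-expectations.

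First I would verify the explicit formulas for the candidate strategy. In the frictionless shadow market, a self-financing wealth process with risky fraction $\tilde\pi(\Upsilon_t)=w(\Upsilon_t)$ obeys
\begin{equation*}
\frac{d\tilde X_t}{\tilde X_t}=w(\Upsilon_t)\frac{d\tilde S_t}{\tilde S_t}+(1-w(\Upsilon_t))\frac{dS^0_t}{S^0_t}=\bigl(r+w(\Upsilon_t)\tilde\mu(\Upsilon_t)\bigr)dt+w(\Upsilon_t)\tilde\sigma(\Upsilon_t)dW_t,
\end{equation*}
which integrates to the stated stochastic exponential with $\tilde X_0=\xi^0 S^0_0+\xi\tilde S_0$. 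The formulas for $\varphi_t$ and $\varphi^0_t$ then follow from the definition of the risky weight, noting that the initial jump prescribed by~\eqref{eq:jump} is executed frictionlessly because, by Lemma~\ref{lem:dynamics}, $\tilde S_0$ coincides with the relevant trading price whenever $(\xi^0,\xi)$ lies outside the no-trade region.

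For the lower bound on the equivalent safe rate of the candidate, I apply the primal bound~\eqref{primbound}. Since $\Upsilon_t$ is reflected in the compact interval with endpoints $0$ and $\log(u/l)$ and $\tilde q$ is continuous there, the quantity $\tilde q(\Upsilon_0)-\tilde q(\Upsilon_T)$ is uniformly bounded by a deterministic constant. Hence $\hat E\bigl[e^{(1-\gamma)(\tilde q(\Upsilon_0)-\tilde q(\Upsilon_T))}\bigr]$ is trapped between two strictly positive constants uniformly in $T$, and taking logarithms and dividing by $(1-\gamma)T$ yields
\begin{equation*}
\liminf_{T\to\infty}\frac{1}{T}\log E\bigl[\tilde X_T^{1-\gamma}\bigr]^{1/(1-\gamma)}=r+\beta.
\end{equation*}
For the matching upper bound, I invoke convex duality for power utility in the frictionless shadow market. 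The coefficients $\tilde\mu$ and $\tilde\sigma$ are bounded on the compact range of $\Upsilon$, with $\tilde\sigma$ bounded away from zero, so the market price of risk $\tilde\mu/\tilde\sigma$ is bounded and Novikov's criterion makes $\tilde M_T$ the discounted density of the equivalent martingale measure for $\tilde S$. Fenchel's inequality applied to $U(x)=x^{1-\gamma}/(1-\gamma)$ then gives, for every admissible wealth $X^\phi$ starting from $x_0=\xi^0 S^0_0+\xi\tilde S_0$,
\begin{equation*}
E\bigl[(X^\phi_T)^{1-\gamma}\bigr]^{1/(1-\gamma)}\le x_0\,E\bigl[\tilde M_T^{1-1/\gamma}\bigr]^{\gamma/(1-\gamma)},
\end{equation*}
which combined with~\eqref{dualbound} and the boundedness of $\tilde q$ produces the matching $\limsup\le r+\beta$. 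Because $\tilde S$ lies within the bid-ask spread, any strategy admissible in the transaction-cost market is weakly dominated by a frictionless strategy in the shadow market, so this also bounds the frictional equivalent safe rate from above.

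The main technical obstacle is the careful bookkeeping of the two regimes $\gamma\in(0,1)$ and $\gamma>1$: in the latter case $U<0$ and the directions of both Fenchel's inequality and the monotone transformation $x\mapsto x^{1/(1-\gamma)}$ are reversed, so the $\liminf/\limsup$ argument must be repeated with signs flipped. A secondary point is checking admissibility of the candidate $(\varphi^0,\varphi)$: once the initial jump has been absorbed, $\tilde X_t$ is a strictly positive stochastic exponential, so the liquidation value remains positive and the verification in Lemma~\ref{lemfinite} applies without qualification.
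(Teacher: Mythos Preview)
Your proposal is correct and follows essentially the same route as the paper: use the primal identity~\eqref{primbound} for the lower bound and the dual identity~\eqref{dualbound} together with the power-utility duality inequality for the upper bound, observing that the $\tilde q$-terms are uniformly bounded because $\Upsilon$ lives in a compact interval. The paper packages the duality step as a citation (Lemma~5 of \citet*{guasoni.robertson.10}) rather than spelling out Fenchel and Novikov, and it does not separate the cases $\gamma\lessgtr 1$ explicitly; your last sentence about domination in the transaction-cost market is extraneous here, as it belongs to Proposition~\ref{prop:shadow} rather than to this lemma.
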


\begin{proof}
The formulas for the wealth process and the corresponding numbers of safe and risky units follow directly from the standard frictionless definitions. Now let $\tilde{M}_t$ be the shadow discount factor from Lemma \ref{lemfinite}. Then, standard duality arguments for power utility (cf. Lemma~5 in \citet*{guasoni.robertson.10}) imply that the shadow payoff $\tilde{X}_t^\phi$ corresponding to \emph{any} admissible strategy $\phi_t$  satisfies the inequality
\begin{equation}\label{dualbound1}
\esp{(\tilde X^\phi_T)^{1-\gamma}}^{\frac 1{1-\gamma}}\le 
\esp{\tilde{M}_T^{\frac{\gamma-1}\gamma}}^{\frac\gamma{1-\gamma}} .
\end{equation}
This inequality in turn yields the following upper bound, valid for any admissible strategy $\phi_t$ in the frictionless market with shadow price $\tilde{S}_t$:
\begin{equation}\label{dualbound2}
\liminf_{T \to \infty} \frac{1}{(1-\gamma)T}\log E\left[(\tilde{X}^\phi_T)^{1-\gamma}\right] \le \liminf_{T\rightarrow\infty}
\frac{\gamma}{(1-\gamma)T}\log \esp{\tilde{M}_T^{\frac{\gamma-1}\gamma}}.
\end{equation}
Since the function $\tilde{q}$ is bounded on the compact support of $\Upsilon_t$, the second bound in Lemma \ref{lemfinite} implies that the right-hand side equals $r+\beta$. Likewise, the first bound in the same lemma implies that the shadow payoff $\tilde{X}_t$ (corresponding to the policy $\varphi_t$) attains this upper bound, concluding the proof.
\end{proof}

The next Lemma establishes that the candidate $\tilde{S}_t$ is indeed a shadow price. 

\begin{lemma}\label{lem:strategy}
Let $0<\mu/\gamma \sigma^2 \neq 1$. Then, the number of shares $\varphi_t=w(\Upsilon_t)\tilde{X}_t/\tilde{S}_t$ in the portfolio~$\tilde{\pi}(\Upsilon_t)$ in Lemma~\ref{lem:ergodic} has the dynamics  
\begin{equation}\label{eq:philu}
\frac{d\varphi_t}{\varphi_t}=\left(1-\frac{\mu-\lambda}{\gamma\sigma^2}\right)dL_t-\left(1-\frac{\mu+\lambda}{\gamma\sigma^2}\right)dU_t.
\end{equation}
Thus, $\varphi_t$ increases only when $\Upsilon_t=0$, that is, when $\tilde{S}_t$ equals the ask price, and decreases only when $\Upsilon_t=\log(u/l)$, that is, when $\tilde{S}_t$ equals the bid price.
\end{lemma}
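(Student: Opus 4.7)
The plan is to apply It\^o's formula to $\log\varphi_t$. Since the definition of $\tilde S$ in Lemma \ref{lem:dynamics} gives $\varphi_t = w(\Upsilon_t)\tilde X_t/\tilde S_t = \tilde X_t \cdot l e^{\Upsilon_t}(1-w(\Upsilon_t))/S_t$, one obtains the clean decomposition
\[
  \log \varphi_t = \log \tilde X_t - \log S_t + \Upsilon_t + \log\bigl(1-w(\Upsilon_t)\bigr) + \log l.
\]
The goal is to show that the $dW_t$ and $dt$ coefficients on the right-hand side cancel identically, leaving only the boundary local-time terms that arise from $d\Upsilon_t$ and from the chain rule applied to $\log(1-w(\Upsilon_t))$. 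Since $\log \varphi_t$ is then a continuous finite-variation process with no quadratic variation, the standard identity $d\varphi_t/\varphi_t = d\log\varphi_t$ delivers \eqref{eq:philu}.

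First I would compute the four It\^o differentials using the SDE $d\tilde X_t/\tilde X_t = (r + w(\Upsilon_t)\tilmu(\Upsilon_t))\,dt + w(\Upsilon_t)\tilsigma(\Upsilon_t)\,dW_t$ from Lemma \ref{lem:ergodic}, the geometric Brownian motion of $S$, the reflected SDE \eqref{eq:reflected} for $\Upsilon$, and It\^o's formula applied to $f(y)=\log(1-w(y))$ (which is admissible since $w$ extends smoothly to a neighbourhood of the reflection interval). The coefficient of $dW_t$ in $d\log\varphi_t$ reads $w\tilsigma - \sigma w'/(1-w)$, which vanishes identically because $\tilsigma(y) = \sigma w'(y)/(w(y)(1-w(y)))$ by Lemma \ref{lem:dynamics}. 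Substituting the formula for $\tilmu$ from the same lemma, the absolutely continuous drift collapses to
\[
  -\frac{1}{1-w(\Upsilon_t)}\biggl[(1-\gamma)\sigma^2 w w' + \Bigl(\mu-\tfrac{\sigma^2}{2}\Bigr) w' + \tfrac{\sigma^2}{2} w''\biggr],
\]
which vanishes by the identity \eqref{w2ableitung} obtained from differentiating the Riccati ODE \eqref{eq:wode}.

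The only surviving contributions are the local-time terms, which come from the reflection in $d\Upsilon_t$ and from the chain rule applied to $\log(1-w(\Upsilon_t))$; together they carry the coefficients $1 - w'(0)/(1-w(0))$ on $\{\Upsilon_t = 0\}$ and $-\bigl(1 - w'(\log(u/l))/(1-w(\log(u/l)))\bigr)$ on $\{\Upsilon_t = \log(u/l)\}$. Invoking the smooth-pasting relations of Lemma \ref{lem:smoothpasting}, namely $w'(0) = w(0)(1-w(0))$ and $w'(\log(u/l)) = w(\log(u/l))(1-w(\log(u/l)))$, these simplify exactly to $1-(\mu-\lambda)/(\gamma\sigma^2)$ and $-\bigl(1-(\mu+\lambda)/(\gamma\sigma^2)\bigr)$, which is \eqref{eq:philu}. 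The concluding assertion that $\varphi_t$ increases only at the ask price and decreases only at the bid price follows from the support properties of $L$ and $U$ in \eqref{eq:reflected}, combined with the evaluations $\tilde S_t = S_t$ at $\Upsilon_t = 0$ and $\tilde S_t = (1-\eps)S_t$ at $\Upsilon_t = \log(u/l)$ established at the end of the proof of Lemma \ref{lem:dynamics}.

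The main obstacle is the purely algebraic cancellation in the $dt$ coefficient, which requires simultaneous use of the definitions of $\tilmu$ and $\tilsigma$, the Riccati ODE \eqref{eq:wode}, and its differentiated form \eqref{w2ableitung}. The leveraged case $\mu/\gamma\sigma^2 > 1$, in which $l,u<0$, $\Upsilon$ is reflected in $[\log(u/l),0]$, and $L,U$ have opposite monotonicity, is handled by the same computation with the appropriate sign conventions carried through.
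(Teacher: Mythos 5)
Your proof is correct, and it takes a genuinely different route from the paper's. The paper's proof applies It\^o's formula to $w(\Upsilon_t)$ and then integrates $\varphi_t = w(\Upsilon_t)\tilde X_t/\tilde S_t$ by parts twice, keeping $\tilde S_t$ in the calculation throughout; the drift and diffusion terms cancel, and the intermediate formula obtained is $d\varphi_t/\varphi_t = \bigl(w'(\Upsilon_t)/w(\Upsilon_t)\bigr)d(L_t - U_t)$, which then reduces to \eqref{eq:philu} at the boundaries via the smooth-pasting conditions. You instead use the explicit form $\tilde S_t = S_t\, w(\Upsilon_t)/\bigl(l e^{\Upsilon_t}(1-w(\Upsilon_t))\bigr)$ to eliminate $\tilde S_t$ entirely and obtain the additive decomposition $\log\varphi_t = \log\tilde X_t - \log S_t + \Upsilon_t + \log\bigl(1-w(\Upsilon_t)\bigr) + \log l$. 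This makes the cancellations more transparent (the $dW_t$ coefficient vanishes immediately from the definition of $\tilsigma$, and the $dt$ coefficient reduces to a single bracket killed by \eqref{w2ableitung}), and it arrives at the differently structured intermediate coefficient $1 - w'/(1-w)$, which coincides with the paper's $w'/w$ at the boundary points where smooth pasting gives $w' = w(1-w)$ (and nowhere else, but that is all that matters since the local times are supported there). The two approaches buy the same result; yours is somewhat more systematic at the cost of exploiting the specific algebraic form of $\tilde S_t$, while the paper's double integration by parts does not require that substitution but leaves more terms to track by hand.
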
 

\begin{proof}
It\^o's formula and the ODE \eqref{w2ableitung} yield
$$dw(\Upsilon_t)=-(1-\gamma)\sigma^2 w'(\Upsilon_t)w(\Upsilon_t)dt+\sigma w'(\Upsilon_t)dW_t+w'(\Upsilon_t)(dL_t-dU_t).$$
Integrating $\varphi_t=w(\Upsilon_t)\tilde{X}_t/\tilde{S}_t$ by parts twice, inserting the dynamics of $w(\Upsilon_t)$, $\tilde{X}_t$, $\tilde{S}_t$, and simplifying yields:
$$\frac{d\varphi_t}{\varphi_t}=\frac{w'(\Upsilon_t)}{w(\Upsilon_t)}d(L_t-U_t).$$
Since $L_t$ and $U_t$ only increase (resp.\ decrease when $\mu/\gamma\sigma^2>1$) on $\{\Upsilon_t=0\}$ and $\{\Upsilon_t=\log(u/l)\}$, respectively, the assertion now follows from the boundary conditions for $w$ and $w'$.
\end{proof}

The optimal growth rate for {any} frictionless price within the bid-ask spread must be greater or equal than in the original market with bid-ask process $((1-\ve)S_t,S_t)$, because the investor trades at more favorable prices. For a \emph{shadow price}, there is an optimal strategy that only entails buying (resp.\ selling) stocks when $\tilde{S}$ coincides with the ask- resp.\ bid price. Hence, this strategy yields the same payoff when executed at  bid-ask prices, and thus is also optimal in the original model with transaction costs. The corresponding equivalent safe rate must also be the same, since the difference due to the liquidation costs vanishes as the horizon grows in~\eqref{eq:longrun}:

\begin{proposition}\label{prop:shadow}
For a sufficiently small spread $\ve$, the strategy $(\varphi^0_t,\varphi_t)$ from Lemma \ref{lem:ergodic} is also long-run optimal in the original market with transaction costs, with the same equivalent safe rate.
 \end{proposition}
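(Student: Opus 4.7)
The plan is to sandwich the equivalent safe rate of the transaction cost market between two copies of the shadow market rate $r+\beta$, one as an upper bound coming from domination of payoffs, the other as a lower bound obtained by executing the explicit strategy of Lemma~\ref{lem:ergodic}.

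First I would establish the upper bound. Let $(\phi^0_t,\phi_t)$ be any admissible strategy in the transaction cost market with liquidation value $\Xi^\phi_t\geq 0$. Using the self-financing condition \eqref{eq:selffinancing} and integrating by parts, I would rewrite the frictionless wealth $\tilde X^\phi_t := \phi^0_t S^0_t + \phi_t\tilde S_t$ associated with the shadow price $\tilde S$ as
\begin{equation*}
\tilde X^\phi_t = \tilde X^\phi_{0^-} + \int_0^t \phi^0_s\, dS^0_s + \int_0^t \phi_s\, d\tilde S_s + \int_0^t(\tilde S_s - S_s)\,d\phi^\uparrow_s + \int_0^t((1-\varepsilon)S_s - \tilde S_s)\,d\phi^\downarrow_s.
\end{equation*}
Since $(1-\varepsilon)S_s \leq \tilde S_s \leq S_s$ by Lemma~\ref{lem:dynamics}, the last two integrals are nonpositive, so $\tilde X^\phi$ is a frictionless self-financing wealth process dominating the liquidation value $\Xi^\phi$ up to a bounded multiplicative constant (since $\tilde S \leq S$ and $\tilde S \geq (1-\varepsilon)S$). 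Applying the duality bound \eqref{dualbound1} in the shadow market and Lemma~\ref{lemfinite} gives
\begin{equation*}
\liminf_{T\to\infty}\frac{1}{T}\log E\bigl[(\Xi^\phi_T)^{1-\gamma}\bigr]^{\frac1{1-\gamma}} \leq \liminf_{T\to\infty}\frac{1}{T}\log E\bigl[(\tilde X^\phi_T)^{1-\gamma}\bigr]^{\frac1{1-\gamma}} \leq r+\beta,
\end{equation*}
using that $\tilde q(\Upsilon_t)$ is bounded on the compact interval so the correction term in \eqref{dualbound} is $O(1)$ and vanishes after division by $T$.

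Next I would prove the matching lower bound by showing the candidate strategy $(\varphi^0,\varphi)$ of Lemma~\ref{lem:ergodic} is actually self-financing under proportional transaction costs and yields the same terminal wealth as in the shadow market. By Lemma~\ref{lem:strategy}, $\varphi$ increases only on $\{\Upsilon_t=0\}$, where $\tilde S_t = S_t$, and decreases only on $\{\Upsilon_t=\log(u/l)\}$, where $\tilde S_t = (1-\varepsilon)S_t$. Substituting $\tilde S_t$ by $S_t$ on $d\varphi^\uparrow_t$ and by $(1-\varepsilon)S_t$ on $d\varphi^\downarrow_t$ in the shadow self-financing relation $d\varphi^0_t = -(\tilde S_t/S^0_t)\,d\varphi_t$ (together with the initial jump handled by the definition of $\Upsilon_0$ in \eqref{eq:jump}, which exactly transfers the portfolio to the nearest boundary at ask or bid prices) reproduces \eqref{eq:selffinancing}. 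Hence $\varphi^0_t S^0_t = \phi^0_t S^0_t$ throughout, and the liquidation value satisfies
\begin{equation*}
(1-\varepsilon) \tilde X_T \leq \Xi^\varphi_T \leq \tilde X_T,
\end{equation*}
so dividing by $T$ and passing to the limit eliminates the $\log(1-\varepsilon)$ term and yields a long-run equivalent safe rate of at least $r+\beta$ from Lemma~\ref{lem:ergodic}.

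The main obstacle I anticipate is the bookkeeping involved in the first step: carefully justifying that an arbitrary admissible strategy in the transaction cost market can be embedded into the shadow market so that its frictionless wealth dominates its liquidation value, taking into account potentially singular local-time-type controls and the initial jump that $\Upsilon_0$ may require. Once this domination is set up cleanly, combining it with the duality estimate \eqref{dualbound1} and the explicit bound \eqref{dualbound} of Lemma~\ref{lemfinite} delivers $r+\beta$ as the universal upper bound, while the identification of buy/sell times with the boundary $\{\tilde S=S\}\cup\{\tilde S=(1-\varepsilon)S\}$ from Lemma~\ref{lem:strategy} makes this bound attained, completing the proof.
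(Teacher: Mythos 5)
Your overall strategy is the same as the paper's: sandwich $\Xi^\phi$ between shadow-market wealth processes, use shadow-market duality for the upper bound, and use Lemma~\ref{lem:strategy} for the attainment. Two technical points need tightening, however.

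First, the process $\tilde X^\phi_t = \phi^0_t S^0_t + \phi_t\tilde S_t$ you define is \emph{not} itself a self-financing frictionless wealth process for $\tilde S$ — your own decomposition shows it equals such a process minus a nondecreasing correction. So you cannot apply \eqref{dualbound1} to $\tilde X^\phi$ directly; you must pass to the dominating self-financing strategy. The paper does this explicitly by setting $\tilde\psi^0_t=\psi^0_{0^-}-\int_0^t (\tilde S_s/S^0_s)\,d\psi_s$, observing $\tilde\psi^0_t\geq\psi^0_t$, and comparing the resulting $\tilde\psi^0_tS^0_t+\psi_t\tilde S_t$ against the long-run optimal shadow wealth from Lemma~\ref{lem:ergodic}.

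Second, your lower-bound sandwich $(1-\varepsilon)\tilde X_T\leq\Xi^\varphi_T$ fails in the leverage regime $\mu/\gamma\sigma^2>1$, where $\varphi^0_T<0$: then $\Xi^\varphi_T-(1-\varepsilon)\tilde X_T=\varepsilon\varphi^0_T S^0_T+(1-\varepsilon)(S_T-\tilde S_T)\varphi_T$ has indeterminate sign. The correct two-sided bound, as in the paper's \eqref{eq:bounds}, is
\begin{equation*}
\bigl(1-\tfrac{\varepsilon}{1-\varepsilon}\tilde\pi(\Upsilon_T)\bigr)\,\tilde X_T\;\leq\;\Xi^\varphi_T\;\leq\;\tilde X_T,
\end{equation*}
whose multiplicative constant is bounded away from zero for small $\varepsilon$ because $\tilde\pi$ is bounded on the compact no-trade region; that same bound is also what makes the strategy admissible for the bid-ask process. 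With these two corrections your argument coincides with the paper's.
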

 
 \begin{proof}
As $\varphi_t$ only increases (resp.\ decreases) when $\tilde{S}_t=S_t$ (resp.\ $\tilde{S}_t=(1-\ve)S_t$), the strategy $(\varphi_t^0,\varphi_t)$ is also self-financing for the bid-ask process $((1-\ve)S_t,S_t)$. Since $S_t \geq \tilde{S}_t \geq (1-\ve)S_t$ and the number $\varphi_t$ of risky shares is always positive, it follows that
\begin{equation}\label{eq:bounds}
\varphi^0_t S^0_t+\varphi_t \tilde{S}_t \geq \varphi^0_t S^0_t +\varphi_t^+(1-\ve)S_t-\varphi^-_t S_t  \geq (1-\tfrac{\ve}{1-\ve} \tilde{\pi}(Y_t))(\varphi^0_t S^0_t +\varphi_t \tilde{S}_t). 
\end{equation}
The shadow risky fraction $\tilde{\pi}(\Upsilon_t)=w(\Upsilon_t)$ is bounded from above by $(\mu+\lambda)/\gamma\sigma^2=\mu/\gamma\sigma^2+O(\ve^{1/3})$. For a sufficiently small spread $\ve$, the strategy $(\varphi_t^0,\varphi_t)$ is therefore also admissible for $((1-\ve)S_t,S_t)$. Moreover, \eqref{eq:bounds} then also yields
\begin{equation}\label{eq:optimalrate}
\begin{split}
\liminf_{T \to \infty} \frac{1}{(1-\gamma)T}\log E\left[(\varphi^0_T S^0_T+\varphi_T^+ (1-\ve)S_T -\varphi_T^- S_T)^{1-\gamma}\right]\\
\qquad = \liminf_{T \to \infty} \frac{1}{(1-\gamma)T}\log E\left[(\varphi^0_T S^0_T+\varphi_T\tilde{S}_T)^{1-\gamma}\right],
\end{split}
\end{equation}
that is, $(\varphi_t^0,\varphi_t)$ has the same growth rate, either with $\tilde{S}_t$ or with $[(1-\ve)S_t,S_t]$. 

For any admissible strategy $(\psi_t^0,\psi_t)$ for the bid-ask spread $[(1-\ve)S_t,S_t]$, set $\tilde{\psi}_t^0=\psi^0_{0^-}-\int_0^{t} \tilde{S}_s/S^0_s d\psi_s$. Then, $(\tilde{\psi}_t^0,\psi_t)$ is a self-financing trading strategy for $\tilde{S}_t$ with $\tilde{\psi}_t^0 \geq \psi_t^0$. Together with $\tilde{S}_t \in [(1-\ve)S_t,S_t]$, the long-run optimality of $(\varphi_t^0,\varphi_t)$ for $\tilde{S}_t$, and~\eqref{eq:optimalrate}, it follows that:
\begin{align*}
&\liminf_{T \to \infty} \frac{1}{T}\frac{1}{(1-\gamma)}\log E\left[(\psi^0_T S^0_T+\psi_T^+ (1-\ve)S_T -\psi_T^- S_T)^{1-\gamma}\right]\\
&\qquad \qquad\leq \liminf_{T \to \infty} \frac{1}{T}\frac{1}{(1-\gamma)}\log E\left[(\tilde{\psi}^0_T S^0_T+\psi_T \tilde{S}_T)^{1-\gamma}\right]\\
&\qquad \qquad\leq \liminf_{T \to \infty} \frac{1}{T}\frac{1}{(1-\gamma)}\log E\left[(\varphi^0_T S^0_T+\varphi_T \tilde{S}_T)^{1-\gamma}\right]\\
&\qquad \qquad= \liminf_{T \to \infty} \frac{1}{T}\frac{1}{(1-\gamma)}\log E\left[(\varphi^0_T S^0_T+\varphi_T^+ (1-\ve)S_T -\varphi_T^- S_T)^{1-\gamma}\right].
\end{align*}
Hence $(\varphi_t^0,\varphi_t)$ is also long-run optimal for $((1-\ve)S_t,S_t)$.
\end{proof}

The main result now follows by putting together the above statements.

\begin{theorem}\label{th:opt}
For $\ve>0$ small, and $0<\mu/\gamma\sigma^2 \neq 1$, the process $\tilde{S}_t$ in  Lemma~\ref{lem:dynamics} is a shadow price. A long-run optimal policy --- both for the frictionless market with price $\tilde{S}_t$ and in the market with bid-ask prices $(1-\ve)S_t,S_t$ --- is to keep the risky weight $\tilde{\pi}_t$ (in terms of $\tilde{S}_t$) in the no-trade region
$$[\pi_-,\pi_+]=\left[\frac{\mu-\lambda}{\gamma\sigma^2},\frac{\mu+\lambda}{\gamma\sigma^2}\right].$$
As $\ve \downarrow 0$, its boundaries have the asymptotics
\begin{align*}
  \pi_{\pm} = \frac{\mu}{\gamma\sigma^2}
    \pm \left(\frac{3}{4\gamma}\left(\frac{\mu}{\gamma\sigma^2}\right)^2\left(1-\frac{\mu}{\gamma\sigma^2}\right)^2\right)^{1/3} \ve^{1/3} 
    \pm \left(\frac{(5-2\gamma)}{10\gamma}\frac{\mu}{\gamma\sigma^2}\left(1-\frac{\mu}{\gamma\sigma^2}\right)^2-\frac{3}{20\gamma}\right) \ve
    +O(\ve^{4/3}).
\end{align*}
The corresponding equivalent safe rate is:
\[
  r+\beta=r+\frac{\mu^2-\lambda^2}{\gamma\sigma^2}=
    r+\frac{\mu^2}{2\gamma\sigma^2}-\frac{\gamma\sigma^2}{2}\left(\frac{3}{4\gamma} \left(\frac{\mu}{\gamma\sigma^2}\right)^2\left(1-\frac{\mu}{\gamma\sigma^2}\right)^2\right)^{2/3} \ve^{2/3}
    +O(\ve^{4/3}).
\]

If $\mu/\gamma\sigma^2=1$, then $\tilde{S}_t=S_t$ is a shadow price, and it is optimal to invest all wealth in the risky asset at time $t=0$, never to trade afterwards. In this case, the equivalent safe rate is given by the frictionless value $r+\beta=r+\mu^2/2\gamma\sigma^2$.
\end{theorem}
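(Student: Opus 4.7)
My plan is to view Theorem \ref{th:opt} as a synthesis of the four preceding appendix results: Lemma \ref{lem:dynamics} constructs $\tilde S$ and confirms it stays inside the bid--ask spread with the stated dynamics; Lemma \ref{lem:ergodic} identifies $\tilde\pi_t = w(\Upsilon_t)$ as the long-run optimizer in the frictionless market with price $\tilde S$ and pins down the equivalent safe rate at $r+\beta$, where $\beta = (\mu^2-\lambda^2)/(2\gamma\sigma^2)$; Lemma \ref{lem:strategy} shows that under this policy purchases occur only when $\tilde S = S$ and sales only when $\tilde S = (1-\ve)S$, so $\tilde S$ satisfies Definition \ref{defi:shadow}; and Proposition \ref{prop:shadow} transfers both the optimal strategy and the equivalent safe rate from the frictionless shadow market to the original transaction-cost market. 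The description of the no-trade region as $[\pi_-,\pi_+]$ with $\pi_\pm = (\mu\pm\lambda)/\gamma\sigma^2$ is encoded in the boundary conditions \eqref{riccati2}--\eqref{riccati3} for $w$ together with its monotonicity from Lemma \ref{lem:riccati}.

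For the asymptotic expansions I would substitute the series for $\lambda$ established in Lemma \ref{lem:lambda} into the closed forms $\pi_\pm = (\mu\pm\lambda)/\gamma\sigma^2$ and $r+\beta = r+(\mu^2-\lambda^2)/(2\gamma\sigma^2)$, then Taylor expand. For $\pi_\pm$ the dependence on $\lambda$ is linear and the two coefficients in the expansion of $\pi_\pm$ are inherited directly from the $\ve^{1/3}$ and $\ve$ coefficients of $\lambda$. For $r+\beta$ the dependence on $\lambda$ is quadratic: the leading $\ve^{1/3}$ coefficient of $\lambda$ yields the order $\ve^{2/3}$ correction to the frictionless rate $\mu^2/(2\gamma\sigma^2)$, while the cross-term between the $\ve^{1/3}$ and $\ve$ coefficients of $\lambda$ only enters $\lambda^2$ at order $\ve^{4/3}$, which explains the $O(\ve^{4/3})$ remainder in the equivalent safe rate rather than $O(\ve)$.

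The degenerate case $\mu/\gamma\sigma^2 = 1$ I would dispose of separately: the frictionless Merton weight already sits at full investment, so the buy-and-hold strategy that transfers all initial wealth into the risky asset at $t=0$ trades only once (at the ask), is admissible in the original market for all $\ve<1$, and attains the frictionless rate $r+\mu^2/(2\gamma\sigma^2)$; this rate matches the standard duality upper bound for a frictionless market with Sharpe ratio $\mu/\sigma$, and taking $\tilde S = S$ makes the original price itself a shadow price. The main obstacle I anticipate sits inside the admissibility bound \eqref{eq:bounds} used in Proposition \ref{prop:shadow} for the leveraged regime $\mu/\gamma\sigma^2 > 1$; there one must verify that $\tfrac{\ve}{1-\ve}\tilde\pi_t$ stays strictly below $1$ uniformly in $t$ for small enough $\ve$, which I would obtain from the fact that $\tilde\pi_t \in [\pi_-,\pi_+]$ is uniformly bounded as $\ve \downarrow 0$, so the liquidation value remains positive. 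Apart from this check, the remaining content of Theorem \ref{th:opt} reduces to bookkeeping across the four lemmas.
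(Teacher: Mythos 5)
Your proposal mirrors the paper's own proof of Theorem \ref{th:opt}: both treat the theorem as a synthesis of Lemmas \ref{lem:dynamics}, \ref{lem:ergodic}, \ref{lem:strategy}, and Proposition \ref{prop:shadow}, obtain the asymptotics by substituting the fractional power series for $\lambda$ from Lemma \ref{lem:lambda} into the closed forms for $\pi_\pm$ and $\beta$, handle the degenerate case $\mu/\gamma\sigma^2=1$ separately via buy-and-hold with $\tilde{S}=S$, and rely on the boundedness of $\tilde{\pi}_t \in [\pi_-,\pi_+]$ (uniformly for small $\ve$) to secure admissibility in \eqref{eq:bounds}. The argument is correct and essentially identical to the paper's.
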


\begin{proof}
First let $0<\mu/\gamma\sigma^2 \neq 1$. Optimality with equivalent safe rate $r+\beta$ of the strategy $(\varphi_t^0,\varphi_t)$ associated to $\tilde{\pi}(\Upsilon_t)$ for $\tilde{S}_t$ has been shown in
Lemma~\ref{lem:ergodic}. The asymptotic expansions are an immediate consequence of the fractional power series for~$\lambda$ (cf.\ Lemma~\ref{lem:lambda}) and Taylor expansion. 

Next, Lemma~\ref{lem:strategy} shows that $\tilde{S}_t$ is a shadow price process in the sense of Definition~\ref{defi:shadow}. In view of the asymptotic expansions for $\pi_\pm$, Proposition~\ref{prop:shadow} shows that, for small transaction costs $\ve$, the same policy is also optimal, with the same equivalent safe rate, in the original market with bid-ask prices $(1-\ve)S_t,S_t$.

Consider now the degenerate case $\mu/\gamma\sigma^2=1$. Then the optimal strategy in the frictionless model $\tilde{S}_t=S_t$ transfers all wealth to the risky asset at time $t=0$, never to trade afterwards ($\varphi^0_t=0$ and $\varphi_t=\xi+\xi^0 S^0_0/S_0$ for all $t \geq 0$). Hence it is of finite variation and the number of shares never decreases, and increases only at time $t=0$, where the shadow price coincides with the ask price. Thus, $\tilde{S}_t=S_t$ is a shadow price. For small $\ve$, the remaining assertions then follow as in Proposition~\ref{prop:shadow} above.
\end{proof}

Next, the proof of Theorem \ref{th:finhor}, which establishes asymptotic finite-horizons bounds. In fact, the proof yields exact bounds in terms of $\lambda$, from which the expansions in the theorem are obtained.

\begin{proof}[Proof of Theorem \ref{lem:asympbounds}]
Let $(\phi^0,\phi)$ be any admissible strategy starting from the initial position $(\varphi^0_{0-},\varphi_{0-})$. Then as in the proof of Proposition~\ref{prop:shadow}, we have $\Xi^\phi_T \leq \tilde{X}^\phi_T$ for the corresponding shadow payoff, that is, the terminal value of the wealth process $\tilde{X}^\phi_t=\phi^0_0+\phi_0 \tilde{S}_0+\int_0^t \phi_s d\tilde{S}_s$ corresponding to trading $\phi$ in the frictionless market with price process $\tilde{S}_t$. Hence, Lemma 5 in \citet*{guasoni.robertson.10} and the second bound in Lemma \ref{lemfinite} imply that
\begin{equation}\label{eq:boundup}
\frac{1}{(1-\gamma)T}\log E\left[(\Xi^\phi_T)^{1-\gamma}\right]\leq r+\beta+\frac{1}{T}\log(\varphi^0_{0-}+\varphi_{0-}S_0)
+\frac{\gamma}{(1-\gamma)T}\log\hat{E}\left[e^{(\frac{1}{\gamma}-1)(\tilde{q}(\Upsilon_0)-\tilde{q}(\Upsilon_T))}\right].
\end{equation}
For the strategy $(\varphi^0,\varphi)$ from Lemma~\ref{lem:strategy}, we have $\Xi^\varphi_T \geq (1-\frac{\varepsilon}{1-\varepsilon}\frac{\mu+\lambda}{\gamma\sigma^2})\tilde{X}^\varphi_T$ by the proof of Proposition~\ref{prop:shadow}. Hence the first bound in Lemma \ref{lemfinite} yields
\begin{align}
\frac{1}{(1-\gamma)T}\log E\left[(\Xi^\varphi_T)^{1-\gamma}\right]\geq r+\beta +\frac{1}{T}\log(\varphi^0_{0-}+\varphi_{0-}\tilde{S}_0)
&+\frac{1}{(1-\gamma)T}\log\hat{E}\left[e^{(1-\gamma)(\tilde{q}(\Upsilon_0)-\tilde{q}(\Upsilon_T))}\right] \notag \\
&+\frac{1}{T}\log\left(1-\frac{\varepsilon}{1-\varepsilon}\frac{\mu+\lambda}{\gamma\sigma^2}\right).\label{eq:boundlow}
\end{align}
To determine explicit estimates for these bounds, we first analyze the sign of $\tilde{w}(y)=w-\frac{w'}{1-w}$ and hence the monotonicity of $\tilde{q}(y)=\int_0^y \tilde{w}(z)dz$. Whenever $\tilde{w}=0$, i.e., $w'=w(1-w)$, the derivative of $\tilde{w}$ is 
$$\tilde{w}'=w'-\frac{w''(1-w)+w'^2}{(1-w)^2}=\frac{(1-2\gamma)w' w+\frac{2\mu}{\sigma^2} w'}{1-w}-\left(\frac{w'}{1-w}\right)^2=2\gamma w\left(\frac{\mu}{\gamma\sigma^2}-w\right),$$
where we have used the ODE \eqref{w2ableitung} for the second equality. Since $\tilde{w}$ vanishes at $0$ and $\log(u/l)$ by the boundary conditions for $w$ and $w'$, this shows that the behaviour of $\tilde{w}$ depends on whether the investor's position is leveraged or not. In the absence of leverage, $\mu/\gamma\sigma^2 \in (0,1)$, $\tilde{w}$ is defined on $[0,\log(u/l)]$. It vanishes at the left boundary $0$ and then increases since its derivative is initially positive by the initial condition for $w$. Once the function $w$ has increased to level $\mu/\gamma\sigma^2§$, the derivative of $\tilde{w}$ starts to become negative; as a result, $\tilde{w}$ begins to decrease until it reaches level zero again at $\log(u/l)$. In particular, $\tilde{w}$ is nonnegative for $\mu/\gamma\sigma^2 \in (0,1)$. 

 In the leverage case $\mu/\gamma\sigma^2>1$, the situation is reversed. Then, $\tilde{w}$ is defined on $[\log(u/l),0]$ and, by the boundary condition for $w$ at $\log(u/l)$, therefore starts to decrease after starting from zero at $\log(u/l)$. Once $w$ has decreased to level $\mu/\gamma\sigma^2$, $\tilde{w}$ starts increasing until it reaches level zero again at $0$. Hence, $\tilde{w}$ is nonpositive for $\mu/\gamma\sigma^2>1$.

Now, consider Case 2 of Lemma \ref{lem:riccati}; the calculations for the other cases follow along the same lines with minor modifications. Then $\mu/\gamma\sigma^2 \in (0,1)$ and $\tilde{q}$ is positive and increasing. Hence, \begin{equation}\label{eq:boundup2}
\frac{\gamma}{(1-\gamma)T}\log\hat{E}\left[e^{(\frac{1}{\gamma}-1)(\tilde{q}(\Upsilon_0)-\tilde{q}(\Upsilon_T))}\right]\leq \frac{1}{T}\int_0^{\log(u/l)}\tilde{w}(y)dy
\end{equation}
and likewise
\begin{equation}\label{eq:boundlow2}
\frac{1}{(1-\gamma)T}\log\hat{E}\left[e^{(1-\gamma)(\tilde{q}(\Upsilon_0)-\tilde{q}(\Upsilon_T))}\right] \geq -\frac{1}{T}\int_0^{\log(u/l)} \tilde{w}(y)dy.
\end{equation}
Since $\tilde{w}(y)=w(y)-w'/(1-w)$, the boundary condions for $w$ imply
\begin{equation}\label{eq:expl3}
\int_0^{\log(u/l)} \tilde{w}(y)dy=\int_0^{\log(u/l)}w(y)dy -\log\left(\frac{\mu-\lambda-\gamma\sigma^2}{\mu+\lambda-\gamma\sigma^2}\right).
\end{equation}
By elementary integration of the explicit formula in Lemma \ref{lem:riccati} and using the boundary conditions from Lemma \ref{lem:smoothpasting} for the evaluation of the result at $0$ resp.\ $\log(u/l)$, the integral of $w$ can also be computed in closed form: 
\begin{align}\label{eq:expl4}
\int_0^{\log(u/l)} w(y)dy
=\tfrac{\frac{\mu}{\sigma^2}-\frac{1}{2}}{\gamma-1}\log\left(\tfrac{1}{1-\varepsilon}\tfrac{(\mu+\lambda)(\mu-\lambda-\gamma\sigma^2)}{(\mu-\lambda)(\mu+\lambda-\gamma\sigma^2)}\right)+\tfrac{1}{2(\gamma-1)}\log\left(\tfrac{(\mu+\lambda)(\mu+\lambda-\gamma\sigma^2)}{(\mu-\lambda)(\mu-\lambda-\gamma\sigma^2)}\right).
\end{align}
As $\epsilon \downarrow 0$, a Taylor expansion and the power series for $\lambda$ then yield
$$\int_0^{\log(u/l)} \tilde{w}(y)dy=\frac{\mu}{\gamma\sigma^2}\varepsilon+O(\varepsilon^{4/3}).$$
Likewise,
$$\log\left(1-\frac{\varepsilon}{1-\varepsilon}\frac{\mu-\lambda}{\gamma\sigma^2}\right)=-\frac{\mu}{\gamma\sigma^2}\varepsilon+O(\varepsilon^{4/3}),$$
as well as
$$
\log(\varphi^0_{0-}+\varphi_{0-}\tilde{S}_0) \geq \log(\varphi^0_{0-}+\varphi_{0-}S_0)- \frac{\varphi_{0-}S_0}{\varphi^0_{0-}+\varphi_{0-}S_0}\varepsilon+O(\varepsilon^2).$$
The claimed bounds then follow from \eqref{eq:boundup} and \eqref{eq:boundup2} resp.\ \eqref{eq:boundlow} and \eqref{eq:boundlow2}.
\end{proof}

\section{Trading Volume}

As above, let $\varphi_t=\varphi_t^{\uparrow}-\varphi_t^{\downarrow}$ denote the number of risky units at time $t$, written as the difference of the cumulated numbers of shares bought resp.\ sold until $t$. \emph{Relative share turnover}, defined as the measure $d\|\varphi\|_t/|\varphi_t|=d\varphi_t^{\uparrow}/|\varphi_t|+d\varphi^{\downarrow}_t/|\varphi_t|$, is a scale-invariant indicator of trading volume \citep*{lo2000trading}.
The \emph{long-term average share turnover} is defined as
$$
\lim_{T\rightarrow\infty}\frac1T\int_0^T \frac{d\|\varphi\|_t}{|\varphi_t|}.
$$
Similarly, \emph{relative wealth turnover} $(1-\ve)S_td\varphi^{\downarrow}_t/(\varphi^0_t S^0_t+\varphi_t (1-\ve)S_t)+S_t d\varphi^{\uparrow}_t/(\varphi^0_t S^0_t+\varphi_t S_t)$ is defined as the amount of wealth transacted divided by current wealth, where both quantities are evaluated in terms of the bid price $(1-\ve)S_t$ when selling shares resp.\ in terms of the ask price $S_t$ when purchasing them. As above, the \emph{long-term average wealth turnover} is then defined as
$$
\lim_{T\rightarrow\infty}\frac1T\left(\int_0^T \frac{(1-\ve)S_td\varphi^{\downarrow}_t}{\varphi^0_t S^0_t+\varphi_t (1-\ve)S_t}+\int_0^T\frac{S_t d\varphi^{\uparrow}_t}{\varphi^0_t S^0_t+\varphi_t S_t}\right).
$$

Both of these limits admit explicit formulas in terms of the gap, which yield asymptotic expansions for $\ve \downarrow 0$. The analysis starts with a preparatory result (cf.\ \citet*[Remark 4]{MR2048827} for the case of driftless Brownian motion).

\begin{lemma}\label{lem:lindif}
Let $\Upsilon_t$ be a diffusion on an interval $[l,u]$, $0<l<u$, reflected at the boundaries, i.e.
\begin{equation}
d\Upsilon_t = b(\Upsilon_t) dt + a(\Upsilon_t)^{1/2} dW_t + dL_t - dU_t,
\end{equation}
where the mappings $a(y)>0$ and $b(y)$ are both continuous, and the continuous, nondecreasing local time processes $L_t$ and $U_t$ satisfy $L_0=U_0=0$ and only increase on $\{L_t=l\}$ and $\{U_t=u\}$, respectively. Denoting by $\nu(y)$ the invariant density of $\Upsilon_t$, the following almost sure limits hold:
\begin{equation}
\lim_{T\rightarrow\infty} \frac{L_T}{T} = \frac{a(l) \nu(l)}2,
\qquad
\lim_{T\rightarrow\infty} \frac{U_T}{T} = \frac{a(u) \nu(u)}2.
\end{equation}
\end{lemma}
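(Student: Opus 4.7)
The plan is to apply Itô's formula to a carefully chosen test function, use the ergodic theorem for $\Upsilon_t$ to convert the time average into an integral against the invariant density $\nu$, and then integrate by parts using the ODE satisfied by $\nu$ to isolate $L_T/T$ (resp.\ $U_T/T$).

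First, I would record the basic analytic facts. Since $a$ is continuous and strictly positive on the compact interval $[l,u]$ and $b$ is continuous, the reflected diffusion $\Upsilon_t$ is positive-recurrent with a unique invariant probability density $\nu$. The associated generator is $\mathcal{L}f = \tfrac12 a(y) f''(y) + b(y) f'(y)$, and the stationary Fokker--Planck equation $\mathcal{L}^*\nu = 0$ combined with vanishing flux at the reflecting boundaries yields the first-order identity
\begin{equation*}
\tfrac12 (a\nu)'(y) = b(y)\nu(y), \qquad y\in[l,u].
\end{equation*}

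Next, pick any smooth $g$ on $[l,u]$ with boundary conditions $g'(l)=1$ and $g'(u)=0$ (a quadratic polynomial suffices). Applying Itô's formula to $g(\Upsilon_t)$ and using that $L$ (resp.\ $U$) increases only on $\{\Upsilon_t=l\}$ (resp.\ $\{\Upsilon_t=u\}$) gives
\begin{equation*}
g(\Upsilon_T)-g(\Upsilon_0) = \int_0^T (\mathcal{L}g)(\Upsilon_t)\,dt + \int_0^T g'(\Upsilon_t) a(\Upsilon_t)^{1/2} dW_t + g'(l)L_T - g'(u)U_T.
\end{equation*}
The boundary terms collapse to $L_T$. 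Dividing by $T$: the left-hand side is $O(1/T)$ since $g$ is bounded on $[l,u]$; the stochastic integral divided by $T$ tends to zero a.s.\ by the strong law for continuous local martingales, since its quadratic variation rate $g'(\Upsilon_t)^2 a(\Upsilon_t)$ is bounded; and the ergodic theorem yields
\begin{equation*}
\frac{1}{T}\int_0^T (\mathcal{L}g)(\Upsilon_t)\,dt \xrightarrow[T\to\infty]{\text{a.s.}} \int_l^u (\mathcal{L}g)(y)\nu(y)\,dy.
\end{equation*}

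To evaluate the right-hand integral, I would integrate by parts once and invoke $(a\nu)'=2b\nu$:
\begin{equation*}
\int_l^u \Bigl(\tfrac12 a g'' + b g'\Bigr)\nu\,dy = \Bigl[\tfrac12 a g' \nu\Bigr]_l^u - \int_l^u g'\bigl(\tfrac12(a\nu)' - b\nu\bigr)dy = \tfrac12 a(u)g'(u)\nu(u) - \tfrac12 a(l)g'(l)\nu(l).
\end{equation*}
With the chosen boundary values for $g'$, this equals $-\tfrac12 a(l)\nu(l)$, and rearranging the displayed identity gives $\lim_{T\to\infty} L_T/T = \tfrac12 a(l)\nu(l)$. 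The second limit is obtained identically by swapping the boundary conditions to $g'(l)=0$, $g'(u)=1$.

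The only mild obstacle is verifying the hypotheses needed to apply the ergodic theorem and the martingale strong law, but for a nondegenerate reflected diffusion on a bounded interval with continuous coefficients these are standard; the core content of the proof is the integration-by-parts identity for $\int (\mathcal{L}g)\nu$, which turns the local-time growth rate into a boundary expression involving $a$ and $\nu$.
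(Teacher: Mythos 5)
Your proof is correct and follows essentially the same route as the paper: apply It\^o's formula to a test function $f$ with prescribed derivative values at the endpoints, kill the martingale and drift terms by boundedness and ergodicity, and reduce $\int(\mathcal{L}f)\nu$ to a boundary term. The only cosmetic difference is that the paper invokes the self-adjoint form $\mathcal{L}f=(af'\nu)'/2\nu$ directly, whereas you derive the equivalent identity from the zero-flux stationary Fokker--Planck relation $\tfrac12(a\nu)'=b\nu$ before integrating by parts.
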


\begin{proof}
For $f\in C^2([l,u])$, write $\mathcal{L} f(y):=b(y) f'(y)+a(y)f''(y)/2$. Then, by It\^o's formula:
\[
\frac{f(\Upsilon_T)-f(\Upsilon_0)}{T} = \frac1T\int_0^T \mathcal{L}f(\Upsilon_t)dt
+\frac1T \int_0^T f'(\Upsilon_t) a(\Upsilon_t)^{1/2}dW_t +f'(l)\frac{L_T}T-f'(u)\frac{U_T}T.
\]
Now, take $f$ such that $f'(l)=1$ and $f'(u)=0$, and pass to the limit $T\rightarrow\infty$. The left-hand side vanishes because $f$ is bounded; the stochastic integral also vanishes by the Dambis-Dubins-Schwarz theorem, the law of the iterated logarithm, and the boundedness of $f'$. Thus, the ergodic theorem \citep*[II.35 and II.36]{MR1912205} implies that
\[
\lim_{T\rightarrow\infty} \frac{L_T}{T} = -\int_l^u \mathcal{L}f(y) \nu(y) dy.
\]
Now, the self-adjoint representation \citep*[VII.3.12]{MR1725357} $\mathcal{L}f = (a f' \nu)'/2 \nu$ yields:
\[
\lim_{T\rightarrow\infty} \frac{L_T}{T} = 
-\frac12\int_l^u (af'\nu)'(y)  dy =
\frac{a(l)\nu(l)f'(l)}2 - \frac{a(u)\nu(u)f'(u)}2 = \frac{a(l)\nu(l)}2.
\]
The other limit follows from the same argument, using $f$ such that $f'(l)=0$ and $f'(u)=1$.
\end{proof}

\begin{lemma}\label{lem:locallimits}
Let $0<\mu/\gamma\sigma^2 \neq 1$ and, as in~\eqref{eq:reflected}, let
$$\Upsilon_t=\left(\mu-\frac{\sigma^2}{2}\right)t+\sigma W_t +L_t-U_t$$
be Brownian motion with drift, reflected at $0$ and $\log(u/l)$. Then if $\mu \neq \sigma^2/2$, the following almost sure limits hold:
$$ 
 \lim_{T \to \infty} \frac{L_T}{T}= \frac{\sigma^2}{2}\left(\frac{\frac{2\mu}{\sigma^2}-1}{(u/l)^{\frac{2\mu}{\sigma^2}-1}-1}\right)
\quad \mbox{and} \quad 
\lim_{T \to \infty} \frac{U_T}{T}=\frac{\sigma^2}{2}\left(\frac{1-\frac{2\mu}{\sigma^2}}{(u/l)^{1-\frac{2\mu}{\sigma^2}}-1}\right) 
.
 $$
If $\mu=\sigma^2/2$, then $\lim_{T\to \infty}L_T/T=\lim_{T \to \infty} U_T/T=\sigma^2/(2\log(u/l))$ a.s.
 \end{lemma}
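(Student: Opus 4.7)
The plan is to apply Lemma~\ref{lem:lindif} with $b(y)=\mu-\sigma^2/2$ and $a(y)=\sigma^2$, on the interval between $0$ and $\log(u/l)$. Writing the interval as $[l_0,u_0]$ with $l_0=\min(0,\log(u/l))$ and $u_0=\max(0,\log(u/l))$ (to cover both the no-leverage case $\mu/\gamma\sigma^2<1$, where $\log(u/l)>0$, and the leverage case $\mu/\gamma\sigma^2>1$, where $\log(u/l)<0$), the formulas in Lemma~\ref{lem:lindif} reduce the problem to the computation of the invariant density $\nu$ of this reflected Brownian motion with drift, and its evaluation at the two boundaries.

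The invariant density $\nu$ of a one-dimensional diffusion on a bounded interval with reflection at both boundaries satisfies the stationary Fokker-Planck equation with zero probability flux, namely
\begin{equation*}
\tfrac{1}{2}(a\nu)'(y)-b(y)\nu(y)=0.
\end{equation*}
Since $a$ and $b$ are constant here, this is a linear first-order ODE with solution $\nu(y)=C\exp((2\mu/\sigma^2-1)y)$, where $C$ is a normalization constant determined by $\int_{l_0}^{u_0}\nu(y)dy=1$. Setting $k=2\mu/\sigma^2-1$, the case $\mu\ne\sigma^2/2$ (so $k\ne 0$) gives, in the unlevered regime $\log(u/l)>0$,
\begin{equation*}
\nu(0)=\frac{k}{(u/l)^{k}-1},\qquad \nu(\log(u/l))=\frac{k\,(u/l)^{k}}{(u/l)^{k}-1}=\frac{k}{1-(u/l)^{-k}}.
\end{equation*}
Plugging these into Lemma~\ref{lem:lindif} yields
\begin{equation*}
\lim_{T\to\infty}\frac{L_T}{T}=\frac{\sigma^2\nu(0)}{2}=\frac{\sigma^2}{2}\cdot\frac{\frac{2\mu}{\sigma^2}-1}{(u/l)^{\frac{2\mu}{\sigma^2}-1}-1},
\end{equation*}
and an analogous rewriting of $\nu(\log(u/l))$, multiplying numerator and denominator by $-1$, produces the stated formula for $\lim_{T\to\infty}U_T/T$. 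In the levered regime $\log(u/l)<0$ one carries out the same computation with the roles of the two endpoints swapped (and $L$ now being the local time at the upper endpoint $0$, $U$ at the lower endpoint $\log(u/l)$); the resulting formulas coincide with the stated ones, since the substitution $k\mapsto -k$ together with swapping numerator exponents leaves them unchanged.

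In the degenerate case $\mu=\sigma^2/2$, i.e.\ $k=0$, the ODE gives a constant invariant density $\nu\equiv 1/|\log(u/l)|$. Lemma~\ref{lem:lindif} then immediately yields
\begin{equation*}
\lim_{T\to\infty}\frac{L_T}{T}=\lim_{T\to\infty}\frac{U_T}{T}=\frac{\sigma^2}{2|\log(u/l)|},
\end{equation*}
as claimed (the absolute value disappears under the convention that $\log(u/l)>0$ in the unlevered case). The main (minor) obstacle is bookkeeping between the two sign regimes of $\log(u/l)$ and checking that the formulas, which are symmetric under $k\leftrightarrow -k$ after swapping the roles of the endpoints, agree with the asserted expressions; once the invariant density is in hand, Lemma~\ref{lem:lindif} does all the work.
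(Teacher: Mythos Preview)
Your proof is correct and follows essentially the same approach as the paper's: compute the invariant density of the reflected Brownian motion with drift, evaluate it at the two endpoints, and invoke Lemma~\ref{lem:lindif}. The only cosmetic difference is that the paper obtains $\nu$ via the scale function and speed measure (the invariant law being the normalized speed measure), whereas you solve the stationary Fokker--Planck equation with zero flux; since $a$ and $b$ are constant these two routes are identical and yield the same $\nu(y)=C e^{(2\mu/\sigma^2-1)y}$.
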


\begin{proof}[Proof]
First let $\mu \neq \sigma^2/2$. Moreover, suppose that $\mu/\gamma\sigma^2 \in (0,1)$. Then the scale function and the speed measure of the diffusion $\Upsilon_t$ are
\begin{align*}
s(y) =& 
\int_0^y \exp\Big(-2\int_0^\xi \frac{\mu-\frac{\sigma^2}{2}}{\sigma^2}d\zeta\Big)d\xi=\frac{1}{1-\frac{2\mu}{\sigma^2}} e^{(1-\frac{2\mu}{\sigma^2})y}, \\
m(dy) =&
1_{[0,\log(u/l)]}(y)\frac{2 dy}{s'(y)\sigma^2}=1_{[0,\log(u/l)]}(y) \frac{2}{\sigma^2}e^{(\frac{2\mu}{\sigma^2}-1)y}dy.
\end{align*}
The invariant distribution of $\Upsilon_t$ is the normalized speed measure
$$
\nu(dy)=\frac{m(dy)}{m([0,\log(u/l)])}=1_{[0,\log(u/l)]}(y)\frac{\frac{2\mu}{\sigma^2}-1}{(u/l)^{\frac{2\mu}{\sigma^2}-1}-1}e^{(\frac{2\mu}{\sigma^2}-1)y}dy.
$$
For $\mu/\gamma\sigma^2>1$, the endpoints $0$ and $\log(u/l)$ exchange their roles, and the result is the same, up to replacing $[0,\log(u/l)]$ with $[\log(u/l),0]$ and multiplying the formula by $-1$. Then, the claim follows from Lemma \ref{lem:lindif}.
In the case $\mu=\sigma^2/2$ of driftless Brownian motion, $\Upsilon_t$ has uniform stationary distribution on $[0,\log(u/l)]$ (resp.\ on $[\log(u/l),0]$ if $\mu/\gamma\sigma^2>1$), and the claim again follows by Lemma \ref{lem:lindif}.
\end{proof}

Lemma~\ref{lem:locallimits} and the formula for $\varphi_t$ from Lemma~\ref{lem:strategy} yield the long-term average trading volumes.
The asymptotic expansions then follow from the power series for~$\lambda$
(cf.\ Lemma~\ref{lem:lambda}).

\begin{corollary}\label{cor:trading}
If $\mu/\gamma\sigma^2 \neq 1$, the long-term average share turnover is 
\begin{align*}
\lim_{T\rightarrow\infty}\frac1T\int_0^T \frac{d\|\varphi\|_t}{|\varphi_t|}= \left(1-\frac{\mu-\lambda}{\gamma\sigma^2}\right)\lim_{T \to \infty} \frac{L_T}{T}+\left(1-\frac{\mu+\lambda}{\gamma\sigma^2}\right) \lim_{T \to \infty} \frac{U_T}{T},
\end{align*}
\nada{
and, as $\ve \downarrow 0$, has the asymptotic expansion
\[
 \lim_{T\rightarrow\infty}\frac1T\int_0^T \frac{d\|\varphi\|_t}{|\varphi_t|}= \frac{\sigma^2}{2}\frac{\mu}{\gamma\sigma^2}\left(1-\frac{\mu}{\gamma\sigma^2}\right)^2\left(\frac{3}{4\gamma} \left(\frac{\mu}{\gamma\sigma^2}\right)^2\left(1-\frac{\mu}{\gamma\sigma^2}\right)^2\right)^{-1/3}\ve^{-1/3}
  + O(\ve^{1/3}).
\]
}
and the long-term average wealth turnover is
\begin{align*}
&\lim_{T\to \infty} \frac{1}{T}\left(\int_0^T \frac{(1-\ve)S_td\varphi^{\downarrow}_t}{\varphi^0_tS^0_t+\varphi_t (1-\ve)S_t}+\int_0^T \frac{S_td\varphi^{\uparrow}_t}{\varphi^0_tS^0_t+\varphi_t S_t}\right)\\
&\qquad =\frac{\mu-\lambda}{\gamma\sigma^2}\left(1-\frac{\mu-\lambda}{\gamma\sigma^2}ö\right)\lim_{T \to \infty} \frac{L_T}{T}+\frac{\mu+\lambda}{\gamma\sigma^2}\left(1-\frac{\mu+\lambda}{\gamma\sigma^2}\right)\lim_{T \to \infty} \frac{U_T}{T},
\end{align*}
\nada{
and, as $\ve \downarrow 0$, has the asymptotic expansion
\begin{align*}
&\lim_{T\to \infty} \frac{1}{T}\left(\int_0^T \frac{(1-\ve)S_td\varphi^{\downarrow}_t}{\varphi^0_tS^0_t+\varphi_t (1-\ve)S_t}+\int_0^T \frac{S_t d\varphi^{\uparrow}_t}{\varphi^0_tS^0_t+\varphi_t S_t}\right)\\
&\qquad=\frac{\gamma\sigma^2}{3} \left(\frac{3}{4\gamma} \left(\frac{\mu}{\gamma\sigma^2}\right)^2\left(1-\frac{\mu}{\gamma\sigma^2}\right)^2\right)^{2/3} \ve^{-1/3}+O(\ve^{1/3}).
\end{align*}
Algorithmic calculations deliver terms of higher order.
}
If $\mu/\gamma\sigma^2=1$, the long-term average share and wealth turnover both vanish. 
\end{corollary}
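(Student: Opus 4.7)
The plan is to translate the closed-form dynamics of $\varphi_t$ from Lemma~\ref{lem:strategy} into a Jordan decomposition $d\varphi_t=d\varphi^\uparrow_t-d\varphi^\downarrow_t$, express the turnover integrands in terms of the reflection local times $L_t,U_t$, and then pass to the long-run average via Lemma~\ref{lem:locallimits}. The constants $1-\pi_\pm$ are independent of $t$, so once the decomposition is identified, everything reduces to invoking the previous lemma for $\lim L_T/T$ and $\lim U_T/T$.

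For share turnover, I would start from the identity
\[
\frac{d\varphi_t}{\varphi_t}=(1-\pi_-)\,dL_t-(1-\pi_+)\,dU_t,
\qquad \pi_\pm=\frac{\mu\pm\lambda}{\gamma\sigma^2}.
\]
Since $dL_t$ is supported on $\{\Upsilon_t=0\}$ and $dU_t$ on $\{\Upsilon_t=\log(u/l)\}$, the two measures are mutually singular, so one can read off $d\varphi^\uparrow_t/\varphi_t=(1-\pi_-)\,dL_t$ and $d\varphi^\downarrow_t/\varphi_t=(1-\pi_+)\,dU_t$ \emph{provided} each product is nonnegative. In the unlevered regime $\mu/\gamma\sigma^2<1$, both $1-\pi_\pm>0$ and $L,U$ are nondecreasing, so the signs are clear; in the levered regime $\mu/\gamma\sigma^2>1$, both $1-\pi_\pm<0$ and $L,U$ are nonincreasing, so the products are again nonnegative. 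Since $\varphi_t>0$ (so $|\varphi_t|=\varphi_t$), summing and applying Lemma~\ref{lem:locallimits} to $\lim L_T/T$ and $\lim U_T/T$ yields the stated formula for $\sht$.

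For wealth turnover, I would exploit that $d\varphi^\uparrow$ charges only $\{\Upsilon_t=0\}$, where the ask-price risky weight is exactly $\pi_-$, so $\varphi_t^0 S_t^0+\varphi_t S_t=\varphi_t S_t/\pi_-$ on that set; analogously on $\{\Upsilon_t=\log(u/l)\}$ the bid-price risky weight is $\pi_+$, giving $\varphi_t^0 S_t^0+\varphi_t(1-\ve)S_t=(1-\ve)\varphi_t S_t/\pi_+$. Substituting these identities collapses the two wealth-turnover integrands to $\pi_-\,d\varphi^\uparrow_t/\varphi_t$ and $\pi_+\,d\varphi^\downarrow_t/\varphi_t$ respectively, so the share-turnover computation immediately delivers $\wet=\pi_-(1-\pi_-)\lim L_T/T+\pi_+(1-\pi_+)\lim U_T/T$.

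The degenerate case $\mu/\gamma\sigma^2=1$ is immediate from Theorem~\ref{th:opt}: the optimal policy moves all wealth to the risky asset at $t=0$ and never transacts afterwards, so both turnover processes vanish for $t>0$ and their long-run averages are zero. The one step I would spend the most care on is the sign bookkeeping in the levered regime, where both $L_t$ and $U_t$ are nonincreasing yet the Jordan decomposition must still assign $dL$ to buying and $dU$ to selling; once that is verified, the rest is mechanical substitution and the two preceding lemmas.
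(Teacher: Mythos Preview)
Your proposal is correct and follows exactly the approach the paper intends: the paper's own proof is the single sentence ``Lemma~\ref{lem:locallimits} and the formula for $\varphi_t$ from Lemma~\ref{lem:strategy} yield the long-term average trading volumes,'' and you have simply spelled out the implicit steps---the Jordan decomposition via the mutual singularity of $dL_t$ and $dU_t$, the boundary identities $\varphi_t S_t/(\varphi^0_t S^0_t+\varphi_t S_t)=\pi_-$ on $\{\Upsilon_t=0\}$ and its bid-price analogue on $\{\Upsilon_t=\log(u/l)\}$, and the sign bookkeeping in the levered regime (where $1-\pi_\pm<0$ and $L,U$ are nonincreasing, so the products remain nonnegative and the roles of buying and selling are preserved). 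Nothing is missing or different in method.
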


\bibliographystyle{agsm}
\bibliography{tractrans}
\newpage

\end{document}